\newcommand{\bY}{\bm{Y}}
\newcommand{\tY}{\tilde{Y}}
\newcommand{\by}{\bm{y}}
\newcommand{\ty}{\tilde{y}}
\newcommand{\data}{\text{data}}
\newcommand{\E}{\text{E}}
\newcommand{\textS}{\text{S}}
\newcommand{\textD}{\text{D}}
\newcommand{\bp}{\bm{p}}
\newcommand{\bxi}{\bm{\xi}}
\newcommand{\bv}{\bm{v}}
\newcommand{\bz}{\bm{z}}
\newcommand{\Dir}{\text{Dir}}
\newcommand{\Beta}{\text{Beta}}
\newcommand{\TBeta}{\text{TBeta}}
\newcommand{\Weibull}{\text{Weibull}}
\newcommand{\Unif}{\text{Unif}}
\newcommand{\obs}{\text{obs}}
\newcommand{\mis}{\text{mis}}
\newcommand{\HH}{\mathcal{H}}
\newcommand{\A}{\mathcal{A}}
\newcommand{\C}{\mathcal{C}}
\newcommand{\Z}{\mathcal{Z}}
\newcommand{\dd}{\text{d}}
\newcommand{\bb}{\bm{b}}
\newcommand{\tL}{\text{L}}
\newcommand{\tR}{\text{R}}
\newcommand{\bone}{\mathbbm{1}}
\DeclareMathOperator*{\argmax}{arg\,max}
\DeclareMathOperator*{\argmin}{arg\,min}
\definecolor{brown}{rgb}{0.8, 0.33, 0.1}
\numberwithin{equation}{section}
\numberwithin{table}{section}
\numberwithin{figure}{section}
\theoremstyle{plain}
\newtheorem{thm}{Theorem}[section]
\newtheorem{lemma}{Lemma}[section]
\newtheorem{corollary}{Corollary}[section]
\newtheorem{proposition}{Proposition}[section]
\theoremstyle{definition}
\newtheorem{definition}{Definition}[section]
\theoremstyle{remark}
\newtheorem{remark}{Remark}[section]
\newcommand{\addresseshere}{%
  \enddoc@text\let\enddoc@text\relax
}
\def\paragraph{\@startsection{paragraph}{4}%
  \z@{1ex \@plus1ex \@minus.2ex}{-\fontdimen2\font}%
  {\normalfont\itshape}}
\begin{document}

\def\spacingset#1{\renewcommand{\baselinestretch}%
{#1}\small\normalsize} \spacingset{1}


\title[Dose-Finding Designs with Late-Onset Toxicities]{Statistical Frameworks for Oncology Dose-Finding Designs with Late-Onset Toxicities: A Review}

\thanks{\mbox{} \\ Submitted to \emph{Statistical Science} in December 2019}

\author[T. Zhou]{Tianjian Zhou$^1$}
\address{Department of Statistics, Colorado State University \\}
\author[Y. Ji]{Yuan Ji$^2$}
\address{Department of Public Health Sciences, University of Chicago}
\email{$^1$tianjian.zhou@colostate.edu, $^2$yji@health.bsd.uchicago.edu}

\keywords{Clinical trial design; Late-onset toxicity; Maximum tolerated dose; Missing data; Survival analysis; Time-to-event modeling}

\begin{abstract}
In oncology dose-finding trials, due to staggered enrollment, it might be desirable to make dose-assignment decisions in real-time in the presence of pending toxicity outcomes, for example, when the dose-limiting toxicity is late-onset.
Patients' time-to-event information may be utilized to facilitate such decisions.
We review statistical frameworks for time-to-event modeling in dose-finding trials and summarize existing designs into two classes: TITE designs and POD designs. TITE designs are based on inference on toxicity probabilities, while POD designs are based on inference on dose-finding decisions.  These two classes of designs contain existing individual designs as special cases and also give rise to new designs.
We discuss and study the theoretical properties of these designs, including large-sample convergence properties, coherence principles, and the underlying decision rules.
To facilitate the use of these designs in practice, we introduce efficient computational algorithms and review common practical considerations, such as safety rules and suspension rules.
Finally, the operating characteristics of several designs are evaluated and compared through computer simulations.
\end{abstract}


\spacingset{1.45}

\maketitle

\section{Introduction}
\label{sec:intro}

Phase I dose-finding studies are usually first-in-human trials for a new drug to go through before it can be approved for general use in the public.  Oncology  dose-finding trials aim to identify an efficacious dose with a tolerable level of toxicity,  known as the maximum tolerated dose (MTD),  which may be selected as the chosen dose for treating thousands of patients in subsequent clinical trials of later phases. The failure of determining an appropriate dose could lead to failure of later-phase trials and the final drug approval.
Therefore, to increase the chance of identifying a safe and efficacious dose, dose-finding trials rely on efficient statistical designs and motivate several interesting statistical problems and challenges.
First,  oncology  dose-finding trials are usually conducted with relatively small sample sizes (10 to 50 patients), which may not support statistical models of great complexity. Therefore, mainstream dose-finding designs are typically built upon relatively simple statistical models.
Second, in dose-finding trials, safety concerns and ethical constraints often outplay statistical optimality in decision-making.
Third, to reduce the enormous cost of new drug development, there has always been a need to shorten trial duration with advanced statistical tools. 
Dose-finding is a broad topic, and in this paper, we focus on the third problem of shortening trial duration by utilizing time-to-event information.
It is estimated that a phase I clinical trial typically costs several thousand dollars a day to maintain its regular operation (see, e.g., \citealp{sertkaya2016key}). Therefore, financial savings on trial duration is very desirable. 
In addition, due to the competition in drug development, the race to being the first-in-class drug is also important, and shorter trials give a higher chance of being the first to go to market.
For all these reasons and others that will be more clear next, novel statistical designs have been developed to speed up dose-finding trials.


Formally, the MTD is defined as the highest dose with toxicity probability close to or lower than a pre-specified target rate $p^*$. 
The type of toxicity is usually severe, like organ failure, and is called dose-limiting toxicity (DLT).
The premise behind dose-finding trials is that both the toxicity and efficacy of a treatment monotonically increase with the dose level. 
A dose level that is too low can not provide needed efficacy, e.g. anti-tumor activity, while a dose level that is too high might induce severe toxicity.
Therefore, it is crucial to find an appropriate dose that has the highest possible efficacy while maintains tolerable toxicity.
Usually, a grid of discrete dose levels are investigated, and cohorts of patients are sequentially enrolled and adaptively treated at the ascending dose levels based on the previously observed data. 
The trial objectives include the identification of the MTD and the estimation of the dose-toxicity relationship, as well as maximizing the chance of treating patients at safe and efficacious doses.

The evaluation of DLT is conducted by following patients post-treatment within a time window, during which DLT outcomes and times to DLTs are recorded. 
If a patient does not experience any DLT during the follow-up window, the patient is declared having no DLT.
Most existing designs require the DLT evaluation of all the previously enrolled patients to be completed before they can make a treatment assignment for the next cohort of patients.
Consequently, we refer to this type of designs as \emph{complete-data designs}.
Examples of complete-data designs include the 3+3 design \citep{storer1989design}, continual reassessment method (CRM, \citealp{o1990continual, goodman1995some, shen1996consistency, o1996continual}), 
escalation with overdose control (EWOC, \citealp{babb1998cancer}),
product-of-beta prior method (PBP, \citealp{gasparini2000curve}),
cumulative cohort design (CCD, \citealp{ivanova2007cumulative}),
Bayesian logistic regression model (BLRM, \citealp{neuenschwander2008critical}),
modified toxicity probability interval design (mTPI, \citealp{ji2010modified}; \citealp{ji2013modified}), 
product of independent beta probabilities escalation design (PIPE, \citealp{mander2015product}),
Bayesian optimal interval design (BOIN, \citealp{liu2015bayesian};
 \citealp{yuan2016bayesian}), mTPI-2 design (\citealp{guo2017bayesian}), keyboard design (\citealp{yan2017keyboard}), semiparametric dose finding method (SPM, \citealp{clertant2017semiparametric}; \citealp{clertant2018semiparametric}),
 Bayesian uncertainty-directed design (BUD, \citealp{domenicano2019bayesian}),
surface-free design (SFD, \citealp{mozgunov2020surface}),
and i3+3 design (\citealp{liu2019design}), among many others. 
The majority of these designs are for \emph{single-agent} dose-finding trials (i.e., a single investigational drug is studied), which are the focus of this article. 
For therapies of which the toxicity is acute and can be ascertained relatively quickly, such as cytotoxic chemotherapies,
waiting for the DLT evaluation of previous patients may not be a concern, as the DLT assessment window can be short.
However, for therapies that usually have late-onset toxicity, such as immunotherapies \citep{weber2015toxicities, kanjanapan2019delayed},  it is more sensible to use a relatively long assessment window. 
This may cause difficulty for complete-data designs to operate, since patient enrollment needs to be frequently suspended until the previous patients have finished their assessment.
The same difficulty arises when patient accrual is relatively fast compared to the length of the assessment window.
For example, in Figure \ref{fig:dose_finding}(a), while waiting for the DLT outcomes of the first 3 patients, new patient enrollment needs to be suspended, and 3 eligible patients have to be turned away.
Trial suspension is undesirable in practice for two reasons.
First, trial duration is prolonged, which delays scientific research and drug development. Second, subsequent patients that are available for enrollment need to be turned away, which results in a delay in their cancer care. Many patients participating in the trial do not have alternative choices for treatment, and the trial may be their last treatment option.
Their diseases may also be in rapid deterioration, thus they are in need of immediate treatment.

\begin{figure}[h!]
\center
\begin{tabular}{ccc}
\includegraphics[width = 0.45\textwidth]{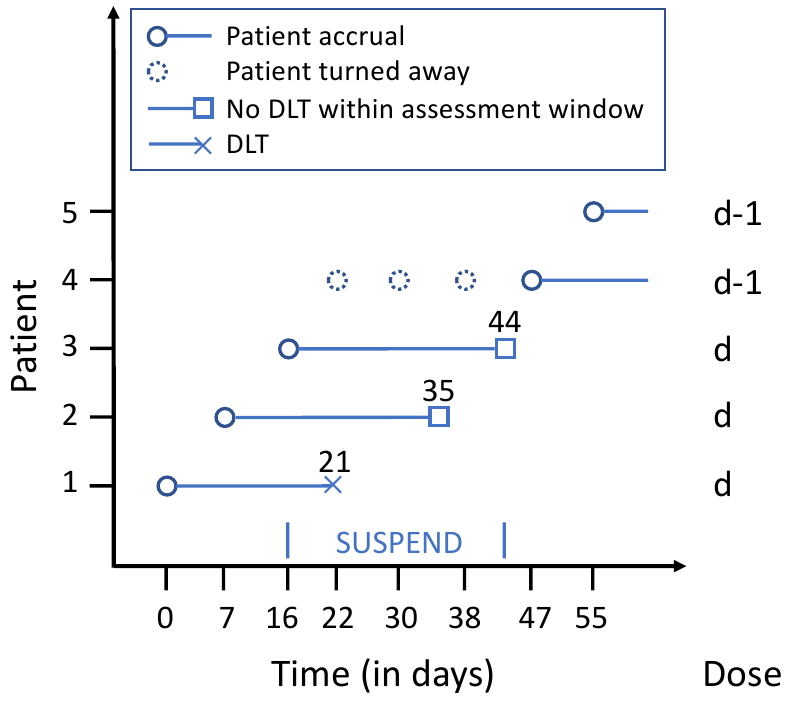} 
& \hspace{2mm} &
\includegraphics[width = 0.45\textwidth]{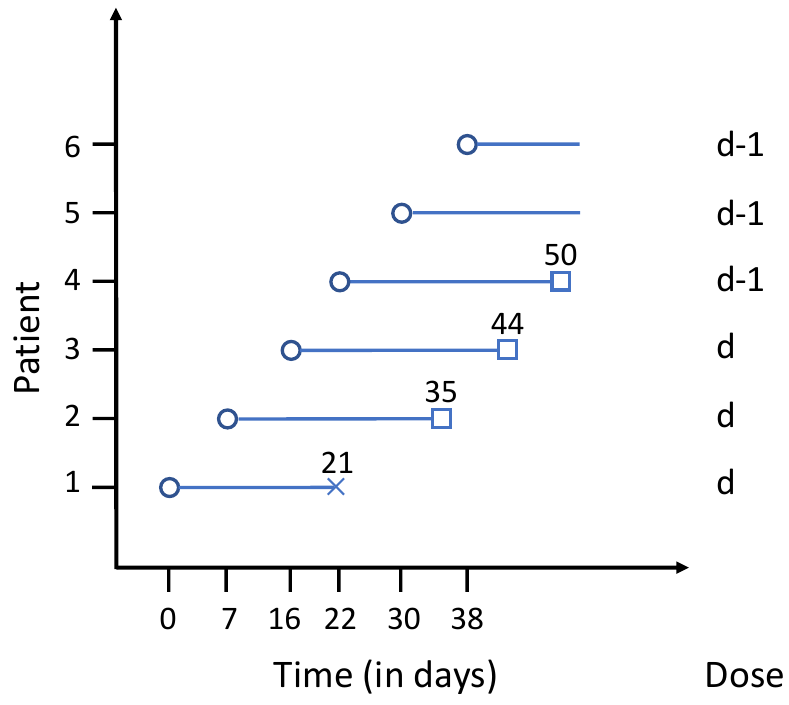} \\
(a) Complete-data designs & & (b) Time-to-event designs
\end{tabular}
\caption{Illustration of complete-data designs and time-to-event designs. Suppose the target DLT rate is 0.17, the length of the DLT assessment window is 28 days, and patients are enrolled in cohorts of 3. Using complete-data designs (a), the trial needs to be suspended while waiting for the DLT outcomes of the first 3 patients. Using time-to-event designs (b), the suspension may be avoided.}
\label{fig:dose_finding}
\end{figure}


To address these practical concerns, several designs have been proposed to allow consecutive patient accrual even if some enrolled patients are still pending for DLT assessment. 
These include the time-to-event CRM (TITE-CRM, \citealp{cheung2000sequential, normolle2006designing}), rolling six design (R6, \citealp{skolnik2008shortening}), 
expectation-maximization CRM (EM-CRM, \citealp{yuan2011robust}), data augmentation CRM (DA-CRM, \citealp{liu2013bayesian}),
time-to-event BOIN design (TITE-BOIN, \citealp{yuan2018time}), 
time-to-event PIPE design (TITE-PIPE, \citealp{wheeler2019bayesian}),
rolling TPI design (R-TPI, \citealp{guo2019rtpi}), time-to-event keyboard design (TITE-keyboard, \citealp{lin2018time}), and probability-of-decision TPI design (POD-TPI, \citealp{zhou2019pod}).
Except for R6 and R-TPI,  these designs utilize time-to-event information to make treatment assignments thus are referred to as \emph{time-to-event designs}.
As an example, in Figure \ref{fig:dose_finding}(b), when the 4th patient is available for enrollment, patients 2 and 3 are still being followed without definitive outcomes. Based on the DLT outcome of patient 1, time-to-DLT information of patient 1 and follow-up time information of patients 2 and 3, a time-to-event design may enroll the patient and de-escalate the dose level, which avoids the trial suspension.

In this article, we summarize the vast literature of time-to-event dose-finding designs into two general classes. 
See Figure \ref{fig:illustration}.
The key component is the construction of the likelihood function with time-to-event data, and
the primary interest is inference on toxicity probabilities. Specifically, two equivalent modeling approaches can be taken for the likelihood construction.
The statistical frameworks give rise to two classes of time-to-event designs, which contain the existing designs as special cases and also lead to new designs.
The first class of time-to-event designs, called TITE designs, make dose-finding decisions based on inference on toxicity probabilities.
The second class of time-to-event designs, called POD (probability of decision) designs, is a new type of designs that directly make inference on dose-finding decisions when DLT outcomes may be pending.
The POD designs directly reflect the confidence of possible decisions and offer the investigators and regulators a way to properly assess and control the chance of making incompatible decisions when not all patients have been completely followed.

\begin{figure}[h!]
\center
\includegraphics[width = 0.9\textwidth]{./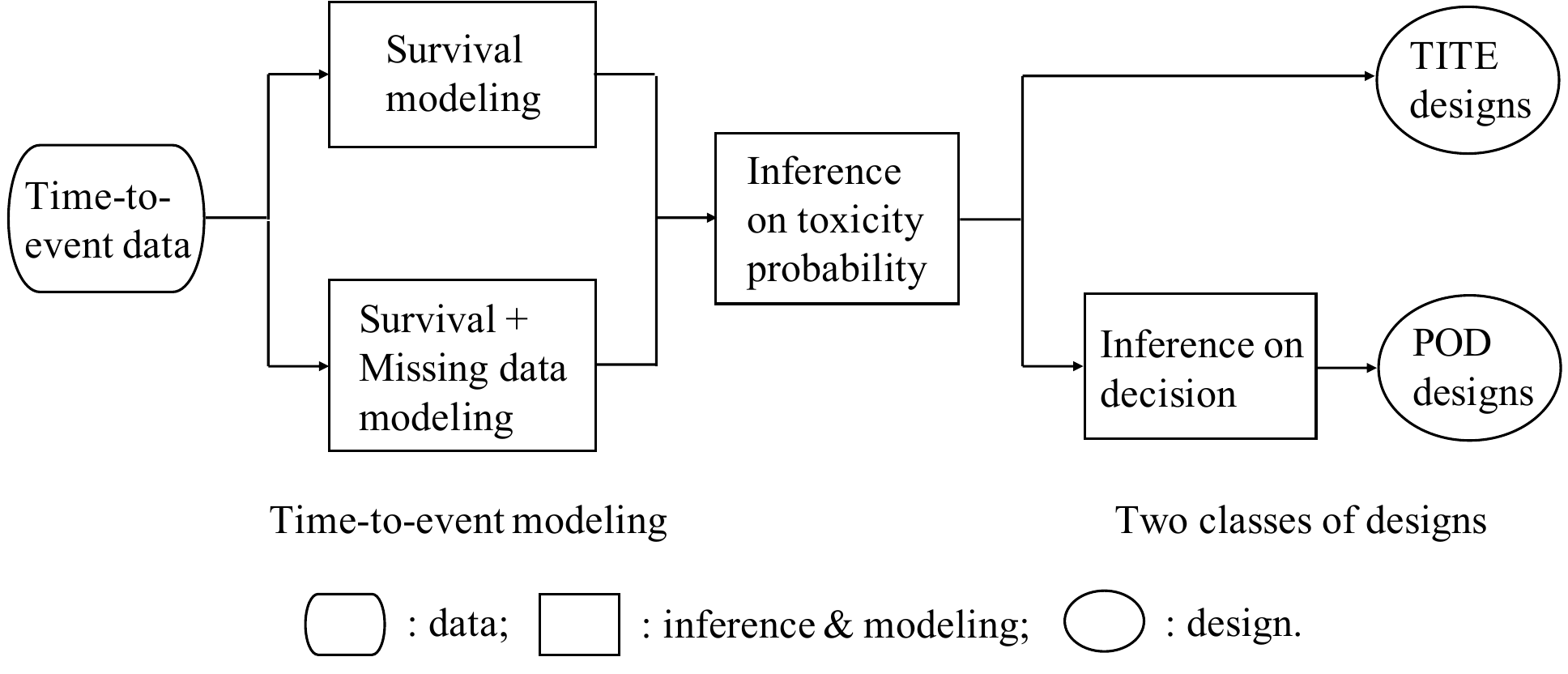}
\caption{Illustration of statistical frameworks for time-to-event modeling in dose-finding trials and two classes of time-to-event designs.}
\label{fig:illustration}
\end{figure}

Along with the statistical frameworks, we introduce several computational algorithms to facilitate the use of time-to-event designs in practice. 
We discuss and study theoretical properties of time-to-event designs, such as large-sample convergence properties, coherence principles and the underlying decision rules, with a focus on interval-based and curve-free designs.
We also review practical considerations of time-to-event designs, which are important in the execution of clinical trials.
Usually, ad-hoc rules need to be imposed to ensure that the designs satisfy safety concerns and ethical constraints.
Lastly, we examine finite-sample operating characteristics of some designs through computer simulations.

The remainder of the paper is structured as follows.
In Section \ref{sec:complete_data_design}, we give a brief review of complete-data designs. 
In Section \ref{sec:framework}, we review statistical frameworks for time-to-event modeling in dose-finding trials. 
In Section \ref{sec:two_class_designs}, we introduce the two  summarized classes of time-to-event designs: TITE and POD designs.
In Section \ref{sec:property}, we study theoretical properties of time-to-event designs.
In Section \ref{sec:practical}, we discuss practical considerations. 
We assess the operating characteristics of several existing and newly proposed dose-finding designs under the TITE and POD frameworks via simulation studies in Section \ref{sec:simulation}. Finally, we conclude with a discussion in Section \ref{sec:discussion}.
Technical details, including the computational algorithms and proof of the theoretical results, are provided in the appendix.
The R source code for implementing several time-to-event designs and replicating the simulation studies is also provided as a supplementary file.

\section{Review of Complete-Data Designs}
\label{sec:complete_data_design}

We start with a brief review of complete-data designs.
At a given moment in a dose-finding trial, suppose $N$ patients have been treated, and the $(N+1)$th patient is eligible for enrollment. 
Let $Z_i \in \{1, \ldots, J \}$ denote the dose assigned to patient $i$, where $J$ is the number of available doses in the trial.
Each patient is supposed to be followed for a fixed period of time $W$, and we use $Y_i = 1$ or $0$ to represent the binary outcome of whether or not patient $i$ experiences DLT within the time window, respectively. 
For example, in many oncology trials, $W = 28$ days.
Let $p_z$ denote the probability of DLT at dose $z \in \{1, \ldots, J \}$ within the assessment window. 
The conditional distribution of $Y_i$ given $Z_i$ and $p_z$ is commonly modeled with a Bernoulli distribution, 
\begin{align}
\Pr(Y_i = y \mid Z_i = z, p_z) = p_z^y (1 - p_z)^{1-y}, \quad 
y \in \{ 0, 1 \}.
\label{eq:bernoulli_dist}
\end{align}
A widely recognized assumption is that the DLT probability is monotone with the dose level, i.e. $p_1 \leq p_2 \leq \cdots \leq p_J$.

Suppose $Y_i$'s are fully observed for the first $N$ patients, and denote by $\HH_{N}^* = \{ (Y_i, Z_i) :  i \leq N \}$ the previous history of observations.
A complete-data design $\A^*$ can be viewed as a function of $\HH_{N}^*$, which prescribes a dose $\A^*(\HH_{N}^*)$ for the new patient through two steps: (1) making inference about $p_z$'s, and (2) translating such inference to a dose-finding decision. 
As we shall see later, the monotonicity assumption of the DLT probabilities can play a role in both of these two steps.
Inference on $p_z$'s can be based on the likelihood,
\begin{align}
L( \bp \mid \HH_N^* ) = \prod_{i = 1}^N p_{z_i}^{y_i} (1 - p_{z_i})^{1 - y_i},
\label{eq:likelihood_complete}
\end{align}
where $\by = (y_1, \ldots, y_N)$ and $\bz = (z_1, \ldots, z_N)$ are the observed outcomes and dose assignments for the $N$ patients, respectively, and $\bp = (p_1, \ldots, p_J)$ is the vector of toxicity probabilities.
As mentioned in Section \ref{sec:intro}, there is a rich literature on complete-data dose-finding designs. 
The existing literature can be roughly divided into model-based
and rule-based (i.e., model-free) designs \citep{zhou2020emerging}. Two classes of
model-based designs, one using parametric dose-toxicity response curves
and the other curve-free, accommodate the monotone dose-toxicity
relationship with different approaches. The curve-based designs
directly build the monotonicity assumption into the proposed curves,
e.g., using a positive coefficient, while curve-free designs 
implicitly accommodate the assumption by an up-and-down decision
framework. For example, when the observed toxicity data suggest that a
dose is below or above the MTD target, the decision is to escalate (go up) or de-escalate (go down) the dose level, respectively.
Below, we provide a brief review of six main-stream complete-data dose-finding designs: CRM, BOIN, mTPI-2, keyboard, SPM and i3+3.
The 3+3 design is excluded from the discussion, as it does not allow the specification of a particular DLT target $p^*$ and a maximum sample size.
It is widely recognized that 3+3 has worse performance than, e.g., mTPI \citep{ ji2013modified}.
We denote by $N_z = \sum_{i = 1}^N \bone(Z_i = z)$, $n_z = \sum_{i = 1}^N \bone(Z_i = z, Y_i = 1)$, and $m_z = \sum_{i = 1}^N \bone(Z_i = z, Y_i = 0)$ the total numbers of patients, DLTs, and non-DLTs at dose $z$, respectively.

\begin{description}[leftmargin = 0pt, topsep=2mm, listparindent=\parindent, itemsep=3mm]

\item[Continual Reassessment Method (CRM)]
The CRM design assumes a dose-toxicity response curve $p_z = \phi(z, \alpha)$, where $\alpha$ is an unknown parameter.
Based on the monotonicity assumption of the DLT probabilities, $\phi$ should monotonically increase with $z$. 
For example, a commonly used dose-toxicity curve is 
$\phi(z, \alpha) = p_{0z}^{\exp(\alpha)}$, where $p_{0z}$'s are pre-specified constants satisfying $p_{01} < \cdots < p_{0D}$. The likelihood becomes
\begin{align*}
L( \alpha \mid \HH_{N}^* ) = \prod_{i = 1}^N \phi(z_i, \alpha)^{y_i} [1 - \phi(z_i, \alpha)]^{1 - y_i}.
\end{align*}
Inference on $\alpha$ can be Bayesian \citep{o1990continual} or based on maximum likelihood \citep{o1996continual}. 
From a Bayesian perspective, a prior distribution $\pi_0(\alpha)$ is specified for $\alpha$ (for example, $\alpha \sim \text{N}(0, 1.34^2)$), leading to the posterior $\pi(\alpha \mid \by, \bz) \propto \pi_0(\alpha) L(\alpha \mid \by, \bz)$. 
The DLT probabilities can thus be estimated by $\hat{p}_z = \int \phi(z, \alpha) \pi(\alpha \mid \by, \bz) d \alpha$.
On the other hand, the maximum likelihood estimate (MLE) for $\alpha$ is $\hat{\alpha} = \argmax_{\alpha} L(\alpha \mid \by, \bz)$, and $p_z$ can be estimated by $\hat{p}_z = \phi(z, \hat{\alpha})$. In both cases, the dose $d^* = \argmin_{z} | \hat{p}_{z} - p^* |$ is recommended for the next patient, subject to some practical safety restrictions \citep{goodman1995some, cheung2005coherence}.

\item[Bayesian Optimal Interval (BOIN) Design]  

We refer to the local BOIN design \citep{liu2015bayesian}, which considers a statistical test of three hypotheses:
\begin{align}
H_0: p_d = p^*, \quad 
H_1: p_d = p^{\tL}, \quad
H_2: p_d = p^{\tR}.
\label{eq:boin_hypothesis}
\end{align}
Here $d$ is the current dose level, $p^{\tL}$ denotes the highest toxicity probability that is deemed subtherapeutic such that dose escalation should be made, and $p^{\tR}$ denotes the lowest toxicity probability that is deemed overly toxic such that dose de-escalation is required. The quantities $p^{\tL}$ and $p^{\tR}$ need to be pre-specified by physicians. 
Assuming equal prior weights on the three hypotheses,
the optimal decision boundaries $\lambda^{\tL}(p^*, p^{\tL})$ and 
$\lambda^{\tR}(p^*, p^{\tR})$ minimizing the decision error rate are,
\begin{align*}
\begin{split}
\lambda^{\tL} &= \left. \log \left( \frac{1 - p^{\tL}}{1 - p^*} \right) 
\middle/
\log \left[ \frac{ p^* (1 - p^{\tL}) }{ p^{\tL} (1 - p^*) } \right] \right., \\
\lambda^{\tR} &= \left. \log \left( \frac{1 - p^*}{1 - p^{\tR}} \right) 
\middle/
\log \left[ \frac{ p^{\tR} (1 - p^*) }{ p^* (1 - p^{\tR}) } \right] \right..
\end{split}
\end{align*}
Let $\hat{p}_d = n_d / N_d$ denote the MLE for $p_d$.
If $\hat{p}_d \leq \lambda^{\tL}$, the dose is escalated for the next patient; if $\hat{p}_d \geq \lambda^{\tR}$, the dose is de-escalated; otherwise, the same dose level is retained.
Note that $\lambda^{\tL}$ and $\lambda^{\tR}$ can be pre-specified without the optimization procedure in \cite{liu2015bayesian}. When they are pre-specified, BOIN uses essentially the same up-and-down rules as the cumulative cohort design \citep{ivanova2007cumulative}.
Although the monotonicity assumption of the DLT probabilities is not imposed on the inference of $p_d$, it plays a role in the dose-finding decision. For example, if $p_d$ is deemed lower than $p^*$, then a dose-escalation decision is made, because we believe the next higher dose level will have higher DLT and efficacy probabilities and may be closer to the MTD.

\item[mTPI-2 Design]

The mTPI-2 design considers a partition of the $[0, 1]$ interval into an equivalence interval $I_{\textS} = [p^* - \epsilon_1, p^* + \epsilon_2]$, an underdosing interval $I_{\E} = [0, p^* - \epsilon_1)$, and an overdosing interval $I_{\textD} = (p^* + \epsilon_2, 1]$. Here, the subscripts ``S'', ``E'', and ``D'' stand for the corresponding dose-finding decisions: \textbf{S}tay, \textbf{E}scalation, and \textbf{D}e-escalation.
Any dose with toxicity probability inside $I_{\textS}$ is considered a true MTD and  corresponds to a decision of ``S''.
The doses in $I_{\E}$ (or $I_{\textD})$ are considered subtherapeutic (or overly toxic) and lower (or higher) than the MTD thus correspond to a decision of ``E'' (or ``D'').
The values $\epsilon_1$ and $\epsilon_2$ need to be specified by physicians.
The intervals $I_{\E}$ and $I_{\textD}$ are further divided into several sub-intervals, $I_{\E_0}, \ldots, I_{\E_{K_1}}$ and $I_{\textD_0}, \ldots, I_{\textD_{K_2}}$, such that all the sub-intervals have  the same length $(\epsilon_1 + \epsilon_2)$ except for the two intervals ($I_{\E_0}$ and $I_{\textD_0}$) reaching the boundary of $[0, 1]$.
Let $d$ denote the current dose level, and let model $\{ \mathcal{M}_d = k \}$ represent $\{ p_d \in I_k \}$, $k = \E_0, \ldots, \E_{K_1}, \textS, \textD_0, \ldots, \textD_{K_2}$. 
The mTPI-2 design is based on the following hierarchical prior model for $\mathcal{M}_d$ and $p_d$,
\begin{align}
\begin{split}
&\Pr(\mathcal{M}_d = k) = 1/(K_1 + K_2 + 3), \; \text{for $k = \E_0, \ldots, \E_{K_1}, \textS, \textD_0, \ldots, \textD_{K_2}$}; \\
&p_d \mid \mathcal{M}_d \sim \TBeta(1, 1; I_{\mathcal{M}_d}),
\end{split}
\label{eq:mtpi2}
\end{align}
where $\TBeta(\cdot, \cdot; I)$ represents a truncated beta distribution restricted to interval $I$.
The dose-assignment decision for the next patient is ``E'', ``S'', or ``D'', if $\arg \max_{k}$ $\Pr(\mathcal{M}_d = k  \mid n_d, m_d)$ belongs to $\{ \E_0, \ldots, \E_{K_1} \}$, equals S, or belongs to $\{ \textD_0, \ldots, \textD_{K_2} \}$, respectively. This is shown to be the Bayes' rule under a 0-1 loss \cite{guo2017bayesian}.

\item[Keyboard Design]

The keyboard design, similar to the mTPI-2 design, considers a partition of the $[0, 1]$ interval into sub-intervals of equal length (except for the two boundary intervals $I_{\E_0}$ and $I_{\textD_0}$).
The sub-intervals $I_{\E_1},$ $\ldots,$ $I_{\E_{K_1}},$ $I_{\textS},$ $I_{\textD_1},$ $\ldots,$ $I_{\textD_{K_2}}$ are referred to as keys, and the equivalence interval $I_{\textS}$ is referred to as the target key.
The two boundary intervals $I_{\E_0}$ and $I_{\textD_0}$ may not be long enough to form a key.
Instead of using a hierarchical prior for $p_d$, keyboard considers a simple prior $p_d \sim \Beta(1, 1)$, leading to the posterior 
\begin{align*}
p_d \mid n_d, m_d \sim \Beta(n_d + 1, m_d + 1).
\end{align*}
The dose-assignment decision for the next patient is ``E'', ``S'', or ``D'', if $\arg \max_{k}$ $\Pr(p_d \in I_k  \mid n_d, m_d)$ belongs to $\{ \E_1, \ldots, \E_{K_1} \}$, equals S, or belongs to $\{ \textD_1,$ $\ldots,$ $\textD_{K_2} \}$, respectively. Under the default prior settings as described above, the mTPI-2 and keyboard designs yield identical inference, although the priors may be changed to produce different inferences.


\item[Semiparametric Dose Finding Method (SPM)]

The SPM directly models the the location of the MTD $\gamma$, $1 \leq \gamma \leq J$. Conditional on $\gamma$ being the MTD, the support of $p_z$ is restricted to 
\begin{align*}
\text{supp}(p_z) = 
\begin{cases} 
I_{\E} = [0, p^* - \epsilon_1), & \text{if }  z < \gamma; \\
I_{\textS} = [p^* - \epsilon_1, p^* + \epsilon_2],  & \text{if } z = \gamma; \\
I_{\textD} = (p^* + \epsilon_2, 1],  & \text{if } z > \gamma.
\end{cases}
\end{align*}
This restriction guarantees the partial ordering of the $p_z$'s. 
The partition of the $[0, 1]$ interval in the SPM coincides with the mTPI and mTPI-2 designs, while the center interval is interpreted differently as an indifference interval \citep{cheung2002simple}.
The priors on $\gamma$ and $p_z$'s can be specified as follows,
\begin{align}
\begin{split}
&\Pr(\gamma = z^*) = \kappa_{z^*}, \\
&p_z \mid \gamma \sim \TBeta(c \theta_z^{\gamma} + 1, c(1 - \theta_z^{\gamma}) + 1; I_{z}^{\gamma}).
\end{split}
\label{eq:spm}
\end{align}
Here $\kappa_{z^*}$, $c$ and $\theta_{z}^{\gamma}$ are hyperparameters, and $I_z^{\gamma} = I_{\E}, I_{\textS}$ or $I_{\textD}$ for $z < \gamma$, $z = \gamma$ or $z > \gamma$, respectively.
The hyperparameter $\theta_{z}^{\gamma}$ is the prior mode of $p_z$ if $\gamma$ is the assumed MTD, and is specified in a similar fashion as CRM.
The posterior $\pi(\gamma, \bp \mid \by, \bz) \propto \pi_0(\gamma) \pi_0(\bp \mid \gamma) L( \bp \mid \by, \bz )$, and the dose $\hat{\gamma} = \argmax_{\gamma} \pi(\gamma \mid \by, \bz)$ is recommended for the next patient, again subject to some restrictions seen in the CRM design.

\item[i3+3 Design]

The i3+3 design consists of a set of algorithmic decision rules, that is, model free. Similar to mTPI-2, it considers a partition of the $[0, 1]$ interval into $I_{\E}$, $I_{\textS}$ and $I_{\textD}$. Suppose the current dose is $d$.
If $n_d / N_d \in I_{\E}$, the decision is escalation.
If $n_d / N_d \in I_{\textS}$, the decision is stay.
If $n_d / N_d \in I_{\textD}$, the decision is stay when $(n_d - 1) / N_d \in I_{\E}$ and is de-escalation otherwise.

\end{description}

For the following discussion, it is helpful to elaborate on the categorization of complete-data designs.
First, we categorize the designs 
according to how they make inference about $\bp$.
A \emph{curve-based} design (e.g., CRM and BLRM) models the toxicity probabilities with a parametric curve $p_z = \phi(z, \bm \alpha)$, which is monotonically increasing in $z$. A \emph{curve-free} design (e.g., BOIN, mTPI-2 and keyboard) does not use a parametric dose-toxicity curve to estimate $\bp$ but instead estimate each $p_z$ separately.
A \emph{semiparametric} design (e.g., SPM) does not use a parametric curve to model $\bp$ but imposes some constraint on $\bp$ to ensure its (partial) ordering.
Second, we can also categorize the designs based on how they translate inference on $\bp$ to a dose-assignment decision.
Generally, a design starts at a low dose.
At each subsequent step, a \emph{point-based} design (e.g., CRM) allocates the next cohort to $d^* = \argmin_{z} | \hat{p}_{z} - p^* |$, where $\hat{p}_{z}$ is a point estimate of $p_z$ (e.g., MLE or posterior mean) based on \eqref{eq:likelihood_complete}.
On the other hand, suppose the currently-administered dose is $d$, and $\epsilon_1 > 0$ and $\epsilon_2 > 0$ are pre-determined constants. 
\emph{Interval-based} designs make dose-finding decisions based on the interval $I_{\textS} = [p^* - \epsilon_1, p^* + \epsilon_2]$.
As discussed above, one class of interval-based designs (e.g., CCD and BOIN) make stay (at $d$), escalation (to $d+1$) or de-escalation (to $d-1$) decisions based on whether $\hat{p}_{d}$ is within, below or above $I_{\textS}$, respectively.
Another class of interval-based designs (e.g., mTPI-2 and keyboard) consider a partition of the $[0, 1]$ interval into sub-intervals (from left to right)
$\{ I_{\E_0}, \ldots, I_{\E_{K_1}} \}$, $I_{\textS}$ and $\{ I_{\textD_0}, \ldots, I_{\textD_{K_2}} \}$, where 
$I_{\textS} = [p^* - \epsilon_1, p^* + \epsilon_2]$ is the only sub-interval that contains $p^*$. 
The dose-finding decision is stay (at $d$), escalation (to $d+1$) or de-escalation (to $d-1$) if 
$ \arg \max_{k}$ $\Pr(p_d \in I_k  \mid \HH_N^*)$ equals $\textS$, belongs to $\{ \E_0, \ldots, \E_{K_1} \}$ or belongs to $\{ \textD_0, \ldots, \textD_{K_2} \}$, respectively. Here, $\Pr(p_d \in I_k  \mid \HH_N^*)$ is the posterior probability of $p_d$ falling within the interval $I_k$.
Although the existing complete-data designs differ in many aspects, 
they can be extended to time-to-event designs based on the same strategy, which will be elaborated in the next sections.

\section{Frameworks for Time-to-Event Modeling in Dose-Finding Trials}
\label{sec:framework}

\subsection{Setup}
\label{sec:notation}

Since patients enter clinical trials sequentially at random time, it is often the case that when a new patient is eligible for enrollment, some previously enrolled patients are still being followed without DLT events; thus their DLT outcomes by the end of the assessment window, $Y_i$'s, remain unknown. 
As discussed in Section \ref{sec:intro}, even when some DLT outcomes are pending, it is still desirable to enroll the patient and assign an appropriate dose.
Complete-data designs do not allow this, and time-to-event designs attempt to address this problem.
The key is to develop inference on $\bp$ and a decision rule.
With pending outcomes, inference on $\bp$ becomes less straightforward and ideally requires modeling time-to-event data, because these data provide information regarding the likelihood of the pending patients experiencing DLT in the future \citep{cheung2000sequential, yuan2018time}.
For example, a patient followed for 21 days without DLT provides different information from another followed for 2 days without DLT.
Such difference can be exploited for better inference and decision making.

Define \emph{trial time} as the number of days since the enrollment of the
first patient. 
Let $\tau_i^*$ denote the trial time when patient $i$ is enrolled. By
definition, $\tau_1^* = 0$. A patient will be followed for a duration
of $W$ days. Call $W$ the follow-up window.   We denote by $(\tau_i^* + T_i)$ the trial time when patient $i$ experiences DLT, where $T_i$ is the time-to-DLT for patient $i$. Note that $T_i$ can be greater than $W$ in reality. At any trial
time $\tau$,  patient $i$ may or may not have experienced DLT. If s/he has experienced
DLT, then $(\tau_i^* + T_i) \le \tau$. If s/he has not experienced DLT, s/he
either is still being followed or has completed $W$ days of follow-up,
but without experiencing DLT in either case, and we call the patient is
``censored''. At trial time $\tau$, a patient $i$ who is censored  has
a censoring time $U_i (\tau) = \min \{ \max(\tau - \tau^*_i, 0),
  W \}$. 
Let $Y_i$ be the indicator of whether patient $i$ experiences
DLT within the follow-up window $W$, i.e., $Y_i
= \bone (T_i \leq W)$.
We do not observe $T_i$ at trial time $\tau$ if $T_i > U_i(\tau)$, but we always observe the follow-up time of the patient, given by $V_i =  T_i \wedge U_i(\tau)$.
Similarly, we do not observe indicator $Y_i$ at trial time $\tau$ if $T_i > U_i(\tau)$ and $U_i(\tau) < W$, but we always know the current DLT status of the patient, given by $\tilde{Y}_i = \bone[T_i \leq U_i(\tau)]$.
For example, in Figure \ref{fig:dose_finding}(b), we have $\tau_1^* = 0$ and $\tau_2^* = 7$ for patients 1 and 2, respectively.
On day $\tau = 22$ since trial start, for patient 1, we have $T_1 = 21$, $U_1 = 22$, $V_1 = 21$, and $Y_1 = \tY_1 = 1$; for patient 2, we have $U_2 = 15$, $V_2 = 15$, $\tY_2 = 0$, and $T_2$ and $Y_2$ are unknown.
The available information at study time $\tau$ can be summarized by $\HH(\tau) = \{ (\tilde{Y}_i(\tau), V_i(\tau), Z_i) : i \leq N(\tau) \}$, where $N(\tau)$ is the total number of treated patients just prior to $\tau$.
A time-to-event design $\A$ can be viewed as a function of $\HH(\tau)$. That is, if a new patient is enrolled at time $\tau$, the design would assign a dose $\A[\HH(\tau)]$ for the patient.

We introduce some more notation to facilitate the upcoming discussion. 
Denote by
\begin{align*}
B_i(\tau) = 
\begin{cases} 
0, & \text{if }  \tY_i(\tau) = 0 \text{ and } V_i(\tau) < W; \\
1,  & \text{if } \tY_i(\tau) = 1 \text{, or } \tY_i(\tau) = 0 \text{ and } V_i(\tau) = W.
\end{cases}
\end{align*}
In words, $B_i(\tau) = 1$ or 0 represents that patient $i$'s DLT outcome $Y_i$ has or has not been fully assessed, respectively.
Therefore, $Y_i = \tY_i(\tau)$ if $B_i(\tau) = 1$. 
Following the convention in the missing data literature, we use $\bY_{\obs} (\tau) = \{ Y_i : B_i(\tau) = 1, i \leq N(\tau)\}$ or $\bY_{\mis} (\tau) = \{ Y_i : B_i(\tau) = 0, i \leq N(\tau) \}$ to represent the sets of DLT outcomes that have been observed or are pending at time $\tau$, respectively.
Lastly, let $N_z(\tau) = \sum_{i = 1}^{N(\tau)} \bone(Z_i = z)$ denote the number of patients that have been treated at dose $z$ just prior to $\tau$. Among the $N_z(\tau)$ patients, let $n_z(\tau)$, $m_z(\tau)$ and $r_z(\tau)$ denote the number of patients having DLT, non-DLT and pending outcomes, respectively. 
Mathematically, these can be written as 
\begin{align*}
n_z(\tau) &= \sum_{i = 1}^{N(\tau)} \bone[Z_i = z, Y_i = 1, B_i (\tau) = 1], \\
m_z(\tau) &= \sum_{i = 1}^{N(\tau)} \bone[Z_i = z, Y_i = 0, B_i (\tau) = 1], \;\; \text{and} \\
r_z(\tau) &= \sum_{i = 1}^{N(\tau)} \bone[Z_i = z, B_i (\tau) = 0].
\end{align*}

In the next sections, we will summarize existing methods that use the observed data $\HH(\tau)$  to make inference on $\bp$.

\subsection{Modeling Time-to-Toxicity Data}
\label{sec:model_tite}

The first step is to specify a model for the time-to-toxicity data.
For the following discussion, patient index $i$ is suppressed from the subscript to simplify notation when no confusion is likely. 
A straightforward modeling choice is to model $(T \mid Z)$ with a proportional hazards model, treating the dose level $Z$ as a covariate. The DLT probabilities can then be inferred based on $p_z = \Pr(T \leq W \mid Z = z)$. However, existing time-to-event designs resort to an alternative (and generally simpler) approach due to the typically small sample size at each dose level and the focus on estimation of $\bp$. 

Specifically, most existing time-to-event designs are built upon the corresponding complete-data designs. They adopt the strategies in complete-data designs for inference on $\bp$ and add a layer of time-to-toxicity modeling. 
In this way, the resulting model for $(T \mid Z)$ is relatively simple and resembles its complete-data counterpart,
and inference and assumptions on $\bp$ are explicit rather than hidden behind those on $T$. Recall that $\{ Y= 1 \}$ is equivalent to $\{ T \leq W \}$. Existing approaches still model $Y$ with a Bernoulli distribution as in Equation \eqref{eq:bernoulli_dist}, which is equivalent to assuming $\Pr(T \leq W \mid Z = z, p_z) = p_z$.
Next, write $f_{T \mid Z}(t \mid z, \bp, \bxi )$ the probability density function (pdf) of $T$ at dose level $Z = z$, where $\bxi$ denotes additional and nuisance parameters that characterize the distribution of $T$.
For $t \leq W$, 
\begin{align}
\begin{split}
f_{T \mid Z}(t \mid z, \bp, \bxi ) 
&= \Pr(Y = 1 \mid Z = z, \bp, \bxi) \cdot f_{T \mid Z, Y}(t \mid z, Y = 1, \bp, \bxi )  \\
&= p_z \cdot f_{T \mid Z, Y}(t \mid z, Y = 1, \bxi ).
\end{split}
\label{eq:model_t}
\end{align}
The first equation is true since $f_{T \mid Z, Y}(t \mid z, Y = 0, \bp, \bxi ) = 0$ for $t \leq W$.
The second equation assumes that  the conditional distribution $f_{T \mid Z, Y}(t \mid z, Y = 1, \bp, \bxi )$ does not depend on $\bp$. That is, given that a patient experiences DLT within the assessment window at a dose $z$, when the patient experiences the DLT does not depend on the toxicity probability $p_z$ of the dose. 
This is a simplifying assumption implicitly made by all existing methods (both complete-data designs and time-to-event designs). 
Note that if $f_{T \mid Z, Y}(t \mid z, Y = 1, \bp, \bxi )$ does depend on $\bp$, it should be included in the complete data likelihood \eqref{eq:likelihood_complete} according to the likelihood principle. 

To specify $f_{T \mid Z}(t \mid z, \bp, \bxi )$,
it suffices to specify $f_{T \mid Z, Y}(t \mid z, Y = 1, \bxi)$, which can be any probability density with support in $(0, W]$. 
Examples of possible specifications of $f_{T \mid Z, Y}(t \mid z, Y = 1, \bxi)$ include a uniform distribution, a piecewise uniform distribution, a discrete hazard model, a piecewise constant hazard model, and a rescaled beta distribution. 
See Appendix \ref{supp:sec:model_tite} 
for details.

The survival function of $T$ is given by
\begin{align*}
S_{T \mid Z} (t \mid z, \bp, \bxi) = \Pr(T  > t \mid Z = z, \bp, \bxi) = \int_{t}^{\infty} f_{T \mid Z}(v \mid z, \bp, \bxi ) \dd v.
\end{align*}
The survival function must satisfy $S_{T \mid Z} (W \mid z, \bp, \bxi) = 1 - p_z$. This is important, since $p_z$ only represents the probability of DLT within time window $(0, W]$.
For $t < W$, 
\begin{align*}
S_{T \mid Z} (t \mid z, \bp, \bxi) = 1 - p_z \int_{0}^t f_{T \mid Z, Y}(v \mid z, Y = 1, \bxi ) \dd v \triangleq 1 - p_z \rho(t \mid z, \bxi),
\end{align*}
where we denote by 
\begin{align}
\rho(t \mid z, \bxi) = \int_{0}^t f_{T \mid Z, Y}(v \mid z, Y = 1, \bxi ) \dd v.
\label{eq:rho}
\end{align}

\begin{remark}
\label{rmk:rho}
The function $\rho(t \mid z, \bxi)$ must satisfy: 
(1) $\rho(t \mid z, \bxi) \in [0, 1]$ for $t \in (0, W]$, and
(2) $\rho(W \mid z, \bxi)  = 1$.
Also, by definition, $\rho(t \mid z, \bxi)$ is non-decreasing in $t$.
\end{remark}

\subsection{Survival Likelihood}
\label{sec:likelihood_tite}

The likelihood of $\bp$ and $\bxi$ at time $\tau$
can be constructed based on survival modeling by treating the unknown times-to-toxicities (i.e., $\{ T_i: \tY_i(\tau) = 0, i \leq N(\tau) \}$) as censored observations.
See, for example, Section 3.5 in \cite{klein2006survival}.
To simplify notation, we omit the time index $\tau$ in the following discussion.
For the patients with observed toxicities ($\tilde{Y}_i = 1$), their contributions to the likelihood are the pdfs of the times-to-toxicities. For the patients without observed toxicities ($\tilde{Y}_i = 0$), their times-to-toxicities are right censored, and their contributions to the likelihood are the survival functions at the censoring times. In particular,
\begin{align*}
L( \bp, \bxi \mid \HH )
=
\prod_{i = 1}^N \Big[ f_{T \mid Z}(v_i \mid z_i, \bp, \bxi)^{\bone(\ty_i = 1)} 
S_{T \mid Z}(v_i \mid z_i, \bp, \bxi)^{\bone(\ty_i = 0)} \Big].
\end{align*}
The likelihood can be further written as
\begin{multline}
L ( \bp, \bxi \mid \HH )
= \prod_{i = 1}^N \Big\{ p_{z_i}^{\bone(\ty_i = 1)}   f_{T \mid Z, Y}(v_i \mid z_i, Y = 1, \bxi)^{\bone(\ty_i = 1)}  \times \\
\left[ 1 - \rho(v_i \mid z_i, \bxi) p_{z_i} \right]^{\bone(\ty_i = 0)}
 \Big\}.
\label{eq:likelihood_tite}
\end{multline}

Here, due to Remark \ref{rmk:rho}, $\rho_i \triangleq \rho(v_i \mid z_i, \bxi)$ can be interpreted as the weight of a patient who is still being followed within the assessment window. The likelihood \eqref{eq:likelihood_tite} can be considered as a weighted likelihood as in \cite{cheung2000sequential}, where a patient with a complete outcome ($\ty_i = 1$, or $\ty_i = 0$ and $v_i = W$) receives full weight, and a patient with a pending outcome ($\ty_i = 0$ and $v_i < W$) receives a weight of $\rho_i$. 
The longer the follow-up time, the larger the weight.
The term $f_{T \mid Z, Y}(v_i \mid z_i, Y = 1, \bxi)$ is not related to $\bp$ but provides information about the time-to-toxicity.

From a Bayesian perspective, with the likelihood \eqref{eq:likelihood_tite} and prior distributions $\pi_0(\bp)$ and $\pi_0(\bxi)$, inference on $\bp$ and $\bxi$ is realized using the posterior distribution, 
\begin{align*}
\pi(\bp, \bxi \mid \HH) \propto \pi_0(\bp) \pi_0(\bxi) L ( \bp, \bxi \mid \HH ).
\end{align*}
Note that $\bxi$ is a set of nuisance parameters and is of no interest to the statistical design.
In general, the posterior is not available in closed form, and Monte Carlo simulation is applied to approximate the posterior. We provide a simple computational algorithm in Appendix \ref{supp:sec:inference_sur}.
From a frequentist perspective, the maximum likelihood estimate (MLE) $\hat{\bp}$ can be used as an estimate for $\bp$. 
One can calculate $(\hat{\bp}, \hat{\bxi}) = \argmax_{\bp, \bxi} L( \bp, \bxi \mid \HH )$ by taking partial derivatives of the log-likelihood with respect to all the parameters or using other optimization techniques. Again, see more details in Appendix \ref{supp:sec:inference_sur}.
In the CRM and BLRM designs, $\bp$ is modeled by a parametric curve, $p_z = \phi(z, \bm \alpha)$, where $\bm \alpha$ denotes unknown parameters. In such cases, the likelihood \eqref{eq:likelihood_tite} is re-parameterized with respect to $\bm \alpha$, and a prior distribution $\pi_0(\bm \alpha)$ would be specified for $\bm \alpha$ (instead of $\bp$).

\subsection{Augmented Likelihood with Missing Data}
\label{sec:likelihood_missing}

The likelihood of $\bp$ and $\bxi$ can be alternatively constructed based on modeling of missing data, by treating the pending DLT outcomes (i.e., $\bY_{\mis}$) as missing and augmenting the likelihood function that incorporates the unknown $\bY_{\mis}$ as a vector of latent variables. 
Specifically, a patient having an observed toxic outcome ($Y_i = 1$ and $B_i = 1$) and a known DLT time $v_i$ contributes $p_{z_i} f_{T \mid Z, Y}(v_i \mid z_i, Y = 1, \bxi)$ to the likelihood. 
A patient having a latent toxic outcome ($Y_i = 1$ and $B_i = 0$) and a follow-up time $v_i$ contributes $p_{z_i} \int_{v_i}^W f_{T \mid Z, Y}(t \mid z_i, Y = 1, \bxi) \dd t$ to the likelihood, because the DLT will occur in the interval $(v_i, W]$. Finally, a patient with an observed or latent non-DLT outcome ($Y_i = 0$) contributes $(1 - p_{z_i})$ to the likelihood. Therefore, using \eqref{eq:rho}, the augmented likelihood is given by
\begin{multline}
L( \bp, \bxi, \by_{\mis} \mid \HH ) 
= \prod_{i = 1}^N \Big \{ p_{z_i}^{\bone(y_i = 1)} (1 - p_{z_i})^{\bone(y_i = 0)}  \times \\
f_{T \mid Z, Y}(v_i \mid z_i, Y = 1, \bxi)^{\bone(y_i = 1, b_i = 1)} \left[1 - \rho(v_i \mid z_i, \bxi) \right]^{\bone(y_i = 1, b_i = 0)} \Big \}.
\label{eq:likelihood_missing}
\end{multline}

Although the augmented likelihood involves additional parameters compared to the survival likelihood, the following proposition shows inference under both approaches is the same. 

\begin{proposition}
\label{prop:eq_mis_sur}
The derived likelihood by marginalizing \eqref{eq:likelihood_missing} over $\by_{\mis}$ is the same as the survival likelihood \eqref{eq:likelihood_tite} for $\bp$ and $\bxi$.
\end{proposition}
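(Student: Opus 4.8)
The plan is to exploit the fact that both likelihoods are written as products of independent per-patient contributions, so that marginalizing \eqref{eq:likelihood_missing} over $\by_{\mis} = \{ y_i : b_i = 0 \}$ reduces to summing each pending patient's factor over $y_i \in \{ 0, 1 \}$ separately. Since $\by_{\mis}$ ranges over binary vectors indexed by the pending patients, the sum $\sum_{\by_{\mis}} L( \bp, \bxi, \by_{\mis} \mid \HH )$ distributes across the product over $i$: patients with $b_i = 1$ have a known, fixed $y_i$ and factor out of the sum unchanged, while each patient with $b_i = 0$ contributes the sum of its two terms over $y_i = 0$ and $y_i = 1$. It therefore suffices to verify the equality one factor at a time.

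First I would partition the $N$ patients into three groups according to $(b_i, \ty_i, v_i)$: (i) observed DLT, where $b_i = 1$ and $\ty_i = 1$, so $y_i = 1$ is known; (ii) completed follow-up without DLT, where $b_i = 1$, $\ty_i = 0$, and $v_i = W$, so $y_i = 0$ is known; and (iii) pending, where $b_i = 0$ (equivalently $\ty_i = 0$ and $v_i < W$), so $y_i$ must be summed out. For group (i), the augmented factor collapses to $p_{z_i} f_{T \mid Z, Y}(v_i \mid z_i, Y = 1, \bxi)$, matching the survival factor at $\ty_i = 1$. For group (iii), since $b_i = 0$ forces the exponent $\bone(y_i = 1, b_i = 1)$ to vanish, the summed factor is $(1 - p_{z_i}) + p_{z_i}(1 - \rho_i)$, which simplifies to $1 - \rho_i p_{z_i}$, exactly the survival factor at $\ty_i = 0$ with $\rho_i = \rho(v_i \mid z_i, \bxi)$.

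The one step that requires care is group (ii). There the augmented factor equals $1 - p_{z_i}$, whereas the corresponding survival factor at $\ty_i = 0$ is $1 - \rho_i p_{z_i}$; these agree precisely because $v_i = W$ and, by Remark \ref{rmk:rho}, $\rho(W \mid z_i, \bxi) = 1$. This boundary condition, together with the bookkeeping of the two indicator exponents $\bone(y_i = 1, b_i = 1)$ and $\bone(y_i = 1, b_i = 0)$ that govern whether the time-to-toxicity density $f_{T \mid Z, Y}$ or the weight $1 - \rho_i$ enters for an observed versus a latent DLT, is the crux of the argument; everything else is routine term matching. The main obstacle is thus not analytic difficulty but keeping the case distinctions and the indicator exponents aligned with the three follow-up statuses. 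Collecting the three groups yields $\prod_{i=1}^N$ of the survival factors, which is exactly \eqref{eq:likelihood_tite}, completing the proof.
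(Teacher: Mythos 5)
Your proof is correct and follows essentially the same route as the paper's: the marginalization distributes over the per-patient product, each pending patient's factor sums to $(1 - p_{z_i}) + p_{z_i}(1 - \rho_i) = 1 - \rho_i p_{z_i}$, and the remaining factors are matched term by term against \eqref{eq:likelihood_tite}. If anything, you are slightly more explicit than the paper, which leaves implicit the boundary condition $\rho(W \mid z_i, \bxi) = 1$ from Remark \ref{rmk:rho} that is needed to reconcile the augmented factor $1 - p_{z_i}$ for completed non-DLT patients with the survival factor $1 - \rho_i p_{z_i}$.
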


The proof is given in Appendix \ref{supp:sec:eq_sur_mis}. 
The theoretical result in Proposition \ref{prop:eq_mis_sur} is also implied in \cite{liu2013bayesian} but is presented more explicitly here. 
The augmented likelihood opens the door to a set of flexible computational algorithms for making inference on $\bp$. For example, the posterior distribution $\pi(\bp \mid \HH)$ can be simulated using the data augmentation method \citep{tanner1987calculation}. The MLE of $\bp$ can be calculated through the expectation-maximization algorithm \citep{dempster1977maximum}.
We elaborate these methods in Appendix \ref{supp:sec:inference_mis}. 
In the upcoming review of designs based on missing data modeling, a key component in the inference is the conditional probability,
\begin{align}
&{} \Pr(Y_{\mis, i} = 1 \mid Z_i = z_i, T_i > v_i, \bp, \bxi) \label{eq:dist_missing} \\
= &{} \frac{\Pr(T_i > v_i \mid Z_i = z_i, Y_{\mis, i} = 1, \bxi) \cdot \Pr(Y_{\mis, i} = 1 \mid Z_i = z_i, \bp)}{\sum_{y \in \{0, 1\}} \Pr(T_i > v_i \mid Z_i = z_i, Y_{\mis, i} = y, \bxi) \cdot \Pr(Y_{\mis, i} = y \mid Z_i = z_i, \bp)} \nonumber \\
= &{} \frac{ [1 - \rho(v_i \mid z_i, \bxi)] \cdot p_{z_i} }{ [1 - \rho(v_i \mid z_i, \bxi)] \cdot p_{z_i} + (1 - p_{z_i})}. \nonumber
\end{align}
That is, the probability of a pending patient experiencing DLT within the assessment window given the patient is treated at dose $z_i$ and has been followed for $v_i$ units of time.

 Instead of using the exact likelihood function \eqref{eq:likelihood_tite} or \eqref{eq:likelihood_missing}, it is possible to use an approximate likelihood to make inference about $\bp$. Based on a Taylor expansion, $( 1 - \rho_i p_{z_i} ) \approx (1 -  p_{z_i})^{\rho_i}$, thus the terms involving $\bp$ in Equation \eqref{eq:likelihood_tite} can be approximated by a Bernoulli likelihood. See \cite{lin2018time} for more details.

\section{Two Classes of Time-to-Event Designs}
\label{sec:two_class_designs}

\subsection{TITE Designs}
\label{sec:tite_design}

In Section \ref{sec:framework}, we have reviewed and summarized statistical frameworks for time-to-event modeling in dose-finding trials. 
Based on these frameworks and using each of the complete-data designs (Section \ref{sec:complete_data_design}), one can easily generate a corresponding time-to-event design. Following the literature, we call this class of designs TITE designs. 
Below, we illustrate this idea through reviewing five existing TITE designs: TITE-CRM, EM-CRM, DA-CRM, TITE-BOIN  and TITE-keyboard, and proposing three new TITE designs: TITE-TPI, TITE-SPM and TITE-i3.

\subsubsection{Existing TITE Designs}

\begin{description}[leftmargin = 0pt, topsep=2mm, listparindent=\parindent, itemsep=3mm]
\item[TITE-CRM \& EM-CRM \& DA-CRM]
The TITE-CRM design \citep{cheung2000sequential} is a TITE extension of the CRM design.
It assumes a dose-toxicity  curve $p_z = \phi(z, \alpha)$,
such that $\phi$ monotonically increases with $z$, and $\alpha$ is an unknown parameter.
The likelihood \eqref{eq:likelihood_tite} is re-parameterized with respect to $\alpha$ and becomes
\begin{multline}
L ( \alpha, \bxi \mid \HH )
= \prod_{i = 1}^N \Big\{ \phi(z_i, \alpha)^{\bone(\ty_i = 1)}  \left[ 1 - \rho(v_i \mid z_i, \bxi) \phi(z_i, \alpha) \right]^{\bone(\ty_i = 0)}  \times \\
 f_{T \mid Z, Y}(v_i \mid z_i, Y = 1, \bxi)^{\bone(\ty_i = 1)} \Big\}.
\label{eq:titecrm}
\end{multline}
By default, the conditional distribution of $[T \mid Z, Y = 1]$ is modeled by a uniform distribution, thus $\rho(v_i \mid z_i, \bxi)  \equiv v_i / W$ and $f_{T \mid Z, Y}(v_i \mid z_i, Y = 1, \bxi) \equiv 1 / W$.
Alternatively, one can model $[T \mid Z, Y = 1]$ with a piecewise-uniform distribution, which can be calibrated to match the adaptive weighting scheme described in \cite{cheung2000sequential}.
Inference on $\bp$ can be Bayesian or based on MLE. 
The dose $d^* = \argmin_{z} | \hat{p}_{z} - p^* |$ is recommended for the next patient, subject to some practical safety restrictions \citep{goodman1995some, cheung2005coherence}. Here $\hat{p}_{z}$ is an appropriate point estimate of $p_z$.

The EM-CRM \citep{yuan2011robust} and DA-CRM \citep{liu2013bayesian} are two alternative TITE extensions of the CRM design. They take the missing data modeling approach and consider the augmented likelihood \eqref{eq:likelihood_missing}. EM-CRM models $[T \mid Z, Y = 1]$ with a discrete hazard model and estimates $\bp$ using the expectation-maximization algorithm. It also has an additional layer of model averaging over different choices of skeletons to improve its robustness.
DA-CRM models $[T \mid Z, Y = 1]$ with a piecewise constant hazard model and estimates $\bp$ using the data augmentation method.
Note that the original TITE-CRM \citep{cheung2000sequential} takes a two-step approach in estimating $\bp$: first use the data to estimate $\bxi$, and then plug in the estimated $\bxi$ in the likelihood to estimate $\alpha$ and $\bp$. This would lead to different inference compared to the EM-CRM and DA-CRM. According to Proposition \ref{prop:eq_mis_sur}, if $\bxi$ and $\alpha$ were estimated jointly under the likelihood \eqref{eq:titecrm}, and if the same specification of $f_{T \mid Z, Y}(t \mid z, Y = 1, \bxi)$ was used, the three methods would yield identical inference. This connection shows the underlying similarities of the three designs, although these designs are proposed independently with different statistical models.

\item[TITE-BOIN]
The TITE-BOIN design \citep{yuan2018time} is a TITE extension of the BOIN design.  
To maintain the transparent and simple decision rules in BOIN, TITE-BOIN uses single imputation, substituting $\bY_{\mis}$ with their expected values $\hat{\by}_{\mis}$. Specifically, $\hat{y}_{\mis, i} = \text{E}(Y_{\mis, i} = 1 \mid Z_i = z_i, T_i > v_i, \bp, \bxi) = \Pr(Y_{\mis, i} = 1 \mid Z_i = z_i, T_i > v_i, \bp, \bxi)$, given in \eqref{eq:dist_missing}. A uniform distribution is assumed for $[T \mid Z, Y = 1]$, thus $\rho(v_i \mid z_i, \bxi)  \equiv v_i / W$.
The imputation involves the unknown parameter $\bp$, for which an estimate based on an approximation procedure is plugged in.
Finally, the decision rule of BOIN is applied to the imputed dataset $(\by_{\obs}, \hat{\by}_{\mis})$.
We note that another way of extending the BOIN design is to consider the BOIN hypothesis test (Equation \ref{eq:boin_hypothesis}) directly under the likelihood \eqref{eq:likelihood_tite}, although this would be more complicated than the single imputation approach.

\item[TITE-keyboard]
The TITE-keyboard design \citep{lin2018time} is a TITE extension of the keyboard design.
Similar to the keyboard design, TITE-keyboard considers non-overlapping and equal-lengthed sub-intervals (keys) of the $[0, 1]$ interval
(from left to right): $\{ I_{\E_1}, \ldots, I_{\E_{K_1}} \}$, $I_{\textS}$, and $\{ I_{\textD_1}, \ldots, I_{\textD_{K_2}} \}$. 
Here, $I_{\textS} = [p^* - \epsilon_1, p^* + \epsilon_2]$ is referred to as the target key and is the only sub-interval that contains $p^*$.
By default, TITE-keyboard assumes independent $\Beta(1, 1)$ priors on $p_z$'s, $\pi_0(p_z) = \bone_{[0, 1]}$.
Suppose the current dose is $d$. With a model for $[T \mid Z, Y = 1]$ and an additional prior $\pi_0(\bxi)$, the posterior $\pi (p_d, \bxi \mid \HH) \propto \pi_0 (p_d) \pi_0(\bxi) L( \bp, \bxi \mid \HH )$, with $L( \bp, \bxi \mid \HH )$ given by \eqref{eq:likelihood_tite}.
Let $k^* = \argmax_k \Pr(p_d \in I_k \mid \HH)$.
The dose-assignment decision for the next patient follows the keyboard design. That is, to escalate, stay or de-escalate, if $k^*$ belongs to $\{ \E_1, \ldots, \E_{K_1} \}$, equals S or belongs to $\{ \textD_1, \ldots, \textD_{K_2} \}$, respectively.

\item[TITE-TPI]
 \cite{lin2018time} proposed a TITE extension of the mTPI design.  We note that the same extension can be applied to mTPI-2, and we refer to this extension as the TITE-TPI design.
Consider a partition of the $[0, 1]$ interval into sub-intervals (from left to right): $\{ I_{\E_0}, \ldots, I_{\E_{K_1}} \}$, $I_{\textS}$, and $\{ I_{\textD_0}, \ldots, I_{\textD_{K_2}} \}$. Except for the two boundary intervals, all these sub-intervals have the same length, and $I_{\textS} = [p^* - \epsilon_1, p^* + \epsilon_2]$ is the only sub-interval that contains $p^*$.
Next, let $d$ denote the current dose level, and let model $\{ \mathcal{M}_d = k \}$ represent $\{ p_d \in I_k \}$, $k = \E_0, \ldots, \E_{K_1}, \textS, \textD_0, \ldots, \textD_{K_2}$.
We consider the same hierarchical prior models for $\mathcal{M}_d$ and $p_d$ as in Equation \eqref{eq:mtpi2}.
With a model for $[T \mid Z, Y = 1]$ and an additional prior $\pi_0(\bxi)$, the posterior $\pi (\mathcal{M}_d, p_d, \bxi \mid \HH) \propto \pi_0 (\mathcal{M}_d) \pi_0 (p_d \mid \mathcal{M}_d) \pi_0(\bxi) L( \bp, \bxi \mid \HH )$.
The dose-assignment decision for TITE-TPI follows the mTPI-2 design. That is, to escalate, stay or de-escalate, if $\arg \max_{k} \Pr(\mathcal{M}_d = k  \mid \HH )$ belongs to $\{ \E_0, \ldots, \E_{K_1} \}$, equals S or belongs to $\{ \textD_0, \ldots, \textD_{K_2} \}$, respectively.
When the prior probabilities of $\{ \mathcal{M}_d = k \}$ are the same for all $k$'s, TITE-TPI results in the same inference as TITE-keyboard, although the prior probabilities need not be the same.

\end{description}


\subsubsection{New TITE Designs}
\label{sec:tite_new}

\begin{description}[leftmargin = 0pt, topsep=2mm, listparindent=\parindent, itemsep=3mm]

\item[TITE-SPM]
We propose the TITE-SPM design as a TITE extension of the SPM design.
Recall that the SPM directly models the the location of the MTD $\gamma$, $\gamma \in \{1, \ldots, J\}$.
The same hierarchical models for $\gamma$ and $\bp$, denoted by $\pi_0(\gamma)$ and $\pi_0(\bp \mid \gamma)$, can be used in TITE-SPM as in Equation \eqref{eq:spm}.
With a model for $[T \mid Z, Y = 1]$ and an additional prior $\pi_0(\bxi)$, the posterior $\pi(\gamma, \bp, \bxi \mid \HH) \propto \pi_0(\gamma) \pi_0(\bp \mid \gamma)  \pi_0(\bxi) L( \bp, \bxi \mid \HH )$.
The dose-assignment decision for TITE-SPM follows the SPM design. That is, to assign the dose $\hat{\gamma} = \argmax_{\gamma} \pi(\gamma \mid \HH)$ to the next patient, subject to some safety restrictions.

\item[TITE-i3]
We propose the TITE-i3 design as a TITE extension of the i3+3 design. Recall that the i3+3 design makes dose-finding decisions by comparing $n_d / N_d$ with the boundaries of $I_{\textS}$.
For TITE-i3, we replace $n_d$ in the i3+3 design with $N_d \hat{p}_d$, where $\hat{p}_d$ is the MLE under the likelihood \eqref{eq:likelihood_tite}. 
To maintain the simple algorithmic rules in i3+3, we model $[T \mid Z, Y = 1]$ with a uniform distribution, and the MLE is easy to solve.

\end{description}

According to how a design makes inference about $\bp$ and translates such inference to a dose-finding decision, we can categorize the TITE designs in the same way as the complete-data designs (Section \ref{sec:complete_data_design}).
For example, TITE-CRM is a point-based and curve-based TITE design, and TITE-BOIN,  TITE-keyboard, TITE-TPI and TITE-i3 are interval-based and curve-free TITE designs.

\subsection{POD Designs}
\label{sec:pod_design}

Taking one step further of the TITE designs, one can  directly make inference on possible dose-finding decisions when some DLT outcomes are pending (Figure \ref{fig:illustration}). This leads to a new class of POD (probability-of-decision) designs. We discuss the details next.

As mentioned in Section \ref{sec:complete_data_design}, the dose-assignment decision for any complete-data design can be written as a deterministic function of the previous (complete) DLT outcomes $\by$ and dose assignments $\bz$, denoted by $\A^*(\by, \bz)$.
In other words, the dose level $\A^*(\by, \bz) \in \{ 1, \ldots, J \}$ will be used to treat the next patient.
For example, when $p^* = 0.2$, for CRM with default prior hyperparameters, $\A_{\text{CRM}}^*[(0, 0, 0, 0, 0, 1), (1, 1, 1 , 2, 2, 2)] = 2$; for mTPI-2 with $\epsilon_1 = \epsilon_2 = 0.05$, $\A_{\text{mTPI-2}}^*[(0, 0,$ $0, 0, 0, 1), (1, 1, 1 , 2, 2, 2)] = 1$.

In the presence of pending outcomes, let $A = \A^*[(\by_{\obs}, \bY_{\mis}), \bz]$ denote the dose-assignment decision. Since $\bY_{\mis}$ is a vector of latent variables and $A$ is a function of $\bY_{\mis}$, $A$ is essentially a random variable.
Under the Bayesian paradigm, the posterior distribution of $A$ is given by
\begin{align}
\Pr(A = a \mid \HH) = \sum_{\by_{\mis}: \A^*[(\by_{\obs}, \by_{\mis}), \bz] = a} \Pr(\bY_{\mis} = \by_{\mis} \mid \HH).
\label{eq:post_decision}
\end{align}
Here,
\begin{align}
\Pr(\bY_{\mis} = \by_{\mis} \mid \HH) =  \int_{\bxi} \int_{\bp} \Pr(\bY_{\mis} = \by_{\mis} \mid \HH, \bp, \bxi) \pi(\bp, \bxi \mid \HH) \dd \bp \dd \bxi
\label{eq:pred_ymis}
\end{align}
is the posterior predictive distribution of $\bY_{\mis}$, and  $\Pr(\bY_{\mis} = \by_{\mis} \mid \HH, \bp, \bxi)$ is given in \eqref{eq:dist_missing}.
From a frequentist perspective, instead of marginalizing over the posterior distribution of $\bp$ and $\bxi$, one could plug in the MLE of $\bp$ and $\bxi$.
The probability \eqref{eq:post_decision} is referred to as the POD, which accounts for the variability in the missing data and directly reflects the confidence of every possible decision. 
The dose assignment for the next patient can be guided by the POD. 
For example, one may make the decision with the highest POD, $a^* = \argmax_a \Pr(A = a \mid \HH)$. 


Similar to ``probability of success'' (POS), POD is based on the
posterior predictive distribution $ P(\bY_{\mis} \mid
\HH) $ and describes the uncertainty of making a dosing decision due
to the uncertainty associated with $\bY_{\mis}$.  The use of the posterior
predictive distribution, rather than the posterior distribution
$\pi(\bp \mid \HH)$, highlights the main difference between POD and
TITE designs. This difference is analogous to that between the uses of posterior probabilities and posterior predictive probabilities for interim decision making \citep{lee2008predictive, saville2014utility}. 
Specifically, using predictive distributions, POD designs incorporate information about the number of patients with pending outcomes into decision considerations.
As an illustration, consider the hypothetical scenario shown in Figure \ref{fig:POD_vs_TITE}. 
When a new patient $(4 + r_d)$ arrives, $(3 + r_d)$ patients have been treated at dose $d$.
Among these patients, patients 1--3 have finished DLT assessment with 2 non-DLTs and 1 DLT; the other patients were treated just prior to the arrival of patient $(4 + r_d)$ thus having zero follow-up times.
Then, the value of $r_d$ does not affect the inference on $\bp$.
This is because patients with zero follow-up times receive zero weights in the likelihood function of $\bp$ (see Equations
\ref{eq:rho} and \ref{eq:likelihood_tite}, and note that $\rho(0 \mid d, \bm \xi) = 0$).
For example, assuming a $\Beta(1, 1)$ prior on $p_d$, the posterior distribution $p_d \mid \HH \sim \Beta(2, 3)$ for any $r_d = 0, 1, \ldots.$
Therefore, using a TITE design that is based only on the posterior distribution of $\bp$, the decision for the new patient $(4 + r_d)$ will be the same for any $r_d$. In other words, the pending outcomes with zero follow-up times play no role in the decision rule of TITE designs.
In contrast, POD designs utilize additional information from the number of pending outcomes (Equation \ref{eq:post_decision}).
Different values of $r_d$ lead to different probabilities of decisions despite the same inference on $\bp$.
For example, suppose the target DLT rate is $p^* = 0.3$, and let $\A^*$ be the dose-assignment decision of mTPI-2.
When $r_d = 0$, the complete-data decision of mTPI-2 (stay) receives 100\% of the posterior mass. 
When $r_d = 1$,  with a $\Beta(1, 1)$ prior on $p_d$, $\Pr(Y_{\mis, 4} = 1 \mid \HH) = \E(p_d \mid \HH) = 0.4$ according to Equations \eqref{eq:dist_missing} and \eqref{eq:pred_ymis},  and the posterior probabilities of stay (when $Y_{\mis, 4} = 0$) and de-escalation (when $Y_{\mis, 4} = 1$) are 0.6 and 0.4, respectively.
Similarly, for other values of $r_d$, the probabilities of decisions vary accordingly.  See Figure \ref{fig:POD_vs_TITE} for an illustration.

\begin{figure}[h!]
\begin{center}
\includegraphics[width = 0.98\textwidth]{./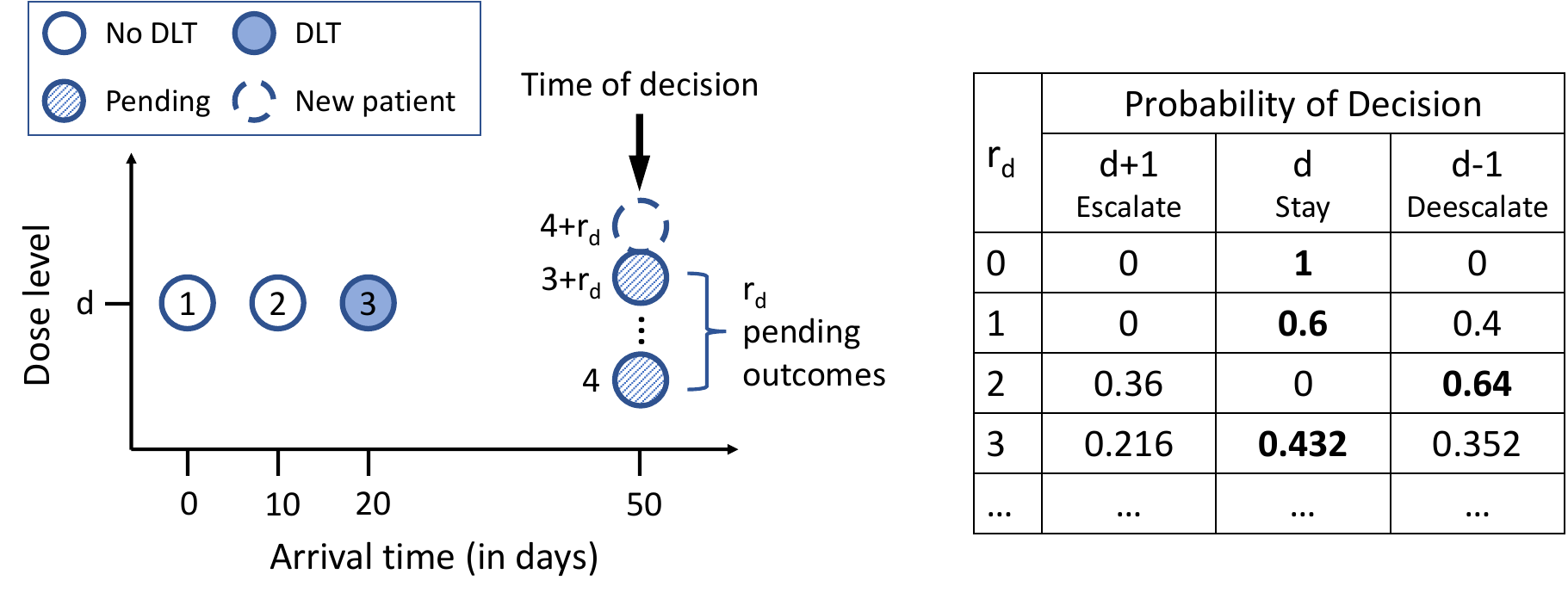} 
\end{center}
\caption{A hypothetical scenario illustrating the difference between TITE and POD designs. 
When a new patient $(4 + r_d)$ arrives, $(3 + r_d)$ patients have been treated at dose $d$, among which 2 are non-DLTs, 1 is DLT, and $r_d$ are pending with zero follow-up times.
Assuming a $\Beta(1, 1)$ prior on $p_d$. By the time the new patient arrives, the posterior distribution of $p_d$ is $\Beta(2, 3)$ for any $r_d$.
However, using POD designs, different values of $r_d$ lead to different predictive probabilities of decisions, shown in the right panel of the figure. }
\label{fig:POD_vs_TITE}
\end{figure}

\subsubsection{Existing and New POD designs}
\label{sec:pod_example}

\begin{description}[leftmargin = 0pt, topsep=2mm, listparindent=\parindent, itemsep=3mm]
\item[POD-TPI]
We illustrate the POD designs through reviewing the POD-TPI design \citep{zhou2019pod}.
POD-TPI assumes independent $\Beta(1, 1)$ priors on $p_z$'s. 
With a model for $[T \mid Z, Y = 1]$ and an additional prior $\pi_0(\bxi)$, the posterior $\pi (\bp, \bxi \mid \HH) \propto \pi_0 (\bp) \pi_0(\bxi) L( \bp, \bxi \mid \HH )$. By default, POD-TPI models $[T \mid Z, Y = 1]$ with a piecewise uniform distribution.
The posterior predictive distribution of $\bY_{\mis}$ is then computed. Finally, suppose the current dose is $d$, and let $A = \A^*[(\by_{\obs}, \bY_{\mis}), \bz] \in \{ d-1, d, d+1 \}$, where $\A^*$ is the decision function of mTPI-2.
The PODs $\Pr(A = a \mid \HH)$ of possible decision $a$'s are calculated, and the decision $a^*$ with the highest POD is executed, subject to additional safety restrictions.
\end{description}

Apparently, the decision function $\A^*$ can be based on any complete-data design. 
The model for $[T \mid Z, Y = 1]$ and priors on $\bp$ and $\bxi$ can also be adjusted if desired.
In this way, we obtain a new class of POD designs, such as POD-CRM, POD-BOIN, POD-keyboard, POD-SPM and POD-i3. As an example, we present POD-CRM below. 

\begin{description}[leftmargin = 0pt, topsep=2mm, listparindent=\parindent, itemsep=3mm]
\item[POD-CRM]
We propose the POD-CRM design as a POD extension of the CRM design. Similar to the CRM, we consider a parametric dose-toxicity curve $p_z = \phi(z, \alpha)$. Based on the likelihood \eqref{eq:titecrm}, the posterior distribution of $\alpha$ and $\bxi$ is given by $\pi(\alpha, \bxi \mid \HH) \propto \pi_0(\alpha) \pi_0(\bxi) L ( \alpha, \bxi \mid \HH )$. The posterior distribution of $\pi(\bp, \bxi \mid \HH)$ can then be computed, and the posterior predictive distribution of $\bY_{\mis}$ can be obtained. Finally, let $\A^*$ denote the decision function of CRM, and let $A = \A^*[(\by_{\obs}, \bY_{\mis}), \bz] \in \{ 1, \ldots, J \}$. The PODs of possible decisions can be calculated based on Equation \eqref{eq:post_decision}, and the decision with the highest POD is executed, subject to additional safety restrictions.
\end{description}

We can categorize the POD designs according to the corresponding complete-data designs $\A^*$. For example, POD-TPI is an interval-based and curve-free POD design.

\section{Design Properties}
\label{sec:property}

In this section, we study large- and finite-sample properties of the aforementioned time-to-event designs, with an emphasis on interval-based and curve-free designs.


\subsection{Large-Sample Convergence Properties}
\label{sec:convergence}

Dose-finding studies are usually carried out with relatively small sample sizes (10 to 50 subjects). Still, as noted in \cite{oron2011dose}, large-sample convergence properties should be viewed as a necessary quality criterion for dose-finding designs. 
In general, the large-sample properties for a particular complete-data design should also hold for its time-to-event version, as long as the DLT assessment window $W$ and the patient accrual rate are both finite. Intuitively, at time $\tau$, all patients enrolled before $(\tau - W)$ have finished their DLT assessments, and only the patients enrolled within $(\tau - W, \tau]$ can have pending outcomes. As $\tau \rightarrow \infty$, the number of complete outcomes goes to infinity too, and the number of pending outcomes is finite with probability one, making the contribution of the pending outcomes negligible in the likelihood \eqref{eq:likelihood_tite}.

In what follows, we present some general large-sample results for interval-based and curve-free time-to-event designs.
First, the following lemma establishes the consistency of the posterior distribution and MLE of $p_z$ in a time-to-event setting when (1) the number of patients treated by dose $z$, $N_z$, goes to infinity, and (2) the number of pending outcomes at dose $z$, $r_z$, is small compared to $N_z$. 

\begin{lemma}[consistency]
\label{lem:consistency}
Suppose the true distribution of the DLT outcome is $\Pr(Y_i = 1 \mid Z_i = z) = p_{0z}$, and $r_z = o(N_z)$. 
(1) Let $\C_{\varepsilon} = \{ p_z : | p_z - p_{0z} | < \varepsilon \}$. Let $\pi_0(p_z)$ be a prior distribution for $p_z$ such that $\pi_0(p_z \in \C_{\varepsilon}) > 0$ for every $\varepsilon > 0$, and the likelihood of $p_z$ is as in \eqref{eq:likelihood_tite}.
Then, for every $\varepsilon > 0$, the posterior distribution $\pi(p_z \in \C_{\varepsilon} \mid \HH) \rightarrow 1$ almost surely as $N_z \rightarrow \infty$.
(2) The maximum likelihood estimator $\hat{p}_z \rightarrow p_{0z}$  almost surely as $N_z \rightarrow \infty$.
\end{lemma}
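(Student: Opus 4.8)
The plan is to reduce both claims to a one-dimensional problem in $p_z$. Because the curve-free model places independent priors on the $p_z$'s and the $\bp$-part of the survival likelihood \eqref{eq:likelihood_tite} factorizes across doses, inference on $p_z$ is governed by the single factor
\begin{align*}
L_z(p_z,\bxi) \;\propto\; p_z^{n_z}(1-p_z)^{m_z}\prod_{i\in\mathcal{P}_z}\bigl[1-\rho(v_i\mid z,\bxi)\,p_z\bigr],
\end{align*}
where $\mathcal{P}_z$ collects the $r_z$ pending patients at dose $z$ (those with $\ty_i=0$ and $v_i<W$); the $n_z$ observed DLTs each contribute $p_z$, while the $m_z$ completed non-DLTs each contribute $(1-p_z)$ since $\rho(W\mid z,\bxi)=1$ by Remark \ref{rmk:rho}. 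The nuisance parameter $\bxi$ enters this factor only through the weights $\rho_i=\rho(v_i\mid z,\bxi)$, and Remark \ref{rmk:rho} forces $\rho_i\in[0,1]$ regardless of $\bxi$. This yields the key sandwich
\begin{align*}
(1-p_z)^{r_z}\;\le\;\prod_{i\in\mathcal{P}_z}\bigl(1-\rho_i p_z\bigr)\;\le\;1,
\end{align*}
so the combined influence of the pending data and of $\bxi$ on inference for $p_z$ is trapped between $1$ and a factor of order $(1-p_z)^{r_z}$.

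First I would pin down the completed counts. Since the truth is $Y_i\iidsim\Ber(p_{0z})$ at dose $z$, and each patient is counted by exactly one of $n_z,m_z,r_z$ with $n_z\le\sum_i\bone(Y_i=1)\le n_z+r_z$ (and symmetrically $m_z\le\sum_i\bone(Y_i=0)\le m_z+r_z$), the strong law together with $r_z=o(N_z)$ gives by squeezing $n_z/N_z\to p_{0z}$ and $m_z/N_z\to 1-p_{0z}$ almost surely. For part (2), normalize: on any compact $[a,b]\subset(0,1)$ the sandwich bounds $\frac{1}{N_z}\bigl|\log\prod_{i\in\mathcal{P}_z}(1-\rho_i p_z)\bigr|\le \frac{r_z}{N_z}\,|\log(1-b)|=o(1)$ uniformly over $p_z\in[a,b]$ and over $\bxi$, so
\begin{align*}
\frac{1}{N_z}\log L_z(p_z,\bxi)\;\longrightarrow\;M(p_z):=p_{0z}\log p_z+(1-p_{0z})\log(1-p_z)\quad\text{a.s.}
\end{align*}
A Kullback--Leibler argument shows $M$ is uniquely maximized at $p_{0z}\in(0,1)$, and coercivity of the finite-sample objective (driven by $n_z,m_z\to\infty$) keeps the maximizer off the boundary; the standard argmax theorem for uniformly convergent objectives with a well-separated maximizer then delivers $\hat p_z\to p_{0z}$ almost surely, with the joint maximization over $\bxi$ immaterial since $\bxi$ perturbs the $p_z$-profile only through the $o(1)$ pending term.

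For part (1) I would establish exponential posterior concentration. Integrating out $\bxi$ (and the remaining $p_{z'}$'s, whose contribution is a common positive constant that cancels in the posterior ratio), the sandwich shows the marginal posterior obeys
\begin{align*}
\pi(p_z\mid\HH)\;\propto\;\pi_0(p_z)\,p_z^{n_z}(1-p_z)^{m_z}\,R(p_z),\qquad R(p_z)\in\bigl[(1-p_z)^{r_z},1\bigr].
\end{align*}
Since $r_z/N_z\to0$, the correction $R$ is negligible on the log scale, whereas the Bernoulli exponent concentrates at rate $O(N_z)$: by the uniform limit $M$ above there is $\delta(\varepsilon)>0$ with $M(p_z)\le M(p_{0z})-\delta$ on $\C_\varepsilon^c$. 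Bounding $R\le1$ on $\C_\varepsilon^c$ and $R\ge(1-p_z)^{r_z}$ on a neighborhood $\mathcal{N}\subset\C_\varepsilon$ of $p_{0z}$ with $\pi_0(\mathcal{N})>0$ (which holds since $\pi_0(\C_\varepsilon)>0$), a direct comparison of the posterior mass on $\C_\varepsilon^c$ to that on $\mathcal{N}$ gives
\begin{align*}
\frac{1}{N_z}\log\frac{\pi(p_z\in\C_\varepsilon^c\mid\HH)}{\pi(p_z\in\mathcal{N}\mid\HH)}\;\le\;-\tfrac{\delta}{2}+o(1)\;<\;0
\end{align*}
for all large $N_z$, whence $\pi(p_z\in\C_\varepsilon\mid\HH)\to1$ almost surely.

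The main obstacle is the entanglement of $p_z$ with the nuisance parameter $\bxi$ through the pending weights; the whole argument hinges on Remark \ref{rmk:rho}, which makes $\rho_i\in[0,1]$ uniformly in $\bxi$ and thereby converts that entanglement into the clean two-sided sandwich, reducing everything to a Bernoulli likelihood plus an $o(N_z)$ (log-scale) correction. The only other delicate point is the boundary case $p_{0z}\in\{0,1\}$, where $M$ is maximized at an endpoint and the Kullback--Leibler and coercivity steps must be adapted, although the conclusion still holds.
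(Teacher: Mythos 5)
Your proposal is correct and follows essentially the same route as the paper's proof: both exploit Remark \ref{rmk:rho} to bound the pending-patient factor between $(1-p_z)^{r_z}$ and $1$ uniformly in $\bxi$, invoke the strong law together with $r_z = o(N_z)$, and then compare exponentially separated posterior masses inside and outside the neighborhood (the paper phrases this via the normalized log-likelihood ratio $\eta_N$ and its monotone envelopes $\underline{\eta}_N, \overline{\eta}_N$, which is your limit function $M(p_z)$ argument up to recentering, with the MLE claim following from $\eta_N(\hat{p}_z) \leq 0$). One minor imprecision worth noting: after integrating out $\bxi$, the factors from the other doses are $\bxi$-dependent rather than ``a common positive constant,'' but since your sandwich holds pointwise in $\bxi$, the marginal correction $R(p_z)$ still lies in $[(1-p_z)^{r_z},1]$ and your argument goes through unchanged.
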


The proof is given in Appendix \ref{supp:sec:convergence}.
As a consequence of Lemma \ref{lem:consistency}, we have the following convergence theorem for interval-based and curve-free TITE designs.


\begin{thm}[convergence]
\label{thm:convergence}
Suppose the conditions in Lemma \ref{lem:consistency} are met. If there is a dose $d^*$ satisfying $p_{0d^*} \in (p^* - \epsilon_1, p^* + \epsilon_2)$, and $d^*$ is also the only dose such that $p_{0d^*} \in [p^* - \epsilon_1, p^* + \epsilon_2]$, then dose allocations in interval-based and curve-free TITE designs converge almost surely to $d^*$.
\end{thm}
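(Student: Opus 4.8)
The plan is to show that, in the large-sample limit, the stochastic allocation process behaves like a deterministic up-and-down rule that is absorbed at $d^*$, where ``converges almost surely to $d^*$'' is read as $Z_n \to d^*$ a.s.\ (eventual absorption). First I would translate the hypotheses into a clean separation of the true toxicity profile. By monotonicity $p_{01} \le \cdots \le p_{0J}$ together with the assumption that $d^*$ is the \emph{only} dose with $p_{0d^*} \in [p^* - \epsilon_1, p^* + \epsilon_2]$, every $z \ne d^*$ has $p_{0z} \notin [p^* - \epsilon_1, p^* + \epsilon_2]$; monotonicity then forces $p_{0z} < p^* - \epsilon_1$ strictly for $z < d^*$ and $p_{0z} > p^* + \epsilon_2$ strictly for $z > d^*$, while $p_{0d^*}$ lies strictly inside $(p^* - \epsilon_1, p^* + \epsilon_2)$. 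Hence every true probability is bounded away from the boundaries of $I_{\textS}$.

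Next I would establish \emph{decision consistency}: for any dose $z$ along which $N_z \to \infty$, the design's decision when sitting at $z$ converges almost surely to escalate if $z < d^*$, stay if $z = d^*$, and de-escalate if $z > d^*$. For BOIN/CCD-type rules this follows from Lemma \ref{lem:consistency}(2), since $\hat{p}_z \to p_{0z}$ a.s.\ puts $\hat{p}_z$ eventually on the correct side of the decision boundaries; for mTPI-2/keyboard-type rules it follows from Lemma \ref{lem:consistency}(1), since the posterior of $p_z$ concentrates at $p_{0z}$, so the posterior mass of whichever region (E, S, or D) contains $p_{0z}$ tends to one, and strict separation guarantees the argmax sub-interval lies on the correct side. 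The curve-free structure is what licenses this: inference on $p_z$ uses only the data accrued at dose $z$, which under the true model are i.i.d.\ $\Ber(p_{0z})$ with a negligible pending fraction ($r_z = o(N_z)$).

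The core is then a recurrence analysis. Let $S \subseteq \{1, \ldots, J\}$ be the random set of doses visited infinitely often. Since there are finitely many doses but infinitely many patients, $S \ne \emptyset$; and since these designs move by unit steps, entries into the topmost visited level can only come from a stay or an escalation from the level below. I would show $S = \{d^*\}$ by contradiction. Suppose $b = \max S > d^*$. After a finite time no dose above $b$ is ever visited, so every late entry into $b$ is a stay at $b$ or an escalation from $b-1$; decision consistency at $b$ (valid because $b \in S$) eventually rules out stays, so $b$ is entered infinitely often only by escalation from $b-1$, which forces $b - 1 \in S$. But $b - 1 \ge d^*$, so decision consistency at $b-1$ says the design eventually never escalates there (staying if $b-1 = d^*$, de-escalating if $b-1 > d^*$), a contradiction. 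A symmetric argument rules out $\min S < d^*$. Hence $S = \{d^*\}$, every dose other than $d^*$ is visited only finitely often, and after an almost-surely finite random time all allocations equal $d^*$; that is, $Z_n \to d^*$ almost surely.

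The main obstacle I anticipate is the circular dependence between consistency and allocation: Lemma \ref{lem:consistency} supplies consistent decisions only at doses with $N_z \to \infty$, yet which doses accrue infinitely many patients is precisely what the design determines. The recurrence bookkeeping—the contiguity of $S$ and the fact that an extreme recurrent level can be re-entered only from a recurrent neighbor—is exactly what breaks this circularity, letting me invoke consistency only where it is justified. A secondary point needing care is that decision consistency is applied along the \emph{random} subsequence of visit times to $z$; this is legitimate because, conditional on treatment at $z$, outcomes are i.i.d.\ $\Ber(p_{0z})$ and the pending contribution is asymptotically negligible, so Lemma \ref{lem:consistency} applies irrespective of the adaptive accrual mechanism.
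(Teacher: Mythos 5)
Your proposal is correct and takes essentially the same route as the paper's proof: the paper likewise combines Lemma \ref{lem:consistency} (posterior/MLE consistency at any dose whose visit count tends to infinity) with a recurrent-set argument in the style of \cite{oron2011dose}, defining $\Z = \{z : N_z \to \infty\}$, showing it is nonempty and consecutive, and deriving a contradiction at its extremes to conclude $\Z = \{d^*\}$ and hence almost-sure absorption at $d^*$. The only cosmetic difference is the bookkeeping at the extreme recurrent dose --- the paper argues via exits (if the lowest recurrent dose exceeds $d^*$, consistency forces eventual de-escalation there, pulling the dose below into $\Z$ and contradicting minimality), whereas you argue via entries (the top recurrent dose $b > d^*$ must be entered infinitely often by escalation from $b-1$, which consistency at $b-1 \ge d^*$ eventually forbids) --- but this is the same underlying idea.
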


The convergence theorem has  been established in \cite{lin2018time} with a heuristic proof according to the authors. We provide a full-scale mathematical elaboration in our proof, given in Appendix \ref{supp:sec:convergence}, 
which may provide more theoretical insight for interested readers.
When the condition about $d^*$ is violated, the proof's logic immediately leads to the following results (see also \citealp{oron2011dose}).

\begin{corollary}
\label{corollary:convergence}
(1) If no dose level $d^*$ satisfies $p_{0d^*} \in [p^* - \epsilon_1, p^* + \epsilon_2]$, and $p^* \in [p_{01}, p_{0J}]$, an interval-based and curve-free TITE design will eventually oscillate almost surely between the two doses whose true DLT probabilities straddle the target interval. (2) If there are multiple doses whose true DLT probabilities lie within the interval $(p^* - \epsilon_1, p^* + \epsilon_2)$, an interval-based and curve-free TITE design will converge almost surely to one of these dose levels. However, convergence to the dose level closest to $p^*$  is not guaranteed.
\end{corollary}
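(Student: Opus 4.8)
The plan is to prove both parts by the same device that underlies Theorem~\ref{thm:convergence}: reduce the adaptive, stochastic dose sequence to a \emph{limiting deterministic up-and-down walk} whose step at each dose is dictated by Lemma~\ref{lem:consistency}. Let $\mathcal{V}$ be the (random) set of doses visited infinitely often. Since there are finitely many doses and infinitely many enrollments, $\mathcal{V} \neq \emptyset$ almost surely, and every $z \in \mathcal{V}$ has $N_z(\tau) \to \infty$. With $r_z = o(N_z)$, Lemma~\ref{lem:consistency} yields $\hat{p}_z \to p_{0z}$ a.s.\ (and the posterior of $p_z$ concentrating at $p_{0z}$), so the interval-based, curve-free decision rule evaluated at $z$ returns a single decision for all sufficiently large $N_z$, \emph{provided} $p_{0z}$ lies strictly interior to one of the escalate/stay/de-escalate regions. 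Both hypotheses are tailored to guarantee this strict interiority: in part (1) no $p_{0z}$ meets $I_{\textS}$, and in part (2) the relevant $p_{0z}$ lie in the open interval $(p^* - \epsilon_1, p^* + \epsilon_2)$, so no recurrent dose sits on a decision boundary.

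For part (1), monotonicity of $p_{01} \le \cdots \le p_{0J}$ together with $p^* \in [p_{01}, p_{0J}]$ and the absence of any dose inside $I_{\textS}$ forces a unique adjacent straddling pair $d_L$ and $d_R = d_L + 1$ with $p_{0 d_L} < p^* - \epsilon_1$ and $p_{0 d_R} > p^* + \epsilon_2$. I would first show $\max \mathcal{V} \le d_R$: if some recurrent dose exceeded $d_R$, then its de-escalation (eventually deterministic, since it is overly toxic) would force the dose just below it to be recurrent as well, and that dose --- also above or equal to $d_R$ --- would itself eventually de-escalate, so the process could never climb back, contradicting recurrence of the higher dose. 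The symmetric argument gives $\min \mathcal{V} \ge d_L$. Finally, because $d_L$ escalates and $d_R$ de-escalates in the limit, once the estimates at these two doses have stabilized the trajectory is pinned to the two-state cycle $d_L \to d_R \to d_L$; hence $\mathcal{V} = \{ d_L, d_R \}$ and the allocation oscillates between the straddling doses almost surely.

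For part (2), the doses with $p_{0z} \in (p^* - \epsilon_1, p^* + \epsilon_2)$ form a contiguous block by monotonicity, and each is a limiting \emph{stay} dose. Here the key observation is that a recurrent stay dose is absorbing: once its decision has stabilized to ``S'', the process never leaves it, so oscillation among several stay doses is impossible and $\mathcal{V}$ must be a singleton. Any recurrent dose below (resp.\ above) the block escalates (resp.\ de-escalates) in the limit and therefore pushes the trajectory into the block, where it is absorbed at the first stay dose whose decision stabilizes. This proves almost-sure convergence to a single in-interval dose. To see that this dose need not be the one closest to $p^*$, I would exhibit a configuration with two in-interval doses in which, with positive probability, the early (random) outcomes drive the process to the in-interval dose that is farther from $p^*$ and stabilize the ``S'' decision there; since the rule only tests membership in $I_{\textS}$ and never compares $|p_{0z} - p^*|$ across acceptable doses, no self-correction toward the nearer dose occurs.

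The main obstacle I anticipate is the rigorous handling of the ``limiting deterministic walk'': Lemma~\ref{lem:consistency} is an almost-sure statement that holds for each recurrent dose only after a \emph{random} time, so the delicate point is to combine these times across the (random) recurrent set and along the adaptively generated trajectory without circularity --- in particular, the claims that a dose is recurrent and that its estimate has converged are mutually dependent and must be disentangled on the event $\{ z \in \mathcal{V} \}$. A secondary technical nuisance is the boundary case $p_{0z} \in \{ p^* - \epsilon_1, p^* + \epsilon_2 \}$, where the decision rule may fail to stabilize; the open-interval hypothesis in part (2) and the strict exclusion in part (1) are precisely what rule this out, and I would state this exclusion explicitly at the outset.
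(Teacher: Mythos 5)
Your proposal is correct and follows essentially the same route as the paper: the paper gives no separate proof of this corollary, stating only that it follows immediately from the logic of the proof of Theorem \ref{thm:convergence} (after \citealp{oron2011dose}), and that logic is precisely your argument --- define the recurrent set of doses, invoke Lemma \ref{lem:consistency} so that decisions at recurrent doses become eventually deterministic, and derive contradictions at the extremes of the recurrent set, with stay doses absorbing and straddling escalate/de-escalate pairs forcing oscillation. Your two anticipated technical points (disentangling the random convergence times from recurrence, and excluding boundary values of $p_{0z}$) are handled at the same level of informality in the paper's own proof of the theorem, so your write-up is, if anything, more explicit than the source.
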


Next, the following lemma establishes the consistency of the dose-finding decisions in interval-based and curve-free POD designs.

\begin{lemma}[consistency]
\label{lem:consistency_POD}
Suppose $d$ is the current dose, which is neither the lowest dose nor the highest dose.
Suppose $\A^*$ is the dose decision function of an  interval-based and curve-free complete-data design, and the conditions in Lemma \ref{lem:consistency} are met.
(1) If $p_{0d} \in (p^* - \epsilon_1, p^* + \epsilon_2)$, then $\exists N_{0d} > 0$, when $N_d > N_{0d}$, $\Pr(A = d \mid \HH) = 1$ almost surely.
 (2) If $p_{0d} < p^* - \epsilon_1$, then $\exists N_{0d} > 0$, when $N_d > N_{0d}$, $\Pr(A = d+1 \mid \HH) = 1$ almost surely.
(3) If $p_{0d} > p^* + \epsilon_2$, then $\exists N_{0d} > 0$, when $N_d > N_{0d}$, $\Pr(A = d-1 \mid \HH) = 1$ almost surely.
\end{lemma}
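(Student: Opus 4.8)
The plan is to read off the conclusion directly from the POD formula \eqref{eq:post_decision}. Since $\Pr(A=a\mid\HH)$ is the total posterior predictive weight of all imputations $\by_{\mis}$ that map to decision $a$ under $\A^*$, and these weights sum to one, it suffices to show that for $N_d$ large \emph{every} imputation carrying positive mass yields the \emph{same} completed-data decision: stay in case (1), escalate in case (2), de-escalate in case (3). The whole problem thus reduces to controlling $\A^*[(\by_{\obs},\by_{\mis}),\bz]$ uniformly over all admissible $\by_{\mis}$, and no genuine averaging over $\bY_{\mis}$ is needed once that uniform control is in place.

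First I would reduce the completed-data decision to a single scalar. Because $\A^*$ is curve-free and interval-based, the decision issued from the current dose $d$ depends on the completed data only through the dose-$d$ counts. Writing $S_d=\sum_{i:\,Z_i=d,\,B_i(\tau)=0}y_{\mis,i}\in\{0,\ldots,r_d\}$ for the number of imputed DLTs among the pending patients at dose $d$, the completed DLT count is $n_d+S_d$ and the completed empirical rate is $q\triangleq(n_d+S_d)/N_d$. For the BOIN/CCD-type rules the decision is a direct comparison of $q$ to fixed boundaries (with $\lambda^{\tL},\lambda^{\tR}$ delimiting the stay region); for the mTPI-2/keyboard-type rules the decision is $\argmax_k\Pr(p_d\in I_k\mid\text{completed data})$, and with the completed counts the posterior is $\Beta(n_d+S_d+1,N_d-n_d-S_d+1)$, whose mean is $\approx q$ and whose variance is at most $1/[4(N_d+3)]$ uniformly in $S_d$. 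Hence this posterior concentrates at $q$, and the $\argmax$ selects the interval whose interior contains the limiting value of $q$. In either case there are fixed endpoints $p^*-\epsilon_1<p^*+\epsilon_2$ (the boundaries of $I_{\textS}$) such that, once $N_d$ is large and $q$ is bounded away from them, $q<p^*-\epsilon_1$ forces decision $d+1$, $q\in(p^*-\epsilon_1,p^*+\epsilon_2)$ forces decision $d$, and $q>p^*+\epsilon_2$ forces decision $d-1$.

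Second I would show the completed rate concentrates at $p_{0d}$ uniformly over all imputations. Let $\tilde{n}_d=\sum_{i:\,Z_i=d}Y_i$ be the (unobserved) total number of true DLTs at dose $d$; by the strong law, consistent with Lemma \ref{lem:consistency}, $\tilde{n}_d/N_d\to p_{0d}$ almost surely as $N_d\to\infty$. Observed DLTs are exactly those true DLTs whose event time has already been reached, so $0\le\tilde{n}_d-n_d\le r_d$, while trivially $0\le S_d\le r_d$. Combining these gives
\[
\frac{\tilde{n}_d-r_d}{N_d}\;\le\;q\;\le\;\frac{\tilde{n}_d+r_d}{N_d},
\]
so that $\sup_{S_d}\,|q-p_{0d}|\le r_d/N_d+|\tilde{n}_d/N_d-p_{0d}|$. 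Since $r_d=o(N_d)$, the right-hand side tends to $0$ almost surely, and the bound is uniform over every admissible imputation $\by_{\mis}$.

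Finally I would assemble the two pieces. In case (1), $p_{0d}$ lies strictly inside $(p^*-\epsilon_1,p^*+\epsilon_2)$; by the uniform concentration there is (almost surely) an $N_{0d}$ such that for all $N_d>N_{0d}$ every completed rate $q$ lies in $(p^*-\epsilon_1,p^*+\epsilon_2)$ and is bounded away from its endpoints, whence every imputation yields decision $d$ and \eqref{eq:post_decision} gives $\Pr(A=d\mid\HH)=1$. Cases (2) and (3) are identical with $p_{0d}$ strictly below $p^*-\epsilon_1$ (forcing $d+1$) or strictly above $p^*+\epsilon_2$ (forcing $d-1$); that $d$ is neither the lowest nor the highest dose guarantees $d\pm1$ are legitimate doses. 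The main obstacle is the first step: making rigorous, and uniform over the growing family of $2^{r_d}$ imputations, the claim that the mTPI-2/keyboard $\argmax$ rule on completed data coincides with interval-membership of the empirical rate $q$. This is handled by the uniform Beta-concentration estimate above (Chebyshev with variance $\le 1/[4(N_d+3)]$), valid for all $q$ in a compact subinterval of the stay region, and it parallels the argument already used in the proof of Theorem \ref{thm:convergence}; the remaining steps are the elementary squeeze of the preceding paragraph.
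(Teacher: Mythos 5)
Your proposal is correct, and it follows the same overall decomposition as the paper's proof: both arguments reduce $\Pr(A=a\mid\HH)$ via \eqref{eq:post_decision} to showing that, for $N_d$ large, \emph{every} admissible imputation $\by_{\mis}\in\{0,1\}^{r_d}$ yields the same complete-data decision, and both obtain the required uniformity from the strong law of large numbers together with $r_d=o(N_d)$, since the imputed DLT count $S_d$ can shift the completed empirical rate by at most $r_d/N_d$. Where you genuinely diverge is in how the complete-data decision is pinned down for the posterior-interval rules (mTPI-2/keyboard type). The paper reruns the likelihood-ratio machinery of Lemma \ref{lem:consistency}, bounding $\eta_N(p;\by_{\obs},\by_{\mis})=\frac{1}{N}\log\left[L(p_0\mid\by_{\obs},\by_{\mis})/L(p\mid\by_{\obs},\by_{\mis})\right]$ uniformly over $s=0,\ldots,r$, which gives posterior concentration $\pi(p\in\C_{\epsilon_0}\mid\by_{\obs},\by_{\mis})\geq 0.99$ for \emph{any} prior satisfying the positivity condition of Lemma \ref{lem:consistency}; you instead use the conjugate $\Beta(n_d+S_d+1,\,N_d-n_d-S_d+1)$ posterior and Chebyshev with the uniform variance bound $1/[4(N_d+3)]$. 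Your route is more elementary and in one respect more explicit than the paper's (your squeeze $|q-p_{0d}|\leq r_d/N_d+|\tilde{n}_d/N_d-p_{0d}|$ spells out what the paper compresses into ``we can show''), and your observation that posterior mass exceeding $1/2$ on $I_{\textS}$ forces the $\argmax$ is the right way to close that step. The trade-off is generality: your Chebyshev step is tied to the $\Beta(1,1)$ prior, whereas the lemma inherits the hypotheses of Lemma \ref{lem:consistency}, which permit an arbitrary prior with positive mass on every neighborhood of $p_{0d}$. To cover that full generality, the Beta-concentration step should be replaced by (or cited as) the uniform-in-$s$ likelihood-ratio argument; your MLE-based branch for BOIN/CCD-type rules needs no change.
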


See Appendix \ref{supp:sec:convergence} 
for the proof.
As a result, the convergence or oscillation results of dose allocations (Theorem \ref{thm:convergence} and Corollary \ref{corollary:convergence}) also hold for interval-based and curve-free POD designs.

For point-based designs or curve-based designs,
the consistency and convergence results require additional assumptions.
We direct the readers to \cite{cheung1999sequential} for an example under the TITE-CRM setting.

\subsection{Coherence Principles}
\label{sec:coherence}

The coherence principles are another quality criterion for dose-finding designs motivated by ethical concerns in trial conduct.
\cite{cheung2005coherence} introduced a coherence condition for time-to-event designs, which states that a time-to-event design should not de-escalate from time $\tau$ to $\tau+ \tau'$ if no toxicity occurs during $[\tau, \tau+ \tau')$, and  it should not escalate from time $\tau$ to $\tau+\tau'$ if a toxicity occurs within $[\tau, \tau+ \tau')$ (for $\tau' \rightarrow 0^+$).
The formal definition is given below.

\begin{definition}[\citealp{cheung2005coherence}]
\label{def:coherence}
A time-to-event design $\A$ is \emph{coherent} if 
 (1) for any $\tau, \tau' > 0$,
\begin{align*}
\text{Pr}_{\A} \big\{ \A[\HH(\tau + \tau')]  < \A[\HH(\tau)] \mid \tY_i(\tau + \tau') - \tY_i(\tau) = 0 \text{ for all $i$} \big\} = 0;
\end{align*}
and (2)  for any $\tau > 0$,
\begin{align*}
\lim_{\tau' \rightarrow 0^+} \text{Pr}_{\A} \big\{ \A[\HH(\tau + \tau')]  > \A[\HH(\tau)] \mid \tY_i(\tau + \tau') - \tY_i(\tau) = 1 \text{ for some $i$} \big\} = 0.
\end{align*}
\end{definition}

\cite{cheung2005coherence} showed that the TITE-CRM design is coherent if the weight $\rho(v_i \mid z_i, \bxi)$ is continuous and nondecreasing in $v_i$, which is automatically satisfied under the proposed construction of $\rho$ (see Equation \ref{eq:rho}).
In contrast, interval-based and curve-free designs only use observations at the current dose to make dose-finding decisions thus may be incoherent in the sense of Definition \ref{def:coherence}. For example, 
consider target DLT rate $p^* = 0.2$.
Assume for two adjacent patients, the sequences of dose assignments $\bz =  (2, 1)$ and DLT outcomes $\by = (1, 0)$. 
Using the BOIN or TITE-BOIN design with default hyperparameters, the 3rd patient is assigned to dose level 2. Suppose by the time the 4th patient is enrolled, the 3rd patient has finished DLT assessment with no event. However, since the empirical DLT rate at dose 2 is 0.5, the 4th patient would be assigned to dose 1. In other words, no toxicity occurs after the enrollment of patient 3, but the dose level de-escalates from 2 to 1, which is incoherent.   See Figure \ref{fig:coherence}. 
This is because information at different dose levels is used to make the dose assignments for patients 3 and 4.

\begin{figure}[h!]
\begin{center}
\includegraphics[width = 0.72\textwidth]{./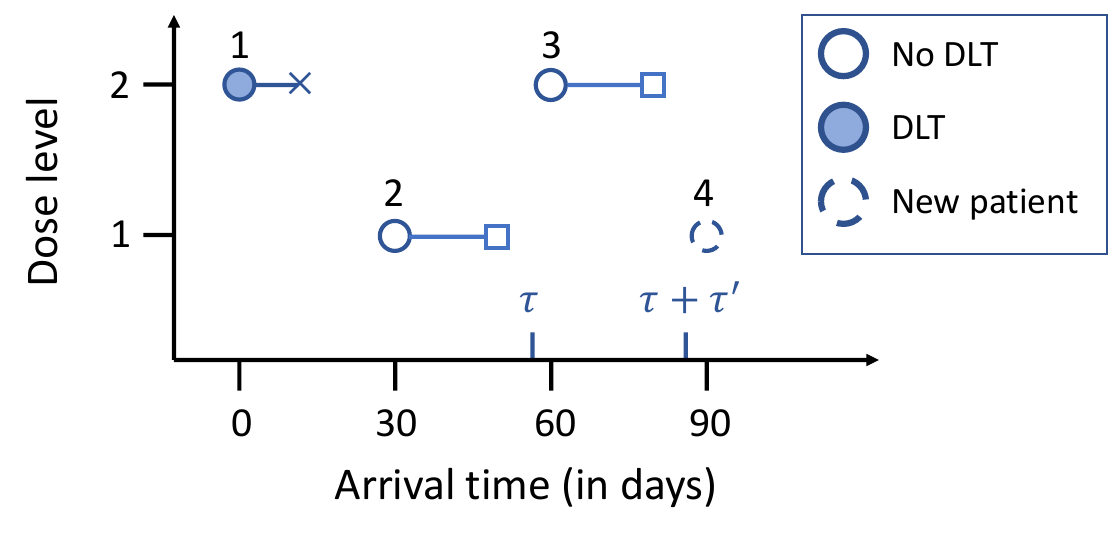} 
\end{center}
\caption{ Example of an incoherent de-escalation made by an interval-based, curve-free design according to Definition 5.1. No toxicity event occurs between times $\tau$ and $\tau + \tau'$, but the dose-assignment decision changes from $\A[\HH(\tau)] = 2$ to $\A[\HH(\tau + \tau')] = 1$. }
\label{fig:coherence}
\end{figure}

To avoid incoherent dose-finding decisions for interval-based and curve-free designs, one may impose coherence as an ad-hoc safety rule.
On the other hand, one may still think such decisions are reasonable and consider alternative coherence conditions for interval-based and curve-free designs, such as the condition given below.
\begin{definition}[Interval coherence]
\label{supp:def:coherence_interval}
An interval-based and curve-free time-to-event design is interval coherent if (1) for any $\tau, \tau' > 0$, if the currently-administrated doses just prior to $\tau$ and $\tau + \tau'$ are the same (denoted by $d$), then
\begin{multline*}
\text{Pr}_{\A} \big\{ \A[\HH(\tau + \tau')]  < \A[\HH(\tau)] \mid \\ 
\tilde{Y}_i(\tau + \tau') - \tilde{Y}_i(\tau) = 0 \text{ for all $i$ s.t. $Z_i = d$} \big\} = 0;
\end{multline*}
and (2) for any $\tau > 0$, suppose the currently-administrated doses just prior to $\tau$ is $d$, then
\begin{multline*}
\lim_{\tau' \rightarrow 0^+} \text{Pr}_{\A} \big\{ \A[\HH(\tau + \tau')]  > \A[\HH(\tau)] \mid \\
 \tY_i(\tau + \tau') - \tY_i(\tau) = 1 \text{ for some $i$ s.t. $Z_i = d$} \big\} = 0,
\end{multline*}
if the  currently-administrated doses just prior to $\tau + \tau'$ is also $d$.
\end{definition}

 According to Definition \ref{supp:def:coherence_interval}, the decision at time $\tau + \tau'$ in Figure \ref{fig:coherence} will not be considered incoherent because of the change of the currently-administrated dose from $\tau$ to $\tau + \tau'$. 
In Appendix \ref{supp:sec:coherence}, 
we show an example that an interval-based and curve-free TITE design is interval coherent in the sense of Definition \ref{supp:def:coherence_interval}.

\cite{liu2015bayesian} defined another coherence condition for dose-finding designs, which states that 
a dose-finding design is \emph{long-term memory coherent} if
 it does not de-escalate (or escalate) when the observed toxicity rate in the accumulative cohorts at the current dose is lower (or higher) than the target toxicity rate.
In other words, suppose the current dose is $d$, then a design is long-term memory coherent if it does not de-escalate (or escalate) when $n_{d} / (n_d + m_d) < p^*$ (or $> p^*$).
Under this definition, time-to-event designs may be incoherent in escalation because the pending outcomes may contribute additional evidence to counteract the toxic outcomes. If we think such an escalation is reasonable, we may define that an escalation is incoherent only if $n_{d} / (n_d + m_d + r_d)  > p^*$.
Alternatively, as in \cite{lin2018time}, we may also assign each pending outcome a weight $\rho$ and calculate an adjusted toxicity rate $\tilde{p}_d = n_{d} / [n_d + m_d + \sum_{i = 1}^N \rho_i \bone(z_i = d, b_i = 0)]$. For example, $\rho_i = v_i / W$. De-escalation is incoherent if $\tilde{p}_d < p^*$, and escalation is incoherent if $\tilde{p}_d > p^*$. However, the specification of the weight can be arbitrary.

Lastly, \cite{cheung2005coherence} noted that ad-hoc rules such as those in Section \ref{sec:practical} may affect the coherence of a design. For example, suppose that a cohort-based enrollment rule is imposed, and a design enrolls six patients to dose 1 as a group. Further, assume that the target DLT rate is $p^* = 0.3$, patients 1--5 have no toxicity events, and patient 6 has DLT. 
Then, if the design recommends escalation to dose 2 right after the toxicity event of patient 6, this is incoherent. In such situations, our recommendation is to use the coherence (or interval coherence) principle to assess the original, unmodified version of a design to ensure that at least,  the design makes sense before implementing the ad-hoc rules.

\subsection{Overdosing Decisions and Incompatible Decisions}
\label{sec:incompatible}

To better understand the decision rules in TITE and POD designs, we introduce the concept of overdosing decisions and incompatible decisions. 
A design's frequency of making such decisions measures the safety of this design.

\paragraph{Overdosing decisions.}
We call a dose-finding decision $\A(\HH)$ an \emph{overdosing decision} if $\A(\HH) > d^*$, where $d^*$ denotes the MTD.
Similarly, we call $\A(\HH) < d^*$ an underdosing decision.
For example, consider a trial with 2 doses,  target DLT rate $p^* = 0.2$, and true DLT probabilities $(p_1, p_2) = (0.2, 0.5)$. Then, $d^* = 1$ is the MTD, and any decision that allocates a patient to dose 2 is an overdosing decision.
Since the true DLT probabilities are unknown in practice, we cannot check whether a decision is an overdosing/underdosing decision in real-world trials.

The decision rules in TITE designs can be viewed as minimizing various  loss functions associated with overdosing/underdosing decisions.
For example, in the TITE-SPM design (Section \ref{sec:tite_new}), recall that $\gamma$ represents the location of the MTD.
Let $\ell(\gamma, \hat{\gamma} )$ denote the loss of allocating the new patient to dose $\hat{\gamma}$  if $\gamma$ is the true MTD, and consider the 0-1 loss $\ell(\gamma, \hat{\gamma} ) = \bone(\gamma \neq \hat{\gamma})$.
That is, there is a loss for an overdosing/underdosing decision.
Then, the decision of allocating the new patient to $\hat{\gamma} = \arg \max_{\gamma} \pi(\gamma  \mid \HH )$ minimizes the posterior expected loss.

Overdosing decisions apply to both complete-data and time-to-event designs and are inevitable due to random sampling.
As in Figure \ref{fig:safety_concern}(a),
suppose when the 4th patient is enrolled, patients 1--3 have finished DLT assessment (assuming $W = 28$ days) with no event. Then, the 4th patient might be assigned to dose 2, which is actually overly toxic.
One may justify such a decision by sampling error.


\begin{figure}[t]
\center
\begin{tabular}{cc}
\includegraphics[width = 0.49\textwidth]{./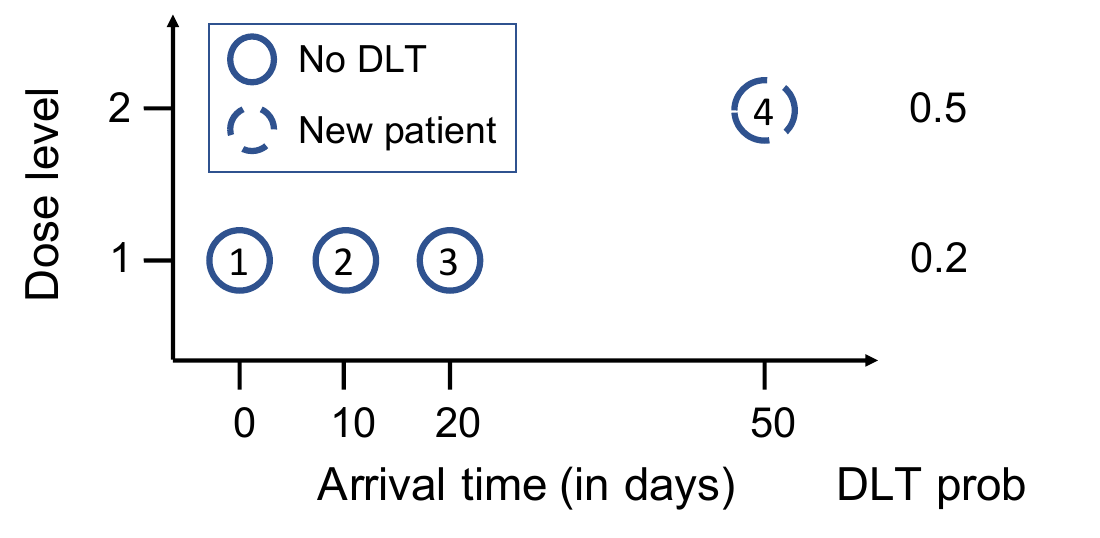} 
&
\includegraphics[width = 0.49\textwidth]{./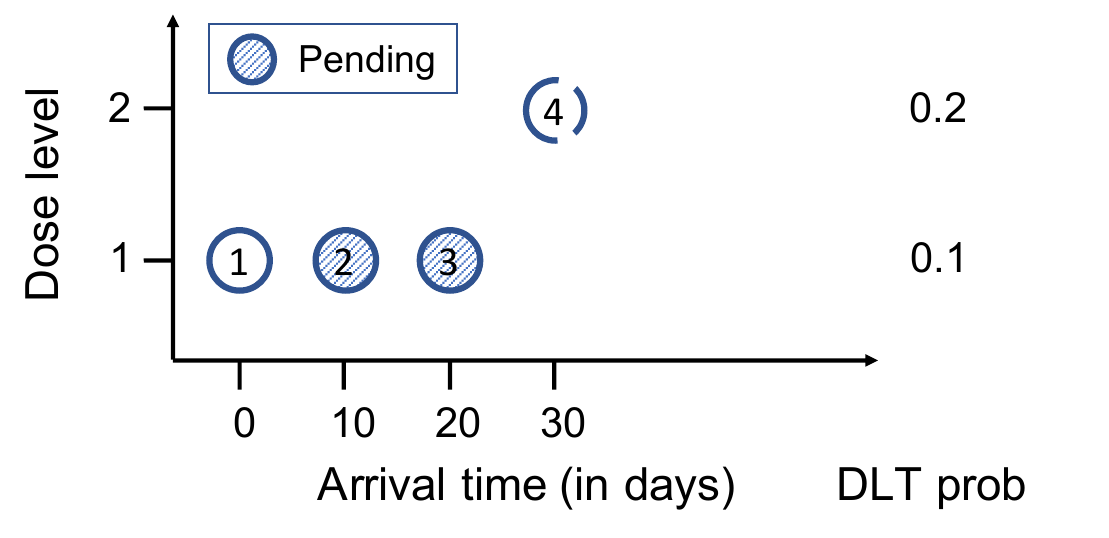} \\
(a) Overdosing decision & (b) Incompatible decision
\end{tabular}
\caption{Examples for (a) an overdosing allocation and (b) an incompatible decision (suppose patient 2 or 3 later experiences DLT within the assessment window).}
\label{fig:safety_concern}
\end{figure}

\paragraph{Incompatible decisions.}
We say that a time-to-event dose-finding decision $\A[\HH(\tau)]$ is \emph{incompatible} with a complete-data decision $\A^*[\HH_{N(\tau)}^*]$ if $\A[\HH(\tau)] \neq \A^*[\HH_{N(\tau)}^*]$. Here, $\HH(\tau) = \{ (\tilde{Y}_i(\tau) , $ $V_i(\tau), $ $Z_i) : i \leq N(\tau) \}$ represents the available time-to-event information at time $\tau$, and $\HH_{N(\tau)}^* = \{ (Y_i, Z_i) :  i \leq N(\tau) \}$ represents the complete toxicity information for the first $N(\tau)$ patients that would have been observed if these patients had completed their follow-up.
For example, as in Figure \ref{fig:safety_concern}(b), consider a trial with 2 doses,  target DLT rate $p^* = 0.2$, true DLT probabilities $(p_1, p_2) = (0.1, 0.2)$ and DLT assessment window $W = 28$ days.
Suppose when the 4th patient arrives, patient 1 have finished DLT assessment with no event, and patients 2 and 3 are still being followed without definitive outcomes. Using a time-to-event design, the 4th patient might be treated at dose 2. However, it is possible that patient 2 or 3 could later experience DLT, making the dose escalation for patient 4 incompatible with a decision that uses complete data of patients 1--3 if the decision making was conducted after patients 2 and 3 completed their follow-up.
In practice, we can check whether a time-to-event decision at time $\tau$ is incompatible with a complete-data decision after all patients enrolled before $\tau$ have finished their DLT assessment.

The decision rules in POD designs can be viewed as minimizing a loss function associated with incompatible decisions.
For example, in the POD-TPI design (Section \ref{sec:pod_example}), recall that $A$ denotes the random mTPI-2 decision in the presence of pending outcomes.
Let $\ell(A, \hat{A} )$ denote the loss of making decision $\hat{A}$ if $A$ is the true complete-data decision, and consider the 0-1 loss $\ell(A, \hat{A} ) = \bone(A \neq \hat{A})$.
Then, the decision $a^* = \argmax_a \Pr(A = a \mid \HH)$ minimizes the posterior expected loss.

Incompatible decisions only apply to time-to-event designs and can be avoided by following patients for the full length of the assessment window.
We note that an incompatible decision is not necessarily an overdosing/underdosing decision.
In the Figure \ref{fig:safety_concern}(b) example, the incompatible decision actually allocates patient 4 to the MTD. 
Still, since the true DLT probabilities are unknown in practice, such a decision cannot be justified based on the observed data, and the safety review boards should express concerns regarding the decision.
Incompatible and overly aggressive decisions ($\A[\HH(\tau)] > \A^*[\HH_{N(\tau)}^*]$) are a major concern for drug companies and regulatory agencies to use and approve time-to-event designs.
Nevertheless, we next show that such decisions may be eliminated in POD designs through a suspension rule.

\section{Practical Considerations}
\label{sec:practical}

\subsection{Safety Rules}
\label{sec:safety_rule}

In addition to statistical modeling, safety rules play an important role in dose-finding designs.  For example, when a dose is deemed overly toxic, future dose assignment to this dose or higher doses should be prohibited due to ethical concerns. If the lowest dose is too toxic, the trial should be terminated and redesigned using lower doses.
From a Bayesian perspective, toxicity can be quantified using posterior probability. Similar to \cite{ji2010modified} and \cite{yuan2016bayesian}, we consider the following safety rules.
\begin{description}
[noitemsep,leftmargin=*]
\item[Safety Rule 1 (Dose Exclusion)] At any moment in the trial, if  $n_z + m_z \geq 3$ and $\Pr(p_z > p^* \mid \data) > \nu$ for a pre-specified threshold $\nu$ close to 1, exclude dose $z$ and higher doses from the trial;
\item[Safety Rule 2 (Early Termination)] At any moment in the trial, if $n_1 + m_1 \geq 3$ and $\Pr(p_1 > p^* \mid \data) > \nu$ for a pre-specified threshold $\nu$ close to 1, terminate the trial due to excessive toxicity.
\end{description}
From a frequentist perspective,
a dose $z$ can be considered overly toxic if the lower one-sided $\nu$ confidence interval of $p_z$ does not cover $p^*$.

It is possible that a dose is considered overly toxic when some toxicity outcomes at this dose are still pending. In this case, we allow this dose and upper doses to be re-opened if the pending outcomes turn out to be safe and suggest this dose is no longer overly toxic.
If the lowest dose is considered overly toxic with some pending outcomes, temporarily suspend the trial. If later pending outcomes are observed and suggest the lowest dose is no longer overly toxic, we resume the trial; otherwise, the trial is permanently terminated.
There is a positive probability that a dose is excluded even if it is actually safe, or the trial is terminated early even when the lowest dose is safe. This is the type I error associated with the safety rules.

\subsection{Enrollment and Suspension}
\label{sec:suspension}
Many existing designs (e.g., 3+3, mTPI, BOIN, TITE-BOIN, TITE-TPI, and POD-TPI) employ a cohort-based enrollment, in which more than one patient (and typically three patients) are enrolled as a cohort. 
The patients in the same cohort are always treated by the same dose.
Cohort-based enrollment is especially helpful for
complete-data designs, as the
trial can take exceedingly long to complete if each patient has to wait for the completion of the DLT assessment for the previous patient.
Cohort-based enrollment can also help avoid overly fast dose escalation and can be more convenient for trial administration.
For most existing designs, a common cohort size is 2 to 4.


For time-to-event designs, the dose-assignment decision for a new patient may be uncertain and risky if the toxicity outcomes of many previous patients have not yet been observed.
For example, three patients have been treated at the lowest dose and have been followed up for a while, but none of them have completed the DLT assessment. In this case, it might be too conservative to treat the fourth patient at the lowest dose as it might be subtherapeutic, but it is also too risky to treat the fourth patient at a higher dose as no safe outcome has been observed. Therefore, it may be more sensible to temporarily suspend the trial until at least one outcome at the current dose has been observed.
Reflecting this situation, a possible suspension rule is the following.
\begin{description}
[noitemsep,leftmargin=*]
\item[Suspension Rule 1] Suppose the current dose is $d$. If $n_d + m_d = 0$, suspend enrollment.
\end{description}

Other types of suspension rules have been considered in the literature. For example,  \cite{yuan2018time} and \cite{guo2019rtpi} proposed suspension rules based on the number of pending outcomes.
\begin{description}
[noitemsep,leftmargin=*]
\item[Suspension Rule 2 (Fixed Suspension Rule)] Suppose the current dose is $d$. If $r_d > C$ for a pre-specified threshold $C$, suspend enrollment.
\end{description}
Here, $C$ can be a fixed number (e.g. $C = 3$, \citealp{guo2019rtpi}) or a portion of the total number of patients at the current dose (e.g. $C = N_d / 2$, \citealp{yuan2018time}).
Alternatively, as in \cite{zhou2019pod}, the POD in Equation \eqref{eq:post_decision} can be used to calibrate the trade-off between the reduction of trial suspension and the reduction of patient risk.
\begin{description}
[noitemsep,leftmargin=*]
\item[Suspension Rule 3 (Probability Suspension Rule)] Let $a^*$ denote the dose-assignment decision made by a time-to-event design in the presence of pending outcomes. If $\Pr(A < a^* \mid \HH) > q_{a^*}$ for a pre-specified threshold $q_{a^*}$, suspend enrollment.
\end{description}
 Here, $a^*$ can be based on inference about $\bp$ or $A$ for TITE or POD designs, respectively. 
For example, for POD-TPI, $a^* = \argmax_a \Pr(A = a \mid \HH)$.
If the posterior probability that a more conservative decision than $a^*$ should be made, $\Pr(A < a^* \mid \HH)$, is higher than some threshold $q_{a^*}$,  enrollment is suspended.
The Probability Suspension Rule guarantees that in the presence of pending outcomes, the chance of making an aggressive incompatible decision is upper bounded.
The threshold $q_{a^*}$ can be different for different $a^*$.

We note that the speed of the trial is solely determined by the suspension rule, i.e., how many times an eligible patient is turned away. If no pending outcome is allowed for making the dose-assignment decision ($C = 0$), the time-to-event design becomes a design using complete outcomes. On the other hand, if a trial never suspends, all eligible patients are enrolled and treated immediately, and the trial achieves its optimal speed.
The Probability Suspension Rule allows a meaningful calibration between trial speed and safety of the design. This will be clear later in our numerical examples.

 In summary, safety and suspension rules reduce the chance of overdosing and incompatible decisions, respectively. The rules discussed in this section are applicable to both TITE and POD designs. Although the decisions in POD designs are based on POD, the posterior distribution of $\bp$ is also obtained (see Figure \ref{fig:illustration}), which can be used to calculate the posterior probability of overdosing required by the safety rules. Similarly, although the decisions in TITE designs are based on the inference about $\bp$, one could calculate the POD and use the Probability Suspension Rule to decide when to suspend a trial. 

After patient enrollment is terminated and all DLT assessments are finished, the trial completes, and the next step is to recommend an MTD. We summarize several methods for MTD selection in Appendix \ref{supp:sec:sel_mtd}.

\section{Simulation Studies}
\label{sec:simulation}

\subsection{Simulation Set-up} 
We conduct simulation studies to compare the operating characteristics of some TITE and POD designs that we have discussed in the previous sections.
We consider 18 dose-toxicity scenarios with target DLT probability $p^* = 0.2$ or $0.3$ and $J = 7$ dose levels. The 18 representative scenarios consist of the 16 scenarios reported in \cite{yuan2018time} and 2 additional scenarios that cover various MTD locations. The scenarios are summarized in Appendix Table \ref{tbl:simu_DLT_prob}.

We assume the DLT assessment window $W = 28$ days and use a maximum sample size of $N^* = 36$ patients.
The time-to-toxicity for each patient, $T_i$, is generated from a Weibull distribution with shape $\zeta_{1z}$ and scale $\zeta_{2z}$, given the patient is treated by dose $z$. That is,
\begin{align*}
(T_i \mid Z_i = z) \sim \Weibull(\zeta_{1z}, \zeta_{2z}).
\end{align*}
The parameters $\zeta_{1z}$ and $\zeta_{2z}$ are chosen
such that $\Pr(T_i \leq W \mid Z_i = z) = p_z$ and $\Pr(T_i \leq W^* \mid Z_i = z) = (1 - q)p_z$. 
Here $ W^* \in (0, W)$ and $q \in (0, 1)$ are arbitrary numbers, meaning if a toxic outcome occurs within the assessment window, with probability $q$ it occurs within the interval $(W^*, W]$. 
The time between the accrual of two consecutive patients is generated from an exponential distribution with rate $\delta$, which means the average wait time between two consecutive patients is $1 / \delta$.
We consider the following three settings with different time-to-toxicity and accrual rate profiles:
\begin{description}
[noitemsep,nolistsep,leftmargin=*]
\item[Setting 1 (default)] inter-arrival time is 10 days, and 50\% of DLTs occur in the second half of the assessment window. This corresponds to $\delta = 0.1$, $W^* = W/2$ and $q = 0.5$;
\item[Setting 2 (more late-onset DLTs)] inter-arrival time is 10 days, and 80\% of DLTs occur in the last quarter of the assessment window. This corresponds to $\delta = 0.1$, $W^* = 3W/4$ and $q = 0.8$;
\item[Setting 3 (faster accrual)] inter-arrival time is 5 days, and 50\% of DLTs occur in the second half of the assessment window. This corresponds to $\delta = 0.2$, $W^* = W/2$ and $q = 0.5$.
\end{description}
We do not consider another setting with a longer DLT assessment window, as it would be equivalent to faster accrual after rescaling the time.

\subsection{Design Specifications}

We consider the TITE-TPI, TITE-CRM, TITE-BOIN (Section \ref{sec:tite_design}) and POD-TPI (Section \ref{sec:pod_design}) designs as examples of different types of time-to-event designs.
In addition, we include the mTPI-2, CRM, and BOIN designs in the comparison as representative complete-data designs.
We require that all designs start from the lowest dose. For TITE-CRM and CRM, dose skipping is prohibited in escalations (i.e., dosage could not increase by more than one level at a time), but there is no restriction on de-escalations \citep{goodman1995some}.
Full details about the design specifications are reported in Appendix \ref{supp:sec:design_spec}.

For a fair comparison, in the default setting, we use the same time-to-toxicity model and safety, enrollment, and suspension rules for all time-to-event designs.
These include (1) a uniform distribution model for $f_{T \mid Z, Y}(t \mid z, Y = 1)$, (2) Safety Rules 1 and 2 in Section \ref{sec:safety_rule} with $\nu = 0.95$, (3) cohort-based enrollment in groups of 3, and (4) Suspension Rules 1 and 2 in Section \ref{sec:suspension}, which suspend enrollment if the number of pending patients at the current dose $r_d > N_d / 2$ \citep{yuan2018time}.
For the complete-data designs, we (1) impose Safety Rules 1 and 2 with $\nu = 0.95$, (2) enroll patients in cohorts of 3, and (3) suspended enrollment whenever some toxicity outcomes are  pending (i.e., $r_d > 0$). During trial suspension, the available patients are turned away.

For additional comparison and sensitivity analysis, we consider the following variations of some designs:
\begin{enumerate}[noitemsep,nolistsep,leftmargin=*]
\item mTPI-2-LA: in mTPI-2, if the pending outcomes have no impact on final decisions, the decisions are executed right away even if $r_d > 0$.  This is called a look-ahead design and is a special case of POD-TPI in which no suspension is necessary if the probability of a decision is 1.
\item TITE-TPI-NS and POD-TPI-NS: discarding Suspension Rule 2 in the default setting of TITE-TPI and POD-TPI.
We still keep Suspension Rule 1 to avoid hasty decisions when no toxicity outcomes have ever been observed at the current dose.
\item TITE-TPI-PSR and POD-TPI-PSR:
replacing Suspension Rule 2 with Suspension Rule 3 (Probability Suspension Rule) in the default setting of TITE-TPI and POD-TPI.
We use $q_{a^*} \equiv 0.25$ for all $a^*$.
Recall that $a^*$ denotes the dose-assignment decision made by a time-to-event design in the presence of pending outcomes. If the probability of $a^*$ being overly aggressive is greater than 0.25, enrollment is suspended. This reduces the chance of making aggressive incompatible decisions.
\item TITE-TPI-PSR2 and POD-TPI-PSR2: replacing Suspension Rule 2 with Suspension Rule 3 (Probability Suspension Rule) and setting $q_{a^*} \equiv 0$ for all $a^*$, such that enrollment is suspended if there is a positive probability that $a^*$ is overly aggressive.
This completely eliminates the chance of making aggressive incompatible  decisions.
\end{enumerate}

\subsection{Performance Metrics}

The performance of a design is evaluated based on the following metrics.
\begin{enumerate}[noitemsep,nolistsep,leftmargin=*]
\item \textbf{Allocation}, including 
(1.1) percentage of patients treated at the MTD (percentage of correct allocation, PCA);
(1.2) percentage of patients treated at doses above the MTD (percentage of overdosing allocation, POA); 
(1.3) percentage of patients treated at doses below the MTD (percentage of underdosing allocation, PUA).
\item \textbf{Selection}, including 
(2.1) percentage of correct selection (PCS) of the MTD;
(2.2) percentage of dose selection above the MTD (percentage of overdosing selection, POS); and 
(2.3) percentage of dose selection below the MTD (percentage of underdosing selection, PUS). 
\item \textbf{Risk}, which is measured by the relative frequencies of incompatible dose assignment decisions (see Section \ref{sec:incompatible}).
Recall that an incompatible decision refers to a decision that is different from what would have been made if complete outcomes were observed.
There are six types of incompatible decisions, including decisions that (3.1) should be de-escalation based on complete data but are stay (DS); (3.2) should be de-escalation but are escalation (DE); (3.3) should be stay but are escalation (SE); (3.4) should be stay based on complete data but are de-escalation (SD); (3.5) should be escalation based on complete data but are de-escalation (ED); and (3.6) should be escalation based on complete data but are stay (ES).
We are particularly concerned about the first three types of incompatible decisions, DS, DE and SE, which are overly aggressive.
\item \textbf{Speed}, which is measured by the average trial duration (Dur).
\end{enumerate}
Metrics (1.1)--(1.3) assess a design's performance in assigning patients to appropriate doses.
Metrics (2.1)--(2.3) evaluate a design's performance in selecting the right dose as the MTD.
Metrics (3.1)--(3.6) are about the risk associated with allowing patient enrollment in the presence of pending outcomes. In particular, the incompatible decisions of the time-to-event designs are obtained by comparing with their complete-data counterparts. For example, TITE-CRM is compared with CRM, and POD-TPI is compared with mTPI-2.
Metric (4) is about the speed of the trial.

\subsection{Simulation Results}
For each dose-toxicity scenario in Appendix Table \ref{tbl:simu_DLT_prob}, 
we simulate 1,000 trials with each design.
Table \ref{tbl:simu_result} summarizes the results by averaging over the 18 scenarios.
The scenario-specific results under Setting 1 are reported in Appendix \ref{supp:sec:ss_result}. 
The performances of the designs are generally similar if averaged over scenarios, although they may have a large variation across different scenarios. 
The comparison among mTPI-2, TITE-TPI and POD-TPI illustrates the different behaviors of various extensions of the same complete-data design.

\begin{center}
\begin{longtable}{l@{\extracolsep{4pt}}c@{\extracolsep{4pt}}c@{\extracolsep{4pt}}c@{\extracolsep{4pt}}c@{\extracolsep{4pt}}c@{\extracolsep{4pt}}c@{\extracolsep{4pt}}c@{\extracolsep{4pt}}c@{\extracolsep{4pt}}c@{\extracolsep{4pt}}c@{\extracolsep{4pt}}c@{\extracolsep{4pt}}c@{\extracolsep{4pt}}c}
\caption{Summary of simulation results under 18 dose-toxicity scenarios and 3 time-to-toxicity and accrual rate settings. 
Values shown are averages over simulated trials and scenarios.
Seven designs, mTPI-2, TITE-TPI, POD-TPI, CRM, TITE-CRM, BOIN and TITE-BOIN, are compared. 
Variations of some designs are considered for sensitivity analysis.
PCA, POA, PUA, PCS, POS, and PUS are in \%, DS, DE, SE, SD, ED, and ES are in $1/10^{3}$, and Dur is in days.} \label{tbl:simu_result} \\
\hline \hline
\multicolumn{1}{c}{\multirow{2}{*}{Design}} & \multicolumn{3}{c}{Allocation} & \multicolumn{3}{c}{Selection} & \multicolumn{6}{c}{Risk} & Speed \\
\cline{2-4} \cline{5-7} \cline{8-13} \cline{14-14}
& PCA & POA & PUA & PCS & POS & PUS & DS & DE & SE & SD & ED & ES & Dur \\ \hline 
\endfirsthead

\multicolumn{14}{c}%
{ \tablename\ \thetable{} -- continued from previous page} \\
\hline 
\multicolumn{1}{c}{\multirow{2}{*}{Design}} & \multicolumn{3}{c}{Allocation} & \multicolumn{3}{c}{Selection} & \multicolumn{6}{c}{Risk} & Speed \\
\cline{2-4} \cline{5-7} \cline{8-13} \cline{14-14}
& PCA & POA & PUA & PCS & POS & PUS & DS & DE & SE & SD & ED & ES & Dur \\ \hline
\endhead

\hline \multicolumn{14}{r}{{Continued on next page}} \\
\endfoot

\hline \hline
\endlastfoot

\multicolumn{14}{c}{Setting 1} \\ 
mTPI-2 & 34.3 & 22.1 & 43.7 & 51.3 & 16.4 & 32.3 & 0.0 & 0.0 & 0.0 & 0.0 & 0.0 & 0.0 & 633 \\
TITE-TPI & 32.2 & 21.5 & 46.3 & 50.5 & 16.6 & 32.9 & 12.7 & 3.3 & 21.9 & 55.4 & 1.5 & 29.7 & 436 \\
POD-TPI & 32.5 & 24.5 & 43.0 & 51.4 & 18.4 & 30.2 & 20.5 & 4.1 & 23.0 & 23.8 & 0.1 & 10.3 & 437 \\
CRM & 36.3 & 25.4 & 38.4 & 55.5 & 30.1 & 14.4 & 0.0 & 0.0 & 0.0 & 0.0 & 0.0 & 0.0 & 632 \\
TITE-CRM & 34.7 & 27.1 & 38.2 & 54.5 & 30.3 & 15.2 & 27.3 & 0.4 & 19.8 & 11.8 & 0.0 & 30.7 & 439 \\
BOIN & 34.2 & 22.1 & 43.7 & 54.1 & 22.8 & 23.1 & 0.0 & 0.0 & 0.0 & 0.0 & 0.0 & 0.0 & 634 \\
TITE-BOIN & 32.4 & 21.0 & 46.5 & 54.0 & 22.0 & 24.0 & 10.4 & 2.8 & 18.7 & 55.8 & 0.2 & 29.2 & 435 \\
\hdashline
mTPI-2-LA & 33.6 & 22.6 & 43.8 & 50.2 & 16.4 & 33.3 & 0.0 & 0.0 & 0.0 & 0.0 & 0.0 & 0.0 & 560 \\
TITE-TPI-NS & 32.1 & 21.8 & 46.1 & 51.5 & 16.0 & 32.4 & 13.1 & 7.7 & 27.4 & 65.0 & 1.9 & 34.0 & 400 \\
TITE-TPI-PSR & 31.8 & 19.7 & 48.5 & 50.7 & 14.4 & 34.8 & 6.5 & 3.7 & 14.1 & 69.5 & 2.0 & 40.2 & 422 \\
TITE-TPI-PSR2 & 31.9 & 19.2 & 48.9 & 51.0 & 15.1 & 33.9 & 0.0 & 0.0 & 0.0 & 76.0 & 2.0 & 18.5 & 527 \\
POD-TPI-NS & 32.2 & 24.4 & 43.4 & 51.3 & 18.2 & 30.5 & 20.9 & 8.2 & 29.3 & 39.2 & 1.7 & 12.6 & 401 \\
POD-TPI-PSR & 33.0 & 22.5 & 44.5 & 51.1 & 16.9 & 32.0 & 9.8 & 4.1 & 14.9 & 37.4 & 1.0 & 11.6 & 433 \\
POD-TPI-PSR2 & 32.7 & 21.3 & 46.0 & 50.0 & 16.7 & 33.4 & 0.0 & 0.0 & 0.0 & 37.5 & 1.1 & 4.3 & 541 \\
\hline
\multicolumn{14}{c}{Setting 2} \\ 
mTPI-2 & 34.5 & 21.9 & 43.7 & 51.2 & 15.8 & 33.0 & 0.0 & 0.0 & 0.0 & 0.0 & 0.0 & 0.0 & 646 \\
TITE-TPI & 32.0 & 24.5 & 43.5 & 51.8 & 18.2 & 30.0 & 25.1 & 8.4 & 36.0 & 45.8 & 1.5 & 25.6 & 445 \\
POD-TPI & 31.8 & 27.2 & 41.0 & 52.0 & 19.5 & 28.5 & 38.4 & 10.3 & 38.3 & 20.0 & 0.0 & 9.1 & 449 \\
CRM & 35.6 & 25.9 & 38.6 & 53.8 & 30.9 & 15.3 & 0.0 & 0.0 & 0.0 & 0.0 & 0.0 & 0.0 & 644 \\
TITE-CRM & 34.2 & 29.0 & 36.8 & 55.0 & 29.9 & 15.1 & 49.8 & 1.6 & 34.7 & 8.6 & 0.0 & 27.6 & 449 \\
BOIN & 34.4 & 22.0 & 43.7 & 53.9 & 22.6 & 23.5 & 0.0 & 0.0 & 0.0 & 0.0 & 0.0 & 0.0 & 645 \\
TITE-BOIN & 32.4 & 23.7 & 43.9 & 53.8 & 23.6 & 22.6 & 22.1 & 7.1 & 32.2 & 44.8 & 0.2 & 24.5 & 445 \\
\hdashline
mTPI-2-LA & 33.8 & 22.8 & 43.4 & 51.4 & 16.3 & 32.3 & 0.0 & 0.0 & 0.0 & 0.0 & 0.0 & 0.0 & 581 \\
TITE-TPI-NS & 31.4 & 26.1 & 42.5 & 51.0 & 19.0 & 30.0 & 27.7 & 18.6 & 46.1 & 50.5 & 1.7 & 28.7 & 412 \\
TITE-TPI-PSR & 31.7 & 23.5 & 44.8 & 51.4 & 16.8 & 31.7 & 16.7 & 10.9 & 30.6 & 55.0 & 1.8 & 34.6 & 435 \\
TITE-TPI-PSR2 & 31.7 & 20.3 & 48.0 & 50.4 & 15.5 & 34.1 & 0.0 & 0.0 & 0.0 & 61.1 & 2.1 & 17.3 & 548 \\
POD-TPI-NS & 30.7 & 29.8 & 39.5 & 51.4 & 20.9 & 27.7 & 41.8 & 21.9 & 50.0 & 27.8 & 1.5 & 10.4 & 414 \\
POD-TPI-PSR & 31.7 & 26.5 & 41.8 & 50.5 & 19.5 & 29.9 & 26.6 & 12.3 & 33.5 & 26.9 & 1.2 & 10.3 & 446 \\
POD-TPI-PSR2 & 32.8 & 21.9 & 45.3 & 51.0 & 16.0 & 33.1 & 0.0 & 0.0 & 0.0 & 26.3 & 1.0 & 4.6 & 560 \\
\hline
 \multicolumn{14}{c}{Setting 3} \\
mTPI-2 & 34.3 & 21.9 & 43.8 & 51.4 & 15.8 & 32.8 & 0.0 & 0.0 & 0.0 & 0.0 & 0.0 & 0.0 & 472 \\
TITE-TPI & 31.4 & 21.1 & 47.5 & 50.5 & 15.8 & 33.7 & 15.8 & 3.6 & 25.5 & 64.1 & 4.0 & 44.6 & 290 \\
POD-TPI & 31.7 & 23.3 & 45.0 & 51.5 & 17.4 & 31.0 & 18.8 & 4.0 & 25.8 & 42.8 & 0.1 & 26.4 & 294 \\
CRM & 35.6 & 25.2 & 39.2 & 54.6 & 30.0 & 15.4 & 0.0 & 0.0 & 0.0 & 0.0 & 0.0 & 0.0 & 471 \\
TITE-CRM & 34.3 & 27.1 & 38.6 & 54.5 & 30.9 & 14.6 & 35.0 & 0.6 & 20.7 & 12.7 & 0.0 & 49.7 & 298 \\
BOIN & 34.2 & 21.9 & 43.9 & 54.1 & 22.4 & 23.4 & 0.0 & 0.0 & 0.0 & 0.0 & 0.0 & 0.0 & 471 \\
TITE-BOIN & 31.7 & 20.6 & 47.7 & 53.5 & 21.5 & 25.0 & 13.1 & 3.2 & 21.5 & 63.2 & 0.9 & 47.8 & 291 \\
\hdashline
mTPI-2-LA & 33.8 & 22.8 & 43.3 & 51.4 & 16.1 & 32.5 & 0.0 & 0.0 & 0.0 & 0.0 & 0.0 & 0.0 & 408 \\
TITE-TPI-NS & 30.9 & 20.8 & 48.3 & 50.3 & 16.0 & 33.7 & 14.7 & 6.6 & 29.3 & 79.2 & 7.1 & 51.1 & 257 \\
TITE-TPI-PSR & 30.1 & 19.2 & 50.7 & 49.5 & 14.5 & 36.0 & 8.0 & 3.3 & 14.9 & 83.8 & 7.6 & 63.7 & 272 \\
TITE-TPI-PSR2 & 30.7 & 18.3 & 51.0 & 50.2 & 14.3 & 35.4 & 0.0 & 0.0 & 0.0 & 94.9 & 6.5 & 28.5 & 370 \\
POD-TPI-NS & 31.3 & 23.0 & 45.8 & 51.6 & 17.2 & 31.2 & 17.2 & 7.8 & 30.6 & 66.0 & 3.5 & 31.7 & 258 \\
POD-TPI-PSR & 31.2 & 21.5 & 47.3 & 50.5 & 16.2 & 33.3 & 11.4 & 3.6 & 16.4 & 65.1 & 2.9 & 29.7 & 279 \\
POD-TPI-PSR2 & 31.6 & 20.6 & 47.8 & 50.0 & 16.0 & 34.0 & 0.0 & 0.0 & 0.0 & 64.5 & 0.3 & 12.5 & 379 \\
\end{longtable}
\end{center}

On average, the trial duration is shortened by about 180--200 days using the time-to-event designs (under the default setting with Suspension Rules 1 and 2). 
Even when compared with a look-ahead design (mTPI-2-LA), the time-to-event designs still lead to a time saving of around 110--130 days.
This is a major benefit for drug development.
The trial durations under TITE-TPI,  POD-TPI,  TITE-CRM and TITE-BOIN are highly similar, because the same suspension rule is imposed,  and the speed of a design is solely determined by the suspension rule. 
In the presence of pending outcomes, time-to-event designs may result in incompatible assignments. For example, in Table \ref{tbl:simu_result}, the DS, DE, SE, SD, ED, and ES of the time-to-event designs are non-zero.
The relative frequency of incompatible decisions (sum over the six types) made by POD-TPI is lower than that of TITE-TPI, because the decision rules in POD designs are directly associated with incompatible decisions.
Although the time-to-event designs make incompatible decisions, their PCA are not much lower, as the incompatible decisions may both be aggressive and conservative, and an incompatible decision is not necessarily an overdosing/underdosing decision.
The PCS of the time-to-event designs are comparable to the corresponding complete-data designs. 
This is not surprising, because we always use complete outcomes to make the final selection of MTD.
With more late-onset toxicities (Setting 2) or faster patient accrual (Setting 3), the performances of the time-to-event designs are slightly decreased.
In particular, the frequencies of incompatible decisions under Settings 2 and 3 are generally increased compared to the results under Setting 1.
Lastly, there is always a trade-off among the different performance metrics. For example, TITE-CRM has about the highest PCA and PCS, but it also has the highest POA and POS due to the more aggressive decision rules and MTD selection.

When the pending outcomes do not affect final decisions, mTPI-2-LA allows the decisions to be executed immediately. In this way, mTPI-2-LA achieves some time saving compared to the standard mTPI-2 without making incompatible decisions.
Essentially, mTPI-2-LA is a special case of POD-TPI, in which no suspension is necessary if the probability of a decision is 1.
While the pending outcomes in mTPI-2-LA have no impact on final decisions, they may contribute to the safety rules.
For example, while a pending toxicity does not alter a de-escalation decision, it may inform whether a dose exclusion or an early termination is needed for safety.
As a result, the performance metrics of mTPI-2-LA and mTPI-2 are slightly different.
Abandoning Suspension Rule 2, TITE-TPI-NS and POD-TPI-NS attain the fastest speed at the cost of lower PCA and more frequent incompatible decisions.
Using Suspension Rule 3 with a threshold of $q_{a^*} = 0.25$, TITE-TPI-PSR and POD-TPI-PSR have similar time saving compared to the default TITE-TPI and POD-TPI while making less aggressive incompatible decisions.
This is a major advantage of the Probability Suspension Rule, which controls the risk posed to patients when decisions are made in the presence of pending outcomes.
Furthermore, the threshold in the Probability Suspension Rule can be tuned to achieve a desirable trade-off between the reduction of trial suspension and the reduction of patient risk.
For example, if no aggressive incompatible decisions are ever acceptable, one can use a stringent threshold value of $q_{a^*} = 0$.
With this choice, TITE-TPI-PSR2 and POD-TPI-PSR2 never make aggressive incompatible decisions (with zero DS, DE, and SE). 
Of course, the time saving is less with such a conservative choice, but it reflects the best one can do if one wants to eliminate the chance of making aggressive incompatible decisions.

Lastly, we perform additional sensitivity analyses with alternative maximum sample sizes ($N^* = 48$ and 60) and time-to-toxicity models (piecewise uniform, discrete hazard, and piecewise constant hazard models). The results are reported in Appendices
\ref{supp:sec:sens_sample_size} and \ref{supp:sec:t_model_spec}.

\section{Concluding Remarks}
\label{sec:discussion}

We have reviewed and summarized statistical frameworks for time-to-event modeling in dose-finding trials.
Two classes of time-to-event designs, TITE designs and POD designs, can be built upon these frameworks. We have demonstrated that existing time-to-event designs (such as TITE-CRM, TITE-BOIN, TITE-keyboard, and TITE-TPI) fall within these frameworks, and new time-to-event designs (such as TITE-SPM and POD-CRM) can be developed under these frameworks. The frameworks open the way to a deeper study of time-to-event designs.


An essential component in a time-to-event design is the modeling of time-to-toxicity data. Existing methods consider $f_{T \mid Z = z}$ as the product of $p_z$ and $f_{T \mid Z = z, Y = 1}$ and then specify $f_{T \mid Z = z, Y = 1}$ (see Equation \ref{eq:model_t}). As a result, models for $\bp$ in complete-data designs can be adopted, inference and assumptions on $\bp$ are explicit, and by assuming  that $f_{T \mid Z = z, Y = 1}$ does not depend on $\bp$, the resulting time-to-toxicity model is relatively simple.
An alternative approach is to specify $f_{T \mid Z}$ directly using, say, a proportional hazard model. Inference on $\bp$ can then be obtained by computing $p_z = \Pr(T \leq W \mid Z = z)$. The performance of the latter approach, as well as the choice between these two modeling approaches, is worth further investigation and discussion. 

We have discussed and studied theoretical properties of time-to-event designs with an emphasis on interval-based and curve-free designs. 
A future direction is to investigate more on these theoretical properties, especially for point-based or curve-based designs.


We have evaluated the operating characteristics of several designs through computer simulations.
As we have seen, there is no single design that outperforms the other designs in all aspects.
To choose one specific design to use in practice, we may run large-scale simulations and consider a loss function of the form
\begin{multline*}
\ell = -\ell_1 \text{PCA} + \ell_2 \text{POA} + \ell_3 \text{PUA} - \ell_4 \text{PCS}  + \ell_5 \text{POS} +
\ell_6 \text{PUS} + \\ 
\ell_7 \text{DS} + \ell_8 \text{DE} + \ell_9 \text{SE} +
\ell_{10} \text{SD} + \ell_{11} \text{ED} + \ell_{12} \text{ES} +
\ell_{13} \text{Dur}, 
\end{multline*}
where $\ell_1, \ldots, \ell_{13} \geq 0$. The design that minimizes the loss function can be selected as the winner. Usually, safety is the major concern. For example, the safety review boards may express concerns regarding aggressive incompatible decisions. If the probability of such decisions needs to be controlled, we recommend using the Probability Suspension Rule with a calibrated threshold $q_{a^*}$.

Lastly, the discussed frameworks can be extended to accommodate non-binary endpoints. For example, ordinal endpoints that account for multiple toxicity grades \citep{van2012dose, mu2019gboin}. Another direction for further exploration is to apply the framework to drug combination trials \citep{wages2011dose, liu2013bayesian2} or phase I/II trials that simultaneously consider toxicity and efficacy \citep{jin2014using}.

\bibliographystyle{apalike}
\bibliography{ref-TITE-TPI}

\clearpage

\appendix

\section{Modeling Time-to-Toxicity Data}
\label{supp:sec:model_tite}

We introduce several examples for the specification of $f_{T \mid Z, Y}(t \mid z, Y = 1, \bxi)$. That is, the model for the conditional distribution $[T \mid Z, Y = 1]$.
Recall that the event $Y = 1$ is equivalent to $T \leq W$. 

\subsection{Uniform Distribution} 
\label{supp:sec:model_tite_unif}

The simplest choice for the conditional distribution of $T$ is a uniform distribution,
\begin{align*}
T \mid Z, Y = 1 \sim \Unif(0, W). 
\end{align*}
This leads to the conditional pdf,
\begin{align*}
f_{T \mid Z, Y}(t \mid z, Y = 1) = 1 / W,
\end{align*}
where no additional parameter $\bxi$ is involved.
The weight function in this case is $\rho(t \mid z) = t / W$. The uniform distribution, albeit simple, has been shown to be sufficient in many cases \citep{cheung2000sequential}. It is the default choice in \cite{cheung2000sequential}, \cite{yuan2018time} and \cite{lin2018time}.

\subsection{Piecewise Uniform Distribution} 
\label{supp:sec:model_tite_pwunif}

A more flexible specification for the conditional distribution of $T$ is a piecewise uniform distribution. The interval $(0, W]$ is first partitioned into $K$ sub-intervals $\{ (h_{k-1}, h_k], k = 1, \ldots, K\}$, where $0 = h_0 < h_1 < \cdots < h_K = W$. Next, each sub-interval is assigned a weight $\omega_k$, $\sum_{k = 1}^K \omega_k = 1$. 
Conditional on $Y = 1$, $T$ falls into $(h_{k-1}, h_k]$ with probability $\omega_k$ and follows a uniform distribution within this interval.
The conditional pdf of $T$ is thus
\begin{align}
f_{T \mid Z, Y} (t \mid z, Y = 1, \bxi) = \omega_k \cdot \frac{1}{h_k - h_{k-1}}, \quad \text{for} \; h_{k-1} < t \leq h_k.
\label{eq:piecewise_uniform}
\end{align}
The weight function is
\begin{align*}
\rho(t \mid z, \bxi) = \sum_{k = 1}^K \omega_k \beta(t, k),
\end{align*}
where
\begin{align}
\beta(t, k) = 
\begin{cases} 
1, & \text{if }  t > h_k; \\
\frac{t - h_{k-1}}{h_k - h_{k-1}},  & \text{if } t \in (h_{k-1}, h_k], k = 1, \ldots, K; \\
0,  & \text{otherwise.}  
\end{cases}
\label{eq:beta_fun}
\end{align}
The parameters $\bxi = (K, h_1, \ldots, h_{K-1}, \omega_1, \ldots, \omega_K)$.
The number of sub-intervals $K$ and the interval boundaries $h_k$'s are usually fixed. For example, $K = 3$ and $h_k = k W / K$ for $k = 1, \ldots, K$, representing three sub-intervals with equal lengths. 
Alternatively, let $n$ denote the number of observed DLTs, and let $0 < t_{(1)} \leq \cdots \leq t_{(n)} \leq W$ denote the ordered observed DLT times. One can set $K = n+1$, $h_k = t_{(k)}$ for $k = 1, \ldots, n$ and $h_K = W$.
The weights of the sub-intervals $\omega_k$'s can be fixed if prior information is available or can be estimated otherwise. A Dirichlet distribution can be used as the prior for $(\omega_1, \ldots, \omega_K)$.
The piecewise uniform distribution with equal-length sub-intervals and fixed sub-interval weights is considered in \cite{yuan2018time} and \cite{lin2018time}.
The piecewise uniform distribution with empirically calibrated sub-intervals and same sub-interval weights $\omega_k = 1 / K$ is equivalent to the adaptive weighting scheme in \cite{cheung2000sequential}.

\subsection{Discrete Hazard Model} 
\label{supp:sec:model_tite_dhm}

The conditional distribution of $T$ can be constructed from a discrete hazard model.
Let $0 < t_{(1)} \leq \cdots \leq t_{(n)} \leq W$ denote the ordered observed DLT times, let $h_1 < \cdots < h_{K-1}$ represent the distinct DLT times that are strictly less than $W$, and let $h_K = W$. 
We assume $T$ can only take discrete values at the $h_k$'s given $Y = 1$.
Let $\omega_k = \Pr(T = h_k \mid T \geq h_k, Y = 1)$ represent the discrete hazard at time $h_k$, with $\omega_K = 1$. The conditional probability mass function of $T$ is
\begin{align*}
\Pr(T = h_k \mid Z, Y = 1, \bxi) = \omega_k \prod_{j = 1}^{k-1} (1 - \omega_{j}), \quad k = 1, \ldots, K,
\end{align*}
and the weight function is
\begin{align*}
\rho(t \mid z, \bxi) = \Pr(T \leq t \mid Z, Y = 1, \bxi) = 1 - \prod_{k: h_k \leq t} (1 - \omega_k).
\end{align*}
The discrete hazard model is used in \cite{yuan2011robust}.

\subsection{Piecewise Constant Hazard Model} 
\label{supp:sec:model_tite_pchm}

The conditional distribution of $T$ can also be constructed from a piecewise constant hazard model.
Again, consider a partition of $(0, W]$ into $K$ sub-intervals $\{ (h_{k-1}, h_k], k = 1, \ldots, K\}$, where $0 = h_0 < h_1 < \cdots < h_K = W$. 
Given $Y = 1$, we assume the hazard of toxicity is $\omega_k$ in the interval $(h_{k-1}, h_k]$. This leads to the conditional pdf,
\begin{align}
f_{T \mid Z, Y} (t \mid z, Y = 1, \bxi) = \omega_{k}^{\bone(t \in (h_{k-1}, h_k])} \exp \left[ - \sum_{k = 1}^{K} \omega_{k} (h_k - h_{k-1}) \beta(t, k) \right],
\label{eq:simple_hazard}
\end{align}
where $\beta(t, k)$ is the same as in Equation \eqref{eq:beta_fun}.
The weight function is
\begin{align*}
\rho(t \mid z, \bxi) = 1 - \exp \left[ - \sum_{k = 1}^{K} \omega_{k} \beta(t, k) \right].
\end{align*}
This model specification is used in \cite{liu2013bayesian} and \cite{jin2014using}. We note that although this specification can facilitate inference, it has a potential pitfall: $\int_0^W f_{T \mid Z, Y} (t \mid z, Y = 1, \bxi) \dd t = \rho(W \mid z, \bxi)  < 1$, which means \eqref{eq:simple_hazard} is not a proper pdf in $(0, W]$.

\subsection{Rescaled Beta Distribution} 

Another possible specification for the conditional distribution of $T$ is a rescaled beta distribution. Specifically,
\begin{align*}
T/W  \mid Z, Y = 1, \xi_1, \xi_2  \sim \Beta(\xi_1, \xi_2),
\end{align*}
and
\begin{align*}
f_{T \mid Z, Y}(t \mid z, Y = 1, \bxi) = \frac{1}{\text{B}(\xi_1, \xi_2)} \cdot \frac{t^{\xi_1 - 1} (W - t)^{\xi_2 - 1}}{ W^{\xi_1 + \xi_2 - 1}},
\end{align*}
where $\text{B}(\cdot, \cdot)$ is the beta function. The rescaled beta distribution is considered in \cite{lin2018time}. Gamma distribution priors can be put on $\xi_1$ and $\xi_2$.

In all the examples above, the conditional distribution $f_{T \mid Z, Y}(t \mid z, Y = 1, \bxi)$ does not depend on the dose, which implies $T$ and $Z$ are conditionally independent given $Y = 1$.
This allows pooling of time-to-event information across doses.
If desired, it is easy to let $f_{T \mid Z, Y}(t \mid z, Y = 1, \bxi)$ vary across doses. For example, in Equation \eqref{eq:piecewise_uniform} the parameters $\omega_k$'s can be changed to dose-specific parameters $\omega_{zk}$'s.
Since the time-to-event data are usually sparse in dose-finding trials, information borrowing is recommended for estimating dose-specific parameters using, e.g., hierarchical priors.

\clearpage

\section{Inference on the Toxicity Probabilities}
\label{supp:sec:inference_p}

\subsection{Inference Based on the Survival Likelihood}
\label{supp:sec:inference_sur}

With the survival likelihood \eqref{eq:likelihood_tite}, one can  proceed with inference on $\bp$. 
In this section, we propose some general strategies to conduct such inference.

\paragraph{Independent Metropolis-Hastings Algorithm.}

From a Bayesian perspective, prior distributions $\pi_0(\bp)$ and $\pi_0(\bxi)$ are specified for $\bp$ and $\bxi$. Inference on $\bp$ is summarized in the posterior distribution, 
\begin{align*}
\pi(\bp, \bxi \mid \HH) \propto \pi_0(\bp) \pi_0(\bxi) L ( \bp, \bxi \mid \HH ).
\end{align*}
In general, the posterior is not available in closed form, and Monte Carlo simulation is needed to approximate the posterior.
When conjugate priors are used, the independent Metropolis-Hastings (IMH) algorithm (\citealt{robert2004monte}, Section 7.4) can be employed.
Let $\tilde{L}(\bp, \bxi \mid \HH)$ denote the complete case likelihood (i.e. the likelihood for the patients with complete outcomes),
\begin{align*}
\tilde{L}(\bp, \bxi \mid \HH)
= \prod_{i = 1}^N \Big[ p_{z_i}^{\bone(\ty_i = 1)}  (1 - p_{z_{i}})^{\bone(\ty_i = 0, v_i = W)} 
 f_{T \mid Z, Y}(v_i \mid z_i, Y = 1, \bxi)^{\bone(\ty_i = 1)} \Big].
\end{align*}
The complete case likelihood can be factorized as $\tilde{L} = \tilde{L}_1 (\bp \mid \HH ) \tilde{L}_2 (\bxi \mid \HH )$, where
\begin{align*}
\tilde{L}_1 (\bp \mid \HH ) &= \prod_{i = 1}^N \Big[ p_{z_i}^{\bone(\ty_i = 1)}  (1 - p_{z_{i}})^{\bone(\ty_i = 0, v_i = W)}  \Big], \text{ and} \\
 \tilde{L}_2 (\bxi \mid \HH ) &= \prod_{i = 1}^N \Big[ f_{T \mid Z, Y}(v_i \mid z_i, Y = 1, \bxi)^{\bone(\ty_i = 1)} \Big].
\end{align*}
To implement the IMH algorithm, we first randomly initialize $\bp^{(0)}$ and $\bxi^{(0)}$. At iteration $j$ ($j = 1, 2, \ldots$), we propose $\tilde{\bp}$ and $\tilde{\bxi}$ from the complete case posteriors,
\begin{align*}
\tilde{\pi}(\bp \mid \HH) \propto \pi_0(\bp) \tilde{L}_1 (\bp \mid \HH ), \;\; \text{and} \;\; \tilde{\pi}(\bxi \mid \HH) \propto \pi_0(\bxi) \tilde{L}_2 (\bxi \mid \HH ).
\end{align*}
If conjugate priors are specified for $\bp$ and $\bxi$ (e.g., a beta distribution prior for $p_z$), the complete case posteriors are available in closed form. The proposals are accepted with probability
\begin{align*}
q_{\text{acc}} \left( \tilde{\bp}, \tilde{\bxi}; \bp^{(j-1)}, \bxi^{(j-1)} \right) = 1 \wedge \prod_{i: \ty_i = 0, v_i < W} \frac{ 1 - \rho(v_i \mid z_i, \tilde{\bxi}) \tilde{p}_{z_i}  }{ 1 - \rho(v_i \mid z_i, \bxi^{(j-1)}) p_{z_i}^{(j-1)}  },
\end{align*}
and otherwise, $\bp^{(j-1)}$ and $\bxi^{(j-1)}$ are retained. Under standard regularity conditions \citep{robert2004monte}, the sequence $\{ \bp^{(j)}, j = 1, 2, \ldots \}$ has a stationary distribution $\pi(\bp \mid \HH)$.

\paragraph{Partial Derivatives of the Log-Likelihood.}

From a frequentist perspective, the MLE $\hat{\bp}$ can be used as an estimate for $\bp$. 
We can calculate $(\hat{\bp}, \hat{\bxi}) = \argmax_{\bp, \bxi} L( \bp, \bxi \mid \HH )$ by taking partial derivatives of the log-likelihood with respect to all the parameters. 
It suffices to solve the following equations,
\begin{align*}
\frac{\partial \log L} {\partial p_z} = \frac{n_z}{p_z} - \frac{m_z}{1 - p_z} - \sum_{i : \ty_i = 0, v_i < W, z_i = z} \frac{\rho(v_i \mid z_i, \bxi)}{1 - \rho(v_i \mid z_i, \bxi) p_z } = 0,
\end{align*}
and 
\begin{multline*}
\frac{\partial \log L} {\partial \xi_k} = -\sum_{i : \ty_i = 0, v_i < W} \frac{p_{z_i}}{1 - \rho(v_i \mid z_i, \bxi) p_z } \frac{\partial \rho(v_i \mid z_i, \bxi)} {\partial \xi_k} + \\
\sum_{i: \ty_i = 1} \frac{1}{f_{T \mid Z, Y}(v_i \mid z_i, Y = 1, \bxi)} \frac{\partial f_{T \mid Z, Y}(v_i \mid z_i, Y = 1, \bxi)} {\partial \xi_k} = 0.
\end{multline*}

\subsection{Equivalence of the Survival Likelihood and the Augmented Likelihood}
\label{supp:sec:eq_sur_mis}

\begin{proof}[Proof of Proposition \ref{prop:eq_mis_sur}]
For notational clarity, let $L_{\mis}(\bp, \bxi \mid \HH )$ denote the derived data likelihood by marginalizing \eqref{eq:likelihood_missing} over $\by_{\mis}$, and let $L_{\text{sur}}(\bp, \bxi \mid \HH )$ denote the survival likelihood \eqref{eq:likelihood_tite}. We have 
\begin{align*}
 L_{\mis} ( \bp, \bxi \mid \HH )
=  {}&{} \sum_{\by_{\mis} \in \{ 0, 1\}^r } L_{\mis}( \bp, \bxi, \by_{\mis} \mid \HH ) \\
= {}&{} \prod_{i : b_i = 1} \Big [ p_{z_i}^{\bone(y_i = 1)} (1 - p_{z_i})^{\bone(y_i = 0)}  f_{T \mid Z, Y}(v_i \mid z_i, Y_i = 1, \bxi)^{\bone(y_i = 1)}  \Big ] \times \\
{}&{} \qquad \prod_{i : b_i = 0} \Big \{ p_{z_i} \left[1 - \rho(v_i \mid z_i, \bxi) \right] + (1 - p_{z_i}) \Big \} \\
= {}&{} \prod_{i = 1}^N \Big \{ p_{z_i}^{\bone(y_i = 1, b_i = 1)} (1 - p_{z_i})^{\bone(y_i = 0, b_i = 1)} \times \\
{}&{} \qquad f_{T \mid Z, Y}(v_i \mid z_i, Y_i = 1, \bxi)^{\bone(y_i = 1, b_i = 1)} \left[1 - \rho(v_i \mid z_i, \bxi) p_{z_i} \right]^{\bone(b_i = 0)} \Big \} \\
= {}&{} L_{\text{sur}} ( \bp, \bxi \mid \HH ).
\end{align*}
Here, $r = \sum_{i = 1}^N \bone(b_i = 0) = \sum_{z = 1}^J r_z$ is the number of missing outcomes.
The last equation holds because $\{ \tY_i = 1\}$ is equivalent to $\{ Y_i = 1, B_i = 1\}$, and $\{ \tY_i = 0 \}$ contains $\{ Y_i = 0, B_i = 1 \}$ and $\{ B_i = 0 \}$.
\end{proof}

\subsection{Inference Based on the Augmented Likelihood}
\label{supp:sec:inference_mis}

With the augmented likelihood \eqref{eq:likelihood_missing}, one can  proceed with inference on $\bp$. 
Specifically, \eqref{eq:likelihood_missing} can be factorized into
$L( \bp, \bxi, \by_{\mis} \mid \HH )  =  L_1(\bp, \by_{\mis} \mid \HH ) L_2 (\bxi, \by_{\mis} \mid \HH)$, where
\begin{align*}
L_1(\bp, \by_{\mis} \mid \HH ) &= \prod_{i = 1}^N \Big[ p_{z_i}^{\bone(y_i = 1)}  (1 - p_{z_{i}})^{\bone(y_i = 0)}  \Big], \text{ and} \\
L_2 (\bxi, \by_{\mis} \mid \HH) &= \prod_{i = 1}^N \Big \{ f_{T \mid Z, Y}(v_i \mid z_i, Y = 1, \bxi)^{\bone(y_i = 1, b_i = 1)} \left[1 - \rho(v_i \mid z_i, \bxi) \right]^{\bone(y_i = 1, b_i = 0)} \Big \}.
\end{align*}
This factorization facilitates inference.

\paragraph{Data Augmentation.}

From a Bayesian perspective, 
the posterior distribution $\pi(\bp \mid \HH)$ can be simulated using the data augmentation (DA) method \citep{tanner1987calculation, higdon1998auxiliary}. 
We first randomly initialize $\bp^{(0)}$, $\bxi^{(0)}$ and $\by_{\mis}^{(0)}$. At iteration $j$ ($j = 1, 2, \ldots$), implement the following procedures until the stationary distribution is reached and the desired number of posterior samples is obtained.

\noindent (1) Imputation step. Sample $\by_{\mis}^{(j)} \mid \HH, \bp^{(j-1)}, \bxi^{(j-1)}$ from \eqref{eq:dist_missing};

\noindent (2) Posterior step. Sample $\bp^{(j)} \mid \HH, \by_{\mis}^{(j)}$ and $\bxi^{(j)} \mid \HH, \by_{\mis}^{(j)}$ from the corresponding posteriors, 
\begin{align*}
&\pi(\bp \mid \HH, \by_{\mis}^{(j)}) \propto \pi_0(\bp) L_1(\bp, \by_{\mis}^{(j)} \mid \HH ), \quad \text{and} \\
&\pi(\bxi \mid \HH, \by_{\mis}^{(j)}) \propto \pi_0(\bxi) L_2 (\bxi, \by_{\mis}^{(j)} \mid \HH).
\end{align*}
If conjugate priors are specified for $\bp$ and $\bxi$ (e.g., a beta distribution prior for $p_z$), the above posteriors are available in closed form.
Through this procedure, we obtain a Markov chain $\{\bp^{(j)},\bxi^{(j)}, \by_{\mis}^{(j)}, j = 1, 2, \ldots \}$, whose stationary distribution is $\pi(\bp, \bxi, \by_{\mis} \mid \HH)$ under standard regularity conditions. The sequence $\{ \bp^{(j)}, j = 1, 2, \ldots \}$ has a marginal stationary distribution $\pi(\bp \mid \HH)$.

\paragraph{Expectation Maximization.}

From a frequentist perspective, the MLE of $\bp$ can be calculated through the expectation maximization (EM) algorithm \citep{dempster1977maximum}. 
We first randomly initialize $\bp^{(0)}$, $\bxi^{(0)}$ and $\by_{\mis}^{(0)}$. At iteration $j$ ($j = 1, 2, \ldots$), implement the following procedures until the desired convergence criteria is met.

\noindent (1) Expectation step. Set $\by_{\mis}^{(j)}$ at the expected values as in \eqref{eq:dist_missing} given the current parameter estimates $\bp^{(j-1)}$ and $\bxi^{(j-1)}$;

\noindent (2) Maximization step. Set $\bp^{(j)}$ and $\bxi^{(j)}$ at the corresponding MLEs of \eqref{eq:likelihood_missing}, using $(\by_{\obs}, \by_{\mis}^{(j)})$ as the full data. That is,
\begin{align*}
&\bp^{(j)} = \argmax_{\bp} L_1(\bp, \by_{\mis}^{(j)} \mid \HH ), \quad \text{and} \\
&\bxi^{(j)} = \argmax_{\bxi} L_2 (\bxi, \by_{\mis}^{(j)} \mid \HH).
\end{align*}
Here, $\by_{\mis}^{(j)}$ can be a fraction.
Under standard regularity conditions, the sequence $\{\bp^{(j)},$ $\bxi^{(j)},$ $j = 1, 2, \ldots \}$ converges to $\argmax_{\bp, \bxi} L(\bp, \bxi \mid \HH)$.

\clearpage

\section{Design Properties}
\label{supp:sec:property}

\subsection{Large-Sample Convergence Properties}
\label{supp:sec:convergence}

\begin{proof}[Proof of Lemma \ref{lem:consistency}]
\noindent (1) 
In the likelihood \eqref{eq:likelihood_tite},
we first treat $\bxi$ as a fixed parameter.
Conditional on $\bxi$, $p_z$ is independent of the information at the other doses, which simplifies the problem.
In the end, we will integrate out $\bxi$ and show the lemma also holds unconditional on $\bxi$.
For notation simplicity, we suppress the subscript $z$ in the following proof and only consider the patients at dose $z$.
That is, $N_z$, $n_z$, $m_z$, $r_z$, $p_z$ and $p_{0z}$ are simplified as $N$, $n$, $m$, $r$, $p$ and $p_0$. Also, $y_i$, $\ty_i$, $v_i$ and $b_i$ now only refer to the DLT outcome, current toxicity status, follow-up time and observed outcome indicator for a patient $i$ at dose $z$, $i = 1, \ldots, N_z$.
Without loss of generality, assume $0 < p_0 - \epsilon  < p_0 < p_0 + \epsilon < 1$, thus $\log (p_0 - \epsilon)$, $\log p_0$ and $\log (1 - p_0 - \epsilon)$ are finite.
The likelihood of $p$ is
\begin{align*}
L ( p \mid \HH, \bxi) &= \prod_{i = 1}^N \Big\{ p^{\bone(\ty_i = 1)}  (1 - p)^{\bone(\ty_i = 0, v_i = W)} 
\left[ 1 - \rho(v_i \mid z, \bxi) p \right]^{\bone(\ty_i = 0, v_i < W)} 
  \Big\}\\
&= \prod_{i = 1}^N \Big\{ p^{\bone(y_i = 1, b_i = 0)}  (1 - p)^{\bone(y_i = 0, b_i = 0)} 
\left[ 1 - \rho(v_i \mid z, \bxi) p \right]^{\bone(b_i = 1)} 
  \Big\}.
\end{align*}
Define 
\begin{align}
\eta_N (p; \HH, \bxi)
= \frac{1}{N} \log \frac{L ( p_0 \mid \HH, \bxi)}{L ( p \mid \HH, \bxi)}.
\label{supp:eq_eta_fn}
\end{align}
We have 
\begin{multline*}
\eta_N (p; \HH, \bxi) = \frac{1}{N} \Big \{ n (\log p_0 - \log p) + m [\log (1 - p_0) - \log (1 - p) ] + \\
\sum_{i: b_i = 1} [ \log (1 - \rho_i p_0) - \log (1 - \rho_i p) ] \Big \},
\end{multline*}
where $\rho_i = \rho(v_i \mid z, \bxi)$, and $0 \leq \rho_i \leq 1$. Thus,
\begin{align*}
\underline{\eta}_N (p; \HH) \leq \eta_N (p; \HH, \bxi) \leq \overline{\eta}_N (p; \HH),
\end{align*}
where
\begin{align*}
\underline{\eta}_N (p; \HH) &= \frac{1}{N}  \left[ n (\log p_0 - \log p) + (m + r) \log (1 - p_0) - m \log(1 - p) \right], \\
\overline{\eta}_N (p; \HH) &= \frac{1}{N}  \left[ n (\log p_0 - \log p) + m \log (1 - p_0) - (m + r) \log(1 - p) \right].
\end{align*}
By taking derivatives, we know $\underline{\eta}_N (p; \HH)$ monotonically decreases on $[0, n/(n+m))$, reaches the minimum at $n/(n+m)$ and monotonically increases on $(n/(n+m), 1]$; 
$\overline{\eta}_N (p; \HH)$ monotonically decreases on $[0, n/N)$, reaches the minimum at $n/N$ and monotonically increases on $(n/N, 1]$

Furthermore, let
\begin{align*}
\eta^* (p) = p_0 (\log p_0 - \log p) + (1 - p_0) [\log (1 - p_0) - \log (1 - p)].
\end{align*}
Similarly, $\eta^* (p)$ monotonically decreases on $[0, p_0)$, reaches the minimum at $p_0$ and monotonically increases on $(p_0, 1]$. We have $\eta^* (p_0) = 0$. Since $\eta^* (p)$ is continuous, $\exists \delta > 0$ and $0 < \epsilon_0 < \epsilon$, such that $\min \{ \eta^* (p_0 - \epsilon), \eta^* (p_0 + \epsilon) \} > 5 \delta$ and $\max \{ \eta^* (p_0 - \epsilon_0), \eta^* (p_0 + \epsilon_0) \} < \delta$.

Let $\bar{\C}_{\varepsilon}$ denote the complement of $\C_{\varepsilon}$.
Recall that $\C_{\varepsilon} = \{ p: | p - p_{0} | < \varepsilon \}$.
 Consider the posterior odds,
\begin{align}
\frac{\int_{\C_{\varepsilon}} \pi(p \mid \HH, \bxi) \dd p}{\int_{\bar{\C}_{\varepsilon}} \pi(p \mid \HH, \bxi) \dd p} = 
&\frac{\int_{\C_{\varepsilon}} \pi_0(p) L(p \mid \HH, \bxi ) \dd p}{\int_{\bar{\C}_{\varepsilon}} \pi_0(p) L(p \mid \HH, \bxi ) \dd p} \nonumber \\ 
= &\frac{\int_{\C_{\varepsilon}} \pi_0(p) \exp [-N \eta_N (p; \HH, \bxi)] \dd p}{\int_{\bar{\C}_{\varepsilon}} \pi_0(p) \exp [-N \eta_N (p; \HH, \bxi)]  \dd p}.
\label{supp:eq:post_odd}
\end{align}

According to the strong law of large numbers, and since $r = o(N)$, $\exists N_0$ (note that $N_0$ does not depend on $\bxi$), $\forall N > N_0$, 
\begin{multline*}
\max \left\{ \left| \frac{n}{n+m} - p_0 \right|, \left| \frac{n}{N} - p_0 \right|, \left| \frac{n+r}{N} - p_0 \right| \right\} < \\
\min \left\{ \epsilon_0, \frac{1}{4} \left| \frac{\delta}{\log(p_0 - \epsilon)}\right|, \frac{1}{4} \left| \frac{\delta}{\log(1 - p_0 - \epsilon)}\right| \right\},
\end{multline*}
almost surely (a.s.).
We have 
\begin{multline*}
| \underline{\eta}_N (p_0 - \epsilon; \HH) - \eta^* (p_0 - \epsilon) | \leq \left| \frac{n}{N} - p_0 \right| \cdot \left| \log p_0 \right| + \left| \frac{n}{N} - p_0 \right| \cdot \left | \log (p_0 - \epsilon) \right | + \\
\left| \frac{n}{N} - p_0 \right| \cdot \left| \log (1 - p_0) \right| + \left| \frac{n+r}{N} - p_0 \right| \cdot \left| \log (1 - p_0 + \epsilon) \right| \leq \delta.
\end{multline*}

Thus, for all $p \leq p_0 - \epsilon$,
\begin{align*}
\eta_N (p; \HH, \bxi) &\geq \underline{\eta}_N (p; \HH) \geq \underline{\eta}_N (p_0 - \epsilon; \HH) \\
&\geq \eta^* (p_0 - \epsilon) - | \underline{\eta}_N (p_0 - \epsilon; \HH) - \eta^* (p_0 - \epsilon) | > 4 \delta.
\end{align*}
Similarly, for all $p \geq p_0 + \epsilon$, we have $\eta_N (p; \HH, \bxi) > 4 \delta$.

Next, consider $p_0 - \epsilon_0 \leq p \leq p_0 + \epsilon_0$. We have 
\begin{multline*}
| \overline{\eta}_N (p_0 - \epsilon_0; \HH) - \eta^* (p_0 - \epsilon_0) | \leq \left| \frac{n}{N} - p_0 \right| \cdot \left| \log p_0 \right| + \left| \frac{n}{N} - p_0 \right| \cdot \left | \log (p_0 - \epsilon_0) \right | + \\
\left| \frac{n+r}{N} - p_0 \right| \cdot \left| \log (1 - p_0) \right| + \left| \frac{n}{N} - p_0 \right| \cdot \left| \log (1 - p_0 + \epsilon_0) \right| \leq \delta.
\end{multline*}
Thus, for all $p_0 - \epsilon_0 \leq p \leq n/N$,
\begin{align*}
\eta_N (p; \HH, \bxi) &\leq \overline{\eta}_N (p; \HH) \leq \underline{\eta}_N (p_0 - \epsilon_0; \HH) \\
&\leq \eta^* (p_0 - \epsilon_0) + | \overline{\eta}_N (p_0 - \epsilon_0; \HH) - \eta^* (p_0 - \epsilon_0) | < 2 \delta.
\end{align*}
Similarly, for all $n/N \leq p \leq p_0 + \epsilon_0$, we have $\eta_N (p; \HH, \bxi) < 2 \delta$.

As a result, in Equation \eqref{supp:eq:post_odd}, the numerator and the denominator satisfy
\begin{align*}
&\int_{\C_{\varepsilon}} \pi_0(p) \exp [-N \eta_N (p; \HH, \bxi)] \dd p \geq \exp (- 2 N \delta) \pi_0(p \in \C_{\epsilon_0}), \\
&\int_{\bar{\C}_{\varepsilon}} \pi_0(p) \exp [-N \eta_N (p; \HH, \bxi)] \dd p \leq \exp (- 4 N \delta),
\end{align*}
respectively, for all $N > N_0$ a.s. 
That is, for some $\delta > 0$ and $0 < \epsilon_0 < \epsilon$, $\exists N_0 > 0$, $\forall N  > N_0$, 
$\eqref{supp:eq:post_odd} \geq \exp (2 N \delta) \pi_0(p \in \C_{\epsilon_0})$ a.s. 

Since the numerator and the denominator of \eqref{supp:eq:post_odd} add up to 1, through simple algebra we have $\pi(p \in \C_{\varepsilon} \mid \HH, \bxi) \geq 1 - [1 + \exp (2 N \delta) \pi_0(p \in \C_{\epsilon_0}) ]^{-1}$. 
Notice that none of the terms on the right hand side depend on $\bxi$.
Thus, 
\begin{align*}
\pi(p \in \C_{\varepsilon} \mid \HH) &= \int_{\bxi} \pi(p \in \C_{\varepsilon} \mid \HH, \bxi) \pi(\bxi \mid \HH) \dd \bxi \\
&\geq \int_{\bxi} \left\{ 1 - [1 + \exp (2 N \delta) \pi_0(p \in \C_{\epsilon_0}) ]^{-1} \right\} \pi(\bxi \mid \HH) \dd \bxi \\
&= 1 - [1 + \exp (2 N \delta) \pi_0(p \in \C_{\epsilon_0}) ]^{-1}
\end{align*}
and goes to 1 a.s. as $N \rightarrow \infty$.

\noindent (2) In (1) we have already proved $\forall \epsilon > 0$, $\exists N_0$, $\forall N > N_0$, $\exists \delta > 0$ such that $\eta_N (p; \HH, \bxi) > 4 \delta$ for all $p \in \bar{\C}_{\varepsilon}$ and for all $\bxi$ a.s. Since the log-likelihood at the MLE $\hat{p}$ must be greater than or equal to the log-likelihood at any other place (including $p_0$), we have $\eta_N (\hat{p}; \HH, \hat{\bxi}) \leq 0$. As a result, $\hat{p} \in \bar{\C}_{\varepsilon}$ would cause a contradiction. Therefore,  $\forall \epsilon > 0$, $\exists N_0$, $\forall N > N_0$, $\hat{p} \in \C_{\varepsilon}$ a.s., which means $\hat{p} \rightarrow p_0$.
\end{proof}

\begin{proof}[Proof of Theorem \ref{thm:convergence}]
The proof follows \cite{oron2011dose} as a consequence of Lemma \ref{lem:consistency}.
Define the random set 
\begin{align*}
\Z = \{ z \in \{ 1, \ldots, J\}: N_z \rightarrow \infty \text{ as }  N \rightarrow \infty \},
\end{align*}
where $N_z$ is the number of patients assigned to dose $z$. Obviously, $\Z$ is nonempty and is composed of consecutive dose levels, $\Z = \{ Z_1, \ldots, Z_2 \}$. Here $Z_1, \ldots, Z_2$ are consecutively ordered integers ($Z_1 \leq Z_2$). 
For a particular dose $d^*$, the possible configurations of $\Z$ can be partitioned into three subspaces: $A = \{ Z_1 = Z_2 = d^* \}$, $B = \{ Z_1 < Z_2 \text{ and } d^* \in \Z \}$ and $C = \{ d^* \notin \Z \}$. Almost sure convergence to $d^*$ is equivalent to $\Pr(A) = 1$.

Suppose the dose $d^*$ satisfies $p_{0d^*} \in (p^* - \epsilon_1, p^* + \epsilon_2)$, and $d^*$ is also the only dose such that $p_{0d^*} \in [p^* - \epsilon_1, p^* + \epsilon_2]$.
We first prove $\Pr(C) = 0$ by contradiction.
With out loss of generality, we assume there is some specific dose $z_1  > d^*$ for which $\Pr(Z_1 = z_1) > 0$. From the theorem's condition, $p_{0 z_1} > p^* + \epsilon_2$. When the event $Z_1 = z_1$ happens, $N_{z_1} \rightarrow \infty$ as $N \rightarrow \infty$.
Thus $\forall \epsilon > 0$, for $N$ large enough, $\pi[p_{z_1} \in (p_{0 z_1} - \epsilon, p_{0 z_1} + \epsilon) \mid \HH] \rightarrow 1$, meaning $\arg \max_{j} \Pr(p_{z_1} \in I_j  \mid \HH) \in \{ \textD_1, \ldots, \textD_{K_2} \}$. Similarly, for $N$ large enough, the MLE $\hat{p}_{z_1} > p^* + \epsilon_2$. According to the transition rule of interval-based designs, the next lower dose level $z_1 - 1$ will be assigned a.s. following each allocation to $z_1$. Thus $z_1 - 1 \in \Z$, reaching a contradiction.
As a result, for all $z_1 > d^*$, $\Pr(Z_1 = z_1) = 0$.
Based on similar reasoning,  for all $z_2 < d^*$, $\Pr(Z_2 = z_2) = 0$. This means with probability 1, $Z_1 \leq d^*$ and $Z_2 \geq d^*$, i.e., $d^* \in \Z$.

Since $d^* \in \Z$, $N_{d^*} \rightarrow \infty$ as $N \rightarrow \infty$. Thus $\forall \epsilon > 0$, for $N$ large enough, $\pi[p_{d^*} \in (p_{0 d^*} - \epsilon, p_{0 d^*} + \epsilon) \mid \HH] \rightarrow 1$, meaning $\Pr(p_{d^*} \in I_{\textS}  \mid \HH ) \rightarrow 1$. Similarly, for $N$ large enough, the MLE $\hat{p}_{d^*} \in I_{\textS}$. According to the transition rule of interval-based designs, the same dose level $d^*$ will be retained a.s. following each assignment to $d^*$. Thus, there is no other level in $\Z$ with probability 1, and the dose allocation converges a.s. to $d^*$.
\end{proof}

\begin{proof}[Proof of Lemma \ref{lem:consistency_POD}]
For notation simplicity, we suppress the subscript $d$ in the following proof and only consider the patients with $Z_i = d$, as interval-based nonparametric designs (defined in Section \ref{sec:complete_data_design}) only use information at the current dose.
Suppose $\A^*(\by)$ is the dose decision function of an interval-based nonparametric complete-data design, where $\by$ denotes the vector of outcomes at the current dose only.
We have 
\begin{align*}
\Pr(A = a \mid \HH) = \sum_{\by_{\mis}: \A^*(\by_{\obs}, \by_{\mis}) = a} \Pr(\bY_{\mis} = \by_{\mis} \mid \HH).
\end{align*}
Again, $\by_{\obs}$ and $\bY_{\mis}$ now only refer to the observed outcomes and unknown outcomes for patients at the current dose.
Let $\mathcal{Y}_{\mis} = \{ 0, 1 \}^r$ denote the support of $\bY_{\mis}$, where $r$ is the number of pending patients at the current dose. 

\noindent (1) Suppose $p_0 \in (p^* - \epsilon_1, p^* + \epsilon_2)$. It suffices to show $\forall \by_{\mis} \in \mathcal{Y}_{\mis}$, $\exists N_0 > 0$,  when $N > N_0$,
$\A^*(\by_{\obs}, \by_{\mis}) = d$ a.s.
For a complete-data design with outcomes $(\by_{\obs}, \by_{\mis})$, the likelihood of $p$ is
\begin{align*}
L( p \mid \by_{\obs}, \by_{\mis}) = \prod_{i = 1}^N p^{y_i} (1-p)^{1 - y_i} = p^{n + s} (1-p)^{m+r-s},
\end{align*}
where $s = \sum_{l = 1}^r \bone(y_{\mis, l} = 1)$ counts the number of DLTs in $\by_{\mis}$, $0 \leq s \leq r$.
Similar to Equation \eqref{supp:eq_eta_fn}, consider
\begin{multline*}
\eta_N (p; \by_{\obs}, \by_{\mis})
= \frac{1}{N} \log \frac{L(p_0 \mid \by_{\obs}, \by_{\mis})}{L(p \mid \by_{\obs}, \by_{\mis})} \\
= \frac{1}{N} \Big\{  (n+s) (\log p_0 - \log p) + (m+r+s) [\log(1 - p_0) - \log (1 - p)] \Big\}.
\end{multline*}
The function $\eta_N (p; \by_{\obs}, \by_{\mis})$ monotonically decreases on $[0, (n+s) / N)$, reaches the minimum at $(n+s) / N$ and monotonically increases on $((n+s) / N, 1]$.



Let $\epsilon_0 = \min \{ \epsilon_1, \epsilon_2 \}$.
Similar to the proof of Lemma \ref{lem:consistency}, by the strong law of large numbers, and since $r = o(N)$, we can show $\exists N_0 > 0$, when $N > N_0$, $\pi(p \in \C_{\epsilon_0} \mid \by_{\obs}, \by_{\mis}) \geq 0.99$ a.s., for any $0 \leq s \leq r$.
Also, when $N > N_0$, $| \hat{p} - p_0 | < \epsilon_0 $ a.s. 
If $p_{0} \in (p^* - \epsilon_1, p^* + \epsilon_2)$, then for $N > N_0$ and for any $\by_{\mis}$, $\arg \max_{j} \Pr(p \in I_j  \mid \by_{\obs}, \by_{\mis})  = \textS$ and $\hat{p} \in I_{\textS}$ a.s.
That is, $\A^*(\by_{\obs}, \by_{\mis}) = d$ for any $\by_{\mis}$.
As a result, 
\begin{align*}
\Pr(A = d \mid \HH) = \sum_{\by_{\mis} \in \mathcal{Y}_{\mis}} \Pr(\bY_{\mis} = \by_{\mis} \mid \HH) = 1,
\end{align*}
a.s. for $N > N_0$.

\noindent (2, 3) For the same reason, if $p_{0} < p^* - \epsilon_1$, then $\exists N_0$, when $N > N_0$, $\forall \by_{\mis} \in \mathcal{Y}_{\mis}$, $\A^*(\by_{\obs}, \by_{\mis}) = d + 1$ a.s. Thus $\Pr(A = d + 1 \mid \HH) = 1$ a.s.
If $p_{0} > p^* + \epsilon_2$, then $\exists N_0$, when $N > N_0$, $\forall \by_{\mis} \in \mathcal{Y}_{\mis}$, $\A^*(\by_{\obs}, \by_{\mis}) = d - 1$ a.s. Thus $\Pr(A = d-1 \mid \HH) = 1$ a.s.
\end{proof}

\subsection{Coherence Principles}
\label{supp:sec:coherence}
                                                                                         
We prove below for an example that an interval-based TITE design is interval coherent in the sense of Definition \ref{supp:def:coherence_interval}.

Consider an interval-based TITE design that makes dose-finding decisions based on the MLE (see Section \ref{sec:complete_data_design}). We establish its interval coherence in de-escalation. 
First, consider a function
\begin{align*}
\ell(p, \rho_1, \ldots, \rho_r; n, m) \triangleq \frac{n}{p} - \frac{m}{1 - p} - \sum_{i = 1}^r \frac{\rho_i}{1 - \rho_i p },
\end{align*}
where $n$, $m$ and $r$ are some non-negative integers.
We have 
\begin{align*}
\frac{\partial \ell}{\partial p} &= - \frac{n}{p^2} - \frac{m}{(1-p)^2} - \sum_{i = 1}^r \frac{\rho_i^2}{(1 - \rho_i p)^2} < 0, \quad \text{and} \\
\frac{\partial \ell}{\partial \rho_i} &= - \frac{1}{(1 - \rho_i p)^2} < 0,
\end{align*}
for all $p$ and $\rho_i$, which mean that $\ell$ monotonically decreases with $p$ and $\rho_i$.

Next, suppose the currently-administrated doses just prior to $\tau$ and $\tau + \tau'$ are both $d$. 
At dose $d$,
let $n_d$ and $m_d$ denote the numbers patients that have or do not have DLTs just prior to $\tau$,
let $i = 1, \ldots, r_1$ index the patients that are still being followed just prior to $\tau$, 
and let $i = r_1 + 1, \ldots, r_2$ index the patients that are enrolled between $[\tau, \tau + \tau')$.
The MLE of $p_d$ just prior to $\tau$, denoted by $\hat{p}_d(\tau)$, satisfies
\begin{align*}
\ell[\hat{p}_d(\tau), \rho_1(\tau), \ldots, \rho_{r_1 + r_2} (\tau); n_d, m_d] = 0,
\end{align*}
where $\rho_i(\tau) = \rho[v_i(\tau) \mid d, \hat{\bxi}]$ for some $\hat{\bxi}$ for $i = 1, \ldots, r_1$, and $\rho_i(\tau) = 0$ for $i = r_1 + 1, \ldots, r_2$.
On the other hand, suppose no DLT occurs at dose $d$ during $[\tau, \tau + \tau')$. Then, the MLE of $p_d$ just prior to $\tau + \tau'$, denoted by $\hat{p}_d(\tau + \tau')$, satisfies
\begin{align*}
\ell[\hat{p}_d(\tau + \tau'), \rho_1(\tau + \tau'), \ldots, \rho_{r_1 + r_2} (\tau + \tau'); n_d, m_d] = 0,
\end{align*}
where $\rho_i(\tau + \tau') = \rho[v_i(\tau + \tau') \mid d, \hat{\bxi}]$ if patient $i$ is still being followed just prior to $\tau + \tau'$, and $\rho_i(\tau + \tau') = 1$ if patient $i$ has finished followup just prior to $\tau + \tau'$ with no DLT.

By definition, $\rho_i(\tau + \tau') \geq \rho_i(\tau)$.
Therefore, 
\begin{multline*}
\ell[\hat{p}_d(\tau + \tau'), \rho_1(\tau), \ldots, \rho_{r_1 + r_2} (\tau); n_d, m_d] \geq \\ 
\ell[\hat{p}_d(\tau + \tau'), \rho_1(\tau + \tau'), \ldots, \rho_{r_1 + r_2} (\tau + \tau'); n_d, m_d] = 0 = \\
\ell[\hat{p}_d(\tau), \rho_1(\tau), \ldots, \rho_{r_1 + r_2} (\tau); n_d, m_d].
\end{multline*}
Since $\ell$ monotonically decreases with $p$, we have  $\hat{p}_d(\tau + \tau') \leq \hat{p}_d(\tau)$, thus  $\A[\HH(\tau + \tau')] \geq \A[\HH(\tau)]$.
Coherence in escalation can be proved in a similar way.

\clearpage

\section{Selection of the MTD}
\label{supp:sec:sel_mtd}

Except for the 3+3 and R6 designs, the patient enrollment is terminated if the number of enrolled patients reaches the pre-specified maximum sample size $N^*$ or  an early stopping rule (e.g. Safety rule 1) is triggered.
After all patients have finished their DLT assessment, the trial completes, and the next step is to recommend an MTD. 
The selection of MTD does not involve any pending outcomes and is simply a problem of statistical inference about $\bp$ under the likelihood \eqref{eq:likelihood_complete} and the order constraint
 $p_1 \leq p_2 \leq \cdots \leq p_J$.
Usually, the doses with $\Pr(p_z > p^* \mid \data) > \nu$ for a $\nu$ close to 1 and the doses that have never been tried are excluded from the MTD candidates. If the trial is stopped early because the lowest dose is overly toxic, no MTD will be selected.
 
The MTD selection rules for the CRM and SPM are consistent with their dose assignment rules. For CRM, the dose $d^* = \argmin_{z} | \hat{p}_{z} - p^* |$ is selected as the MTD. For SPM, the dose $\hat{\gamma} = \argmax_{\gamma} \pi(\gamma \mid \by, \bz)$ is selected as the MTD. See more details in Section \ref{sec:complete_data_design} of the main manuscript.
On the other hand, the MTD selection rules for the BOIN, mTPI-2, keyboard and i3+3 are different from their dose assignment rules, as their dose assignments only depend on outcomes at the current dose. To impose the order constraint, 
an isotonic regression is performed using the pooled adjacent violators algorithm \citep{ji2007dose}, resulting in estimates $\hat{\bp}$ satisfying $\hat{p}_1 \leq \hat{p}_2 \leq \cdots \leq \hat{p}_J$.
For BOIN and keyboard, the dose $d^* = \argmin_{z} | \hat{p}_{z} - p^* |$ is selected as the MTD. For mTPI-2 and i3+3, the dose $d^*$ with the smallest distance will be selected only if $\hat{p}_{d^*} \leq p^* + \epsilon_2$, and otherwise, the highest dose with DLT probability lower than $p^* + \epsilon_2$ is selected, which is more conservative.
For the time-to-event designs, we can simply apply the MTD selection rules of their complete-data counterparts.

\clearpage

\section{Simulation Details}

\subsection{Dose-Toxicity Scenarios}

We summarize the 18 dose-toxicity scenarios in Table \ref{tbl:simu_DLT_prob}.
We follow \cite{guo2017bayesian} to define the MTD
as the highest dose whose probability of DLT is close to or lower than $p^*$. Specifically, the doses $z$ with $p_z \in [p^* - 0.05, p^* + 0.05]$ are considered MTDs, and if such doses do not exist, the highest dose $z$ with $p_z < p^*$ is considered as the MTD.
We note the definition of MTD may be slightly different in other articles.

\begin{table}[h!]
\caption{True DLT probabilities of the 18 dose-toxicity scenarios.
The target DLT probability is 0.2 for scenarios 1--9 and is 0.3 for scenarios 10--18. The MTDs are marked in bold. }
\label{tbl:simu_DLT_prob}
\begin{center}
\scalebox{1}{
\renewcommand{\arraystretch}{0.8}
\begin{tabular}{cccccccc}
\hline \hline
\multirow{2}{*}{Scn.}  & \multicolumn{7}{c}{Dose levels} \\ \cline{2-8}
  & 1 & 2 & 3 & 4 & 5 & 6 & 7  \\ \hline
  & \multicolumn{7}{c}{Target DLT probability 0.20}  \\
1  &  0.28  &  0.36  &  0.44  &  0.52  &  0.60  &  0.68  & 0.76 \\
2  &  0.05  &  \textbf{0.20}  &  0.46  &  0.50  &  0.60  &  0.70  & 0.80  \\
3  &  0.02  &  0.05  &  \textbf{0.20}  &  0.28  &  0.34  &  0.40  &  0.44  \\
4  &  0.01  &  0.05  &  0.10  &  \textbf{0.20}  &  0.32  &  0.50  &  0.70 \\
5  &  0.01  &  0.04  &  0.07  &  \textbf{0.10}  &  0.50  &  0.70  &  0.90 \\
6  &  0.01  &  0.05  &  0.10  &  0.14  &  \textbf{0.20}  &  0.26  &  0.34 \\
7  &  0.01  &  0.02  &  0.03  &  0.05  &  \textbf{0.20}  &  0.40  &  0.50 \\
8  &  0.01  &  0.04  &  0.07  &  0.10  &  \textbf{0.15}  &  \textbf{0.20}  &  \textbf{0.25} \\
9  &  0.01  &  0.02  &  0.03  &  0.04  &  0.05  &  \textbf{0.20}  &  0.45 \\
\hline
  & \multicolumn{7}{c}{Target DLT probability 0.30}  \\
10  &  0.40  &  0.45  &  0.50  &  0.55  &  0.60  &  0.65  &  0.70 \\
11  &  \textbf{0.30}  &  0.40  &  0.50  &  0.60  &  0.70  &  0.80  &  0.90 \\
12  &  0.14  &  \textbf{0.30}  &  0.39  &  0.48  &  0.56  &  0.64  &  0.70 \\
13  &  0.07  &  \textbf{0.23}  &  0.41  &  0.49  &  0.62  &  0.68  &  0.73 \\
14  &  0.05  &  0.15  &  \textbf{0.30}  &  0.40  &  0.50  &  0.60  &  0.70 \\
15  &  0.05  &  0.12  &  0.20  &  \textbf{0.30}  &  0.38  &  0.49  &  0.56 \\
16  &  0.01  &  0.04  &  0.08  &  0.15  &  \textbf{0.30}  &  0.36  &  0.43 \\
17  &  0.02  &  0.04  &  0.08  &  0.10  &  0.20  &  \textbf{0.30}  &  0.40 \\
18  &  0.01  &  0.03  &  0.05  &  0.07  &  0.09  &  \textbf{0.30}  &  0.50 \\
\hline \hline
\end{tabular}
}
\end{center}
\end{table}

\subsection{Design Specifications}
\label{supp:sec:design_spec}



The specification of each design is as follows.
For TITE-TPI, POD-TPI, and mTPI-2, the equivalence interval is chosen with $\epsilon_1 = \epsilon_2 = 0.05$.
For CRM and TITE-CRM, we use the power model, $p_z = \phi(z, \alpha) = p_{0z}^{\exp(\alpha)}$, with $\alpha \sim \text{N}(0, 1.34^2)$. 
The skeleton $(p_{01}, \ldots, p_{0D})$ is calibrated based on \cite{lee2009model}, with prior guess of MTD being the middle dose 4 and halfwidth of the indifference interval being 0.05.
For BOIN and TITE-BOIN, the decision boundaries are calculated based on $p_{\tL} = 0.6 p^*$ and $p_{\tR} = 1.4 p^*$. 

The MTD selection rules of the time-to-event designs follow their complete-data counterparts. The doses with $\Pr(p_z > p^* \mid \data) > 0.95$ are excluded from the MTD candidates for all designs.

\clearpage

\subsection{Scenario-specific Results}
\label{supp:sec:ss_result}

The figures below show the scenario-specific operating characteristics under Setting 1 for mTPI-2, TITE-TPI, POD-TPI, CRM, TITE-CRM, BOIN, and TITE-BOIN.

\begin{center}
\includegraphics[width=\textwidth]{./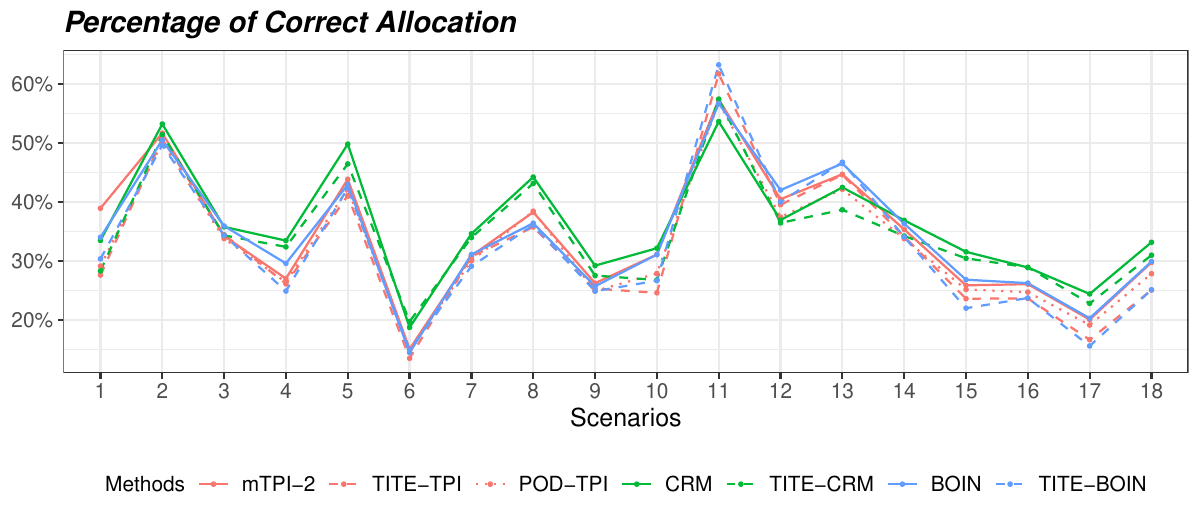}
\includegraphics[width=\textwidth]{./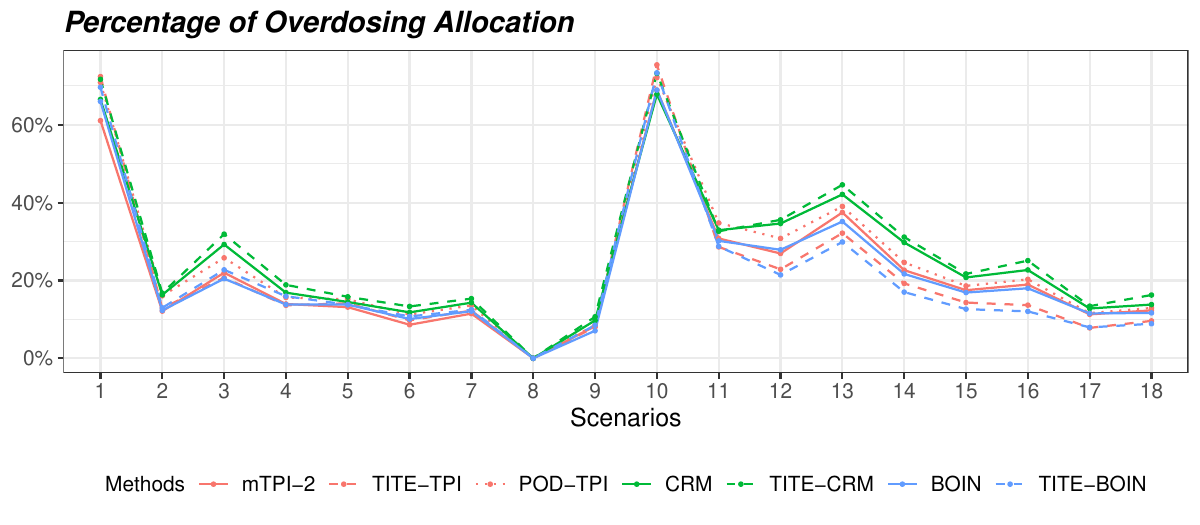}
\includegraphics[width=\textwidth]{./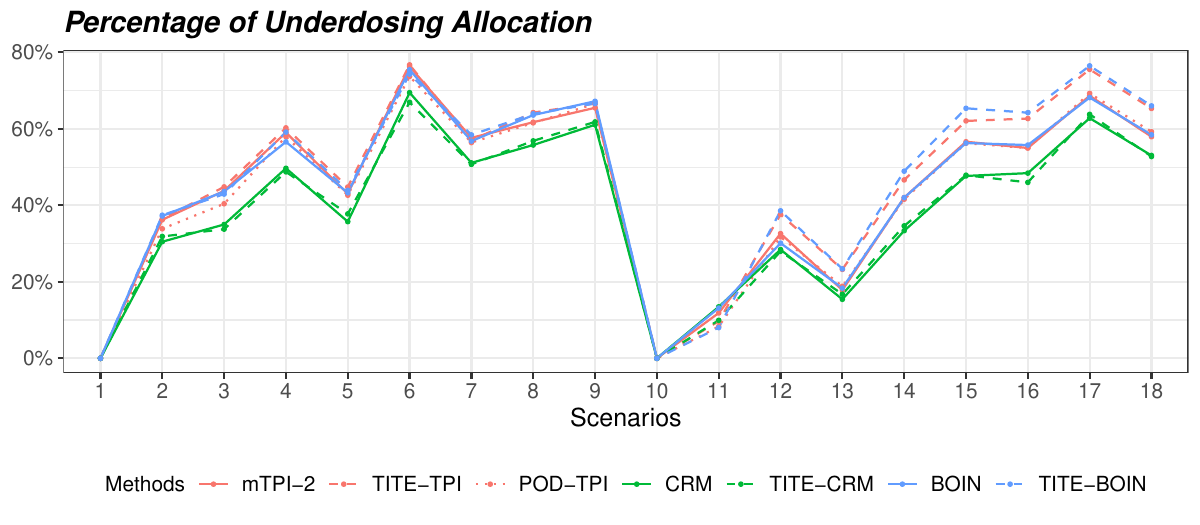}
\includegraphics[width=\textwidth]{./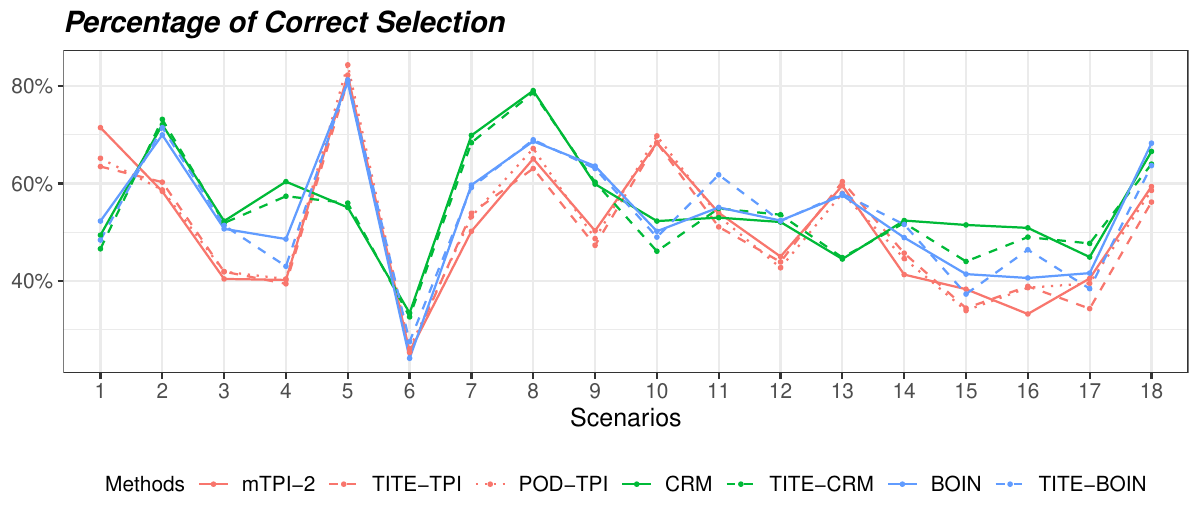}
\includegraphics[width=\textwidth]{./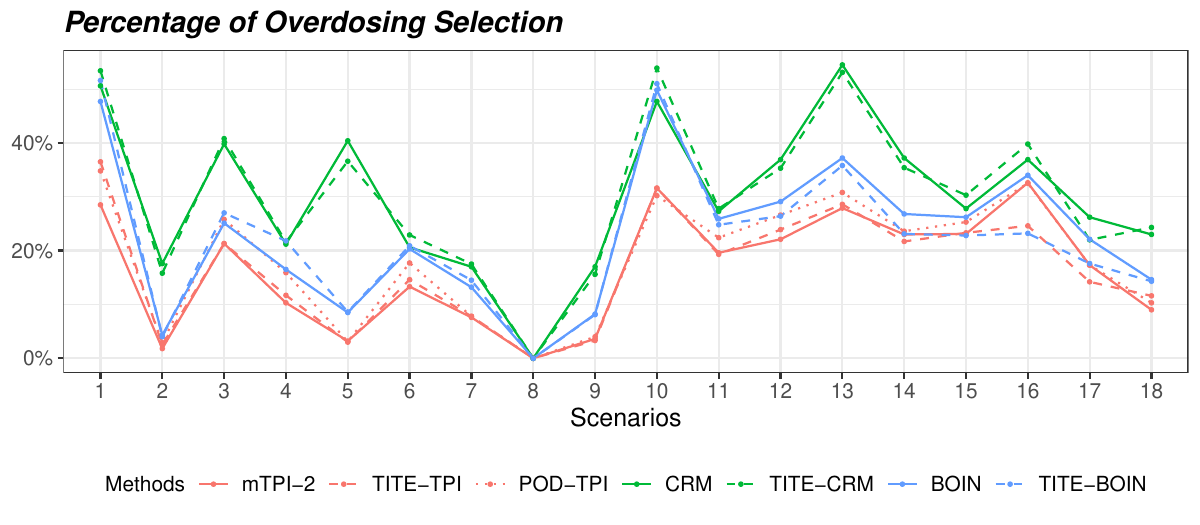}
\includegraphics[width=\textwidth]{./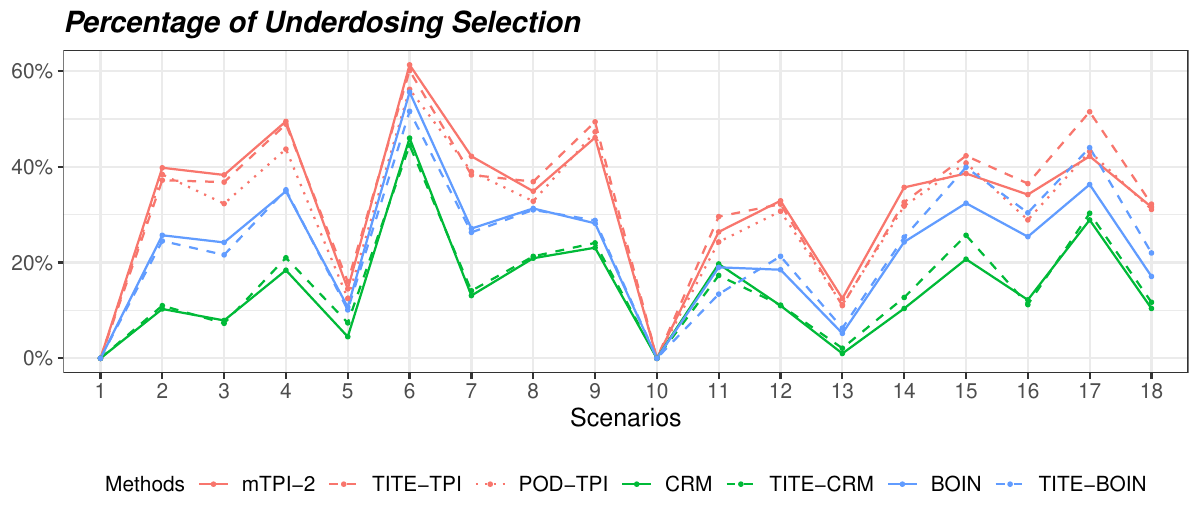}
\includegraphics[width=\textwidth]{./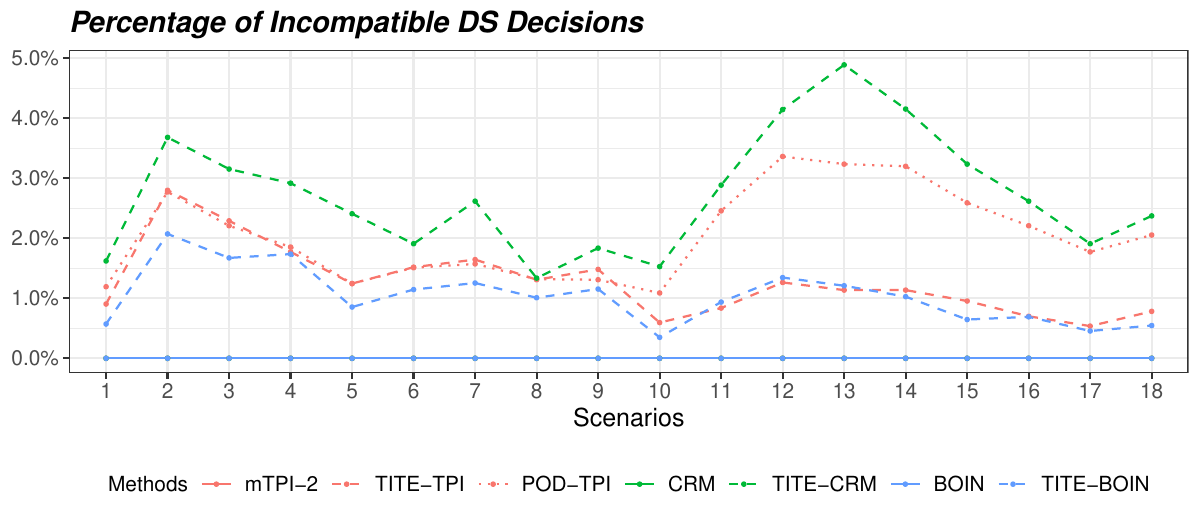}
\includegraphics[width=\textwidth]{./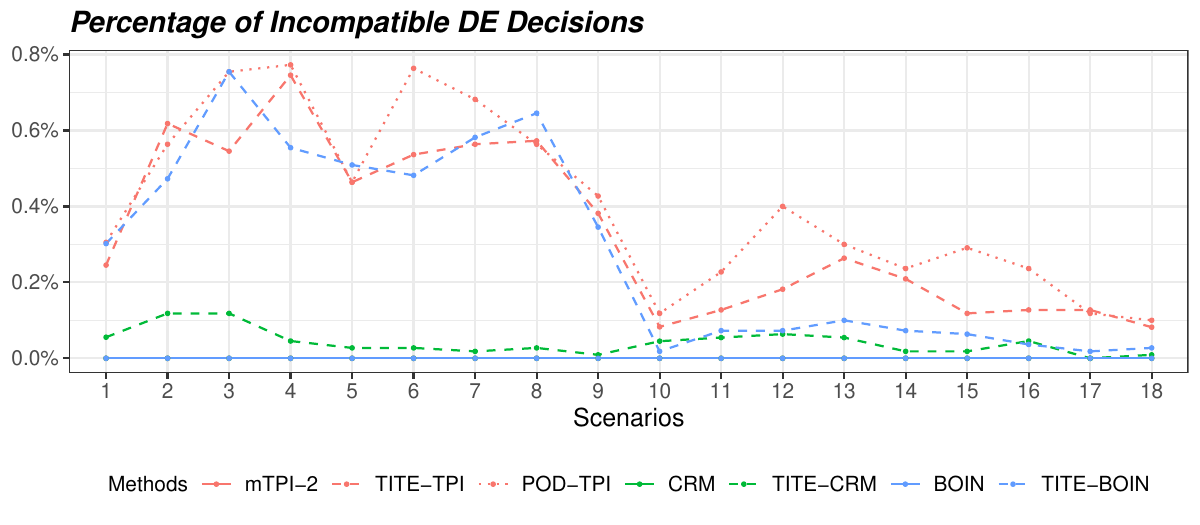}
\includegraphics[width=\textwidth]{./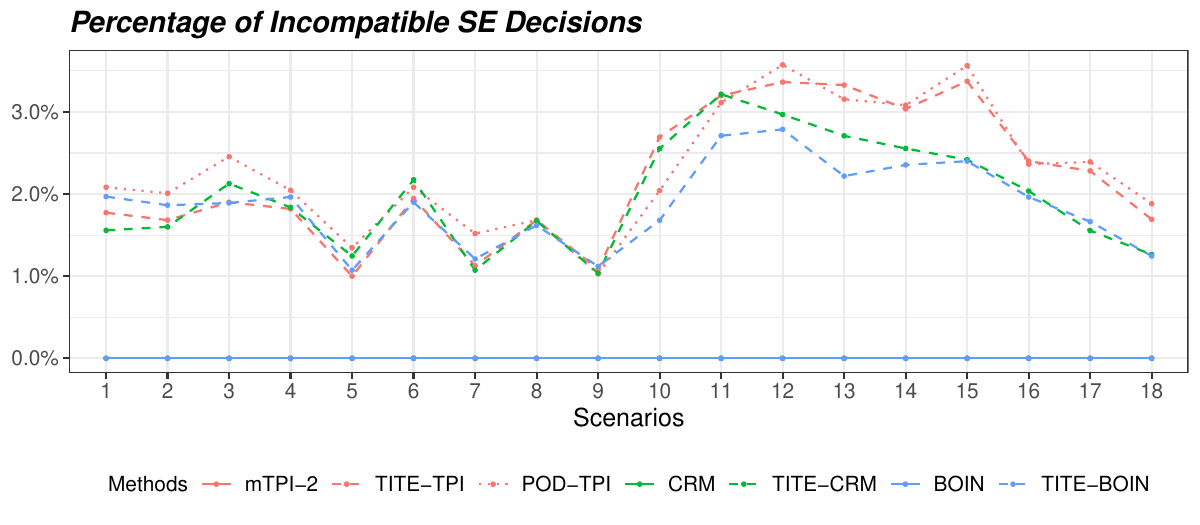}
\includegraphics[width=\textwidth]{./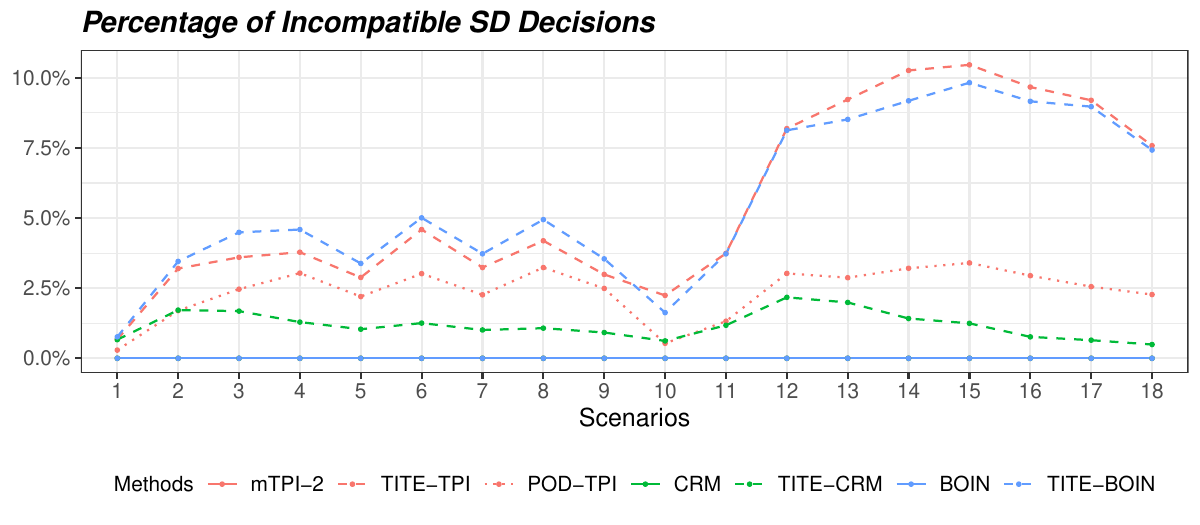}
\includegraphics[width=\textwidth]{./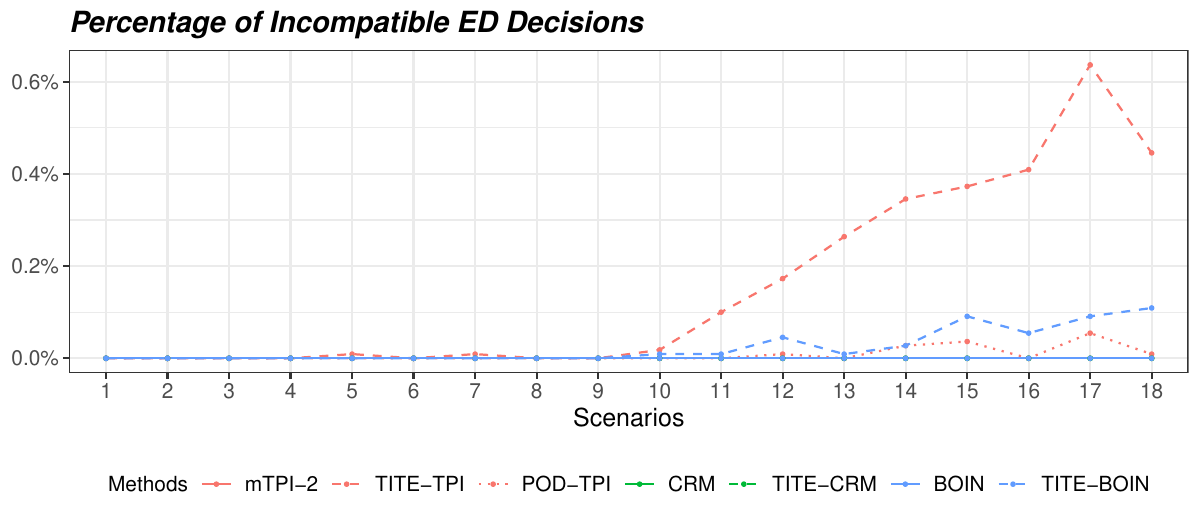}
\includegraphics[width=\textwidth]{./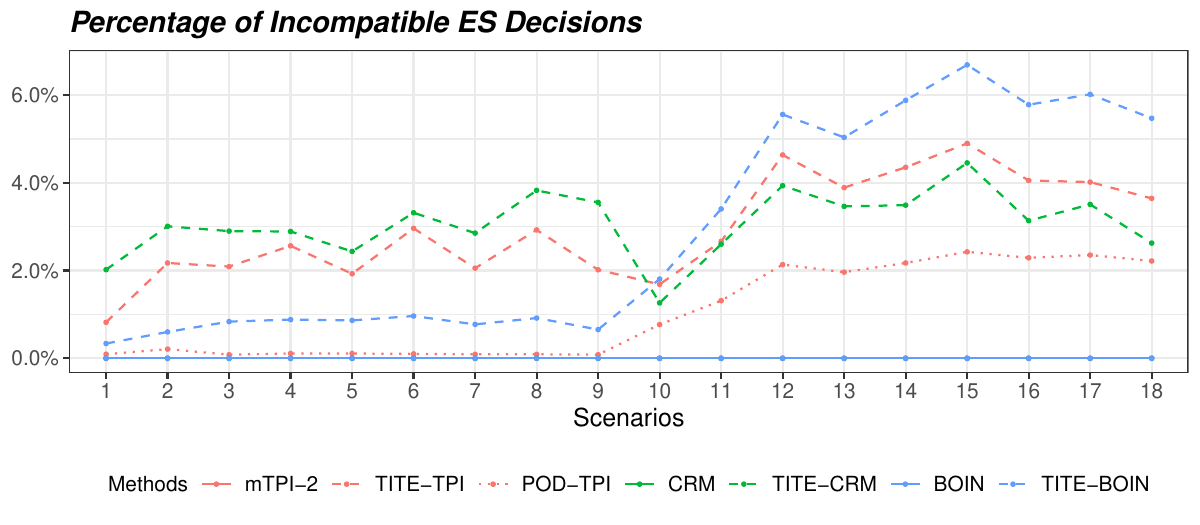}
\includegraphics[width=\textwidth]{./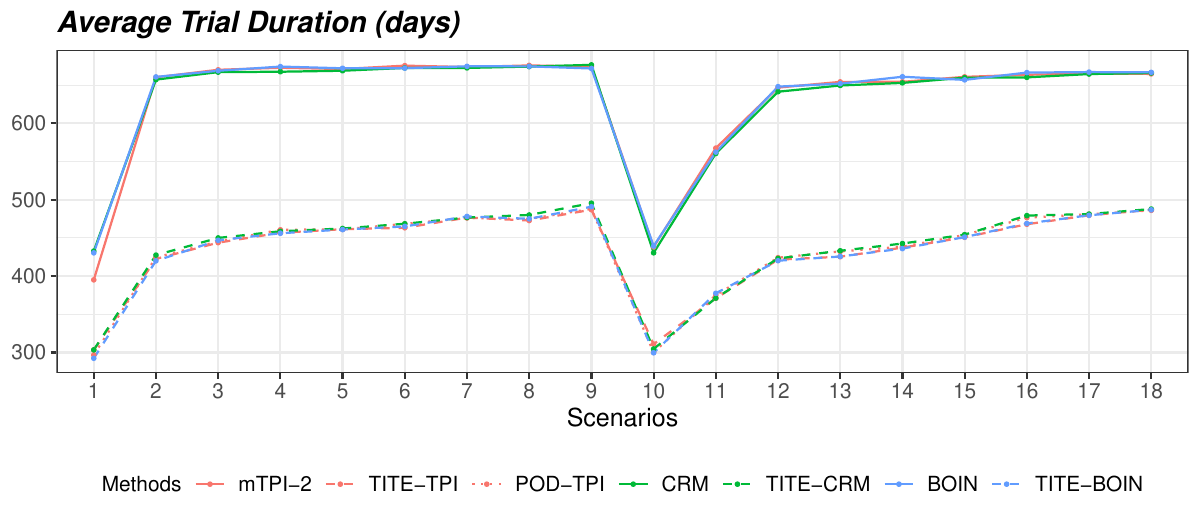}
\end{center}

The figures below show the scenario-specific operating characteristics under Setting 1 for mTPI-2-LA, TITE-TPI-NS, TITE-TPI-PSR, TITE-TPI-PSR2, POD-TPI-NS, POD-TPI-PSR, and POD-TPI-PSR2.

\begin{center}
\includegraphics[width=\textwidth]{./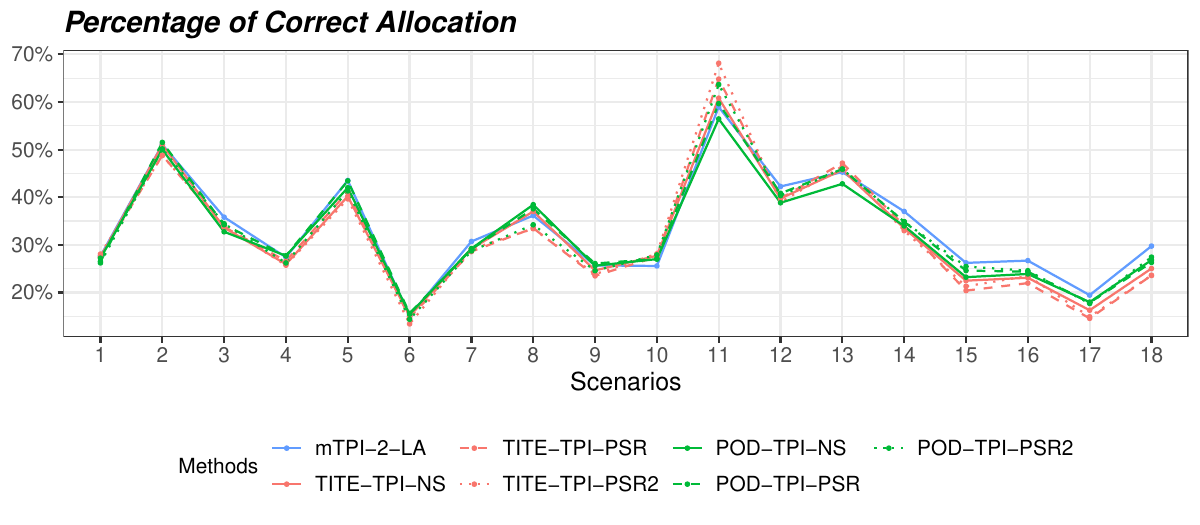}
\includegraphics[width=\textwidth]{./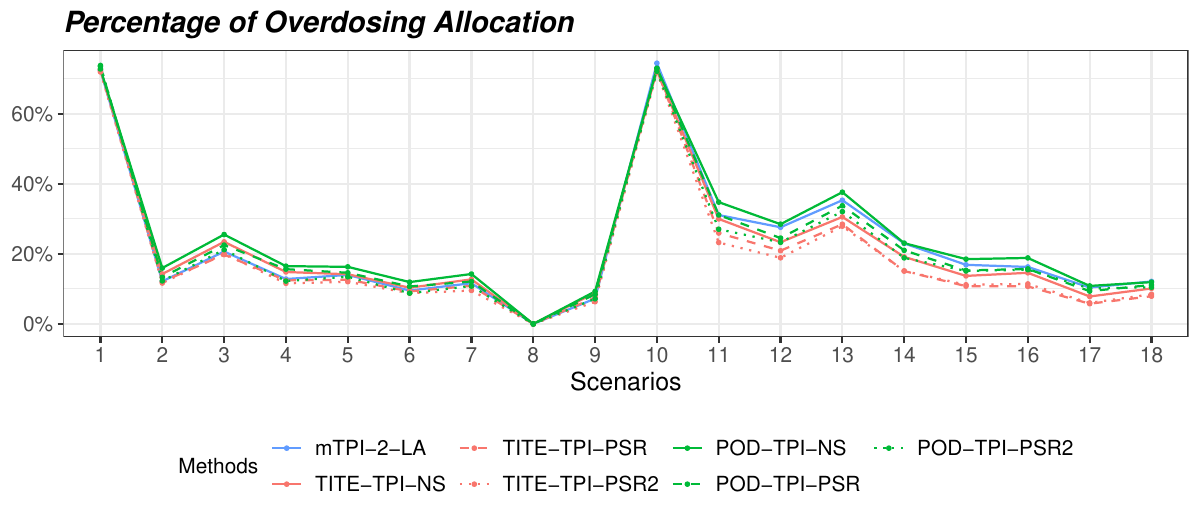}
\includegraphics[width=\textwidth]{./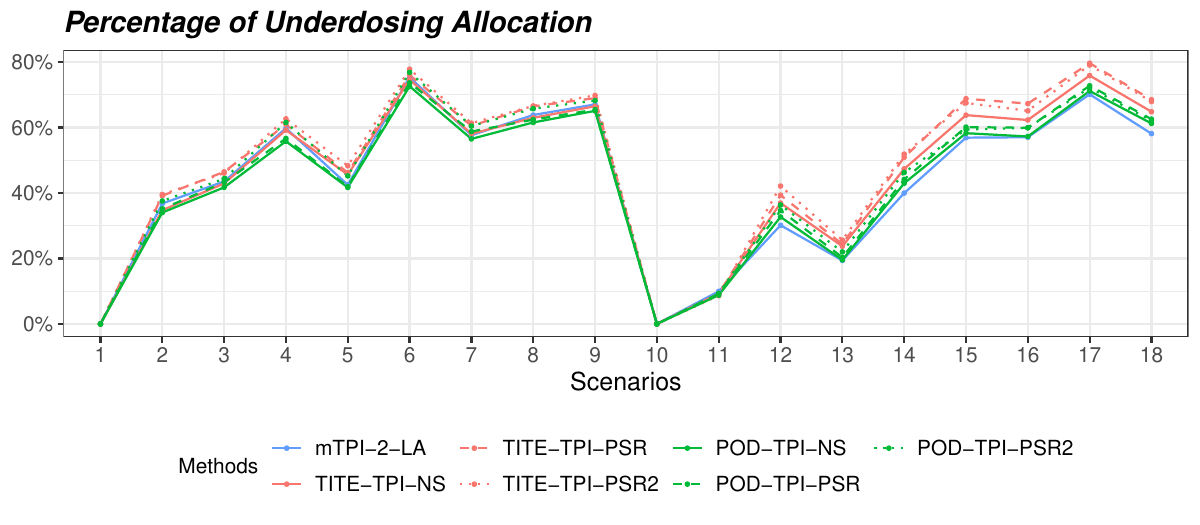}
\includegraphics[width=\textwidth]{./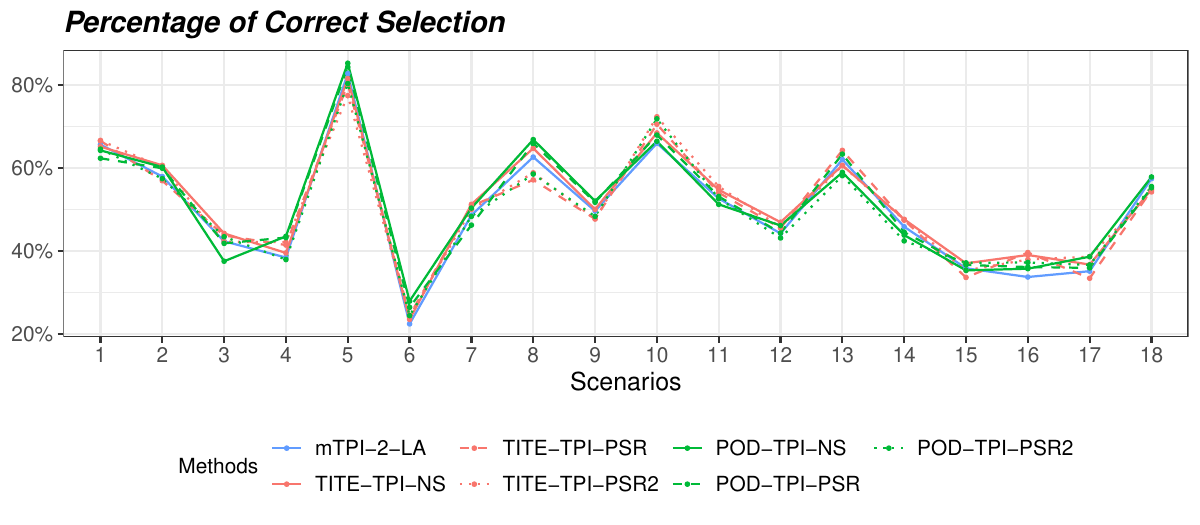}
\includegraphics[width=\textwidth]{./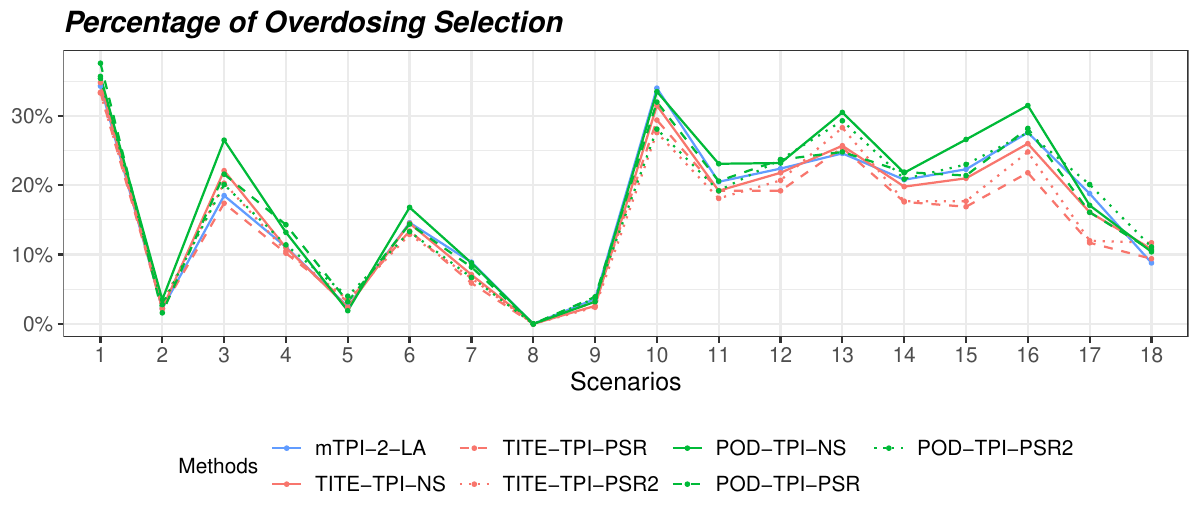}
\includegraphics[width=\textwidth]{./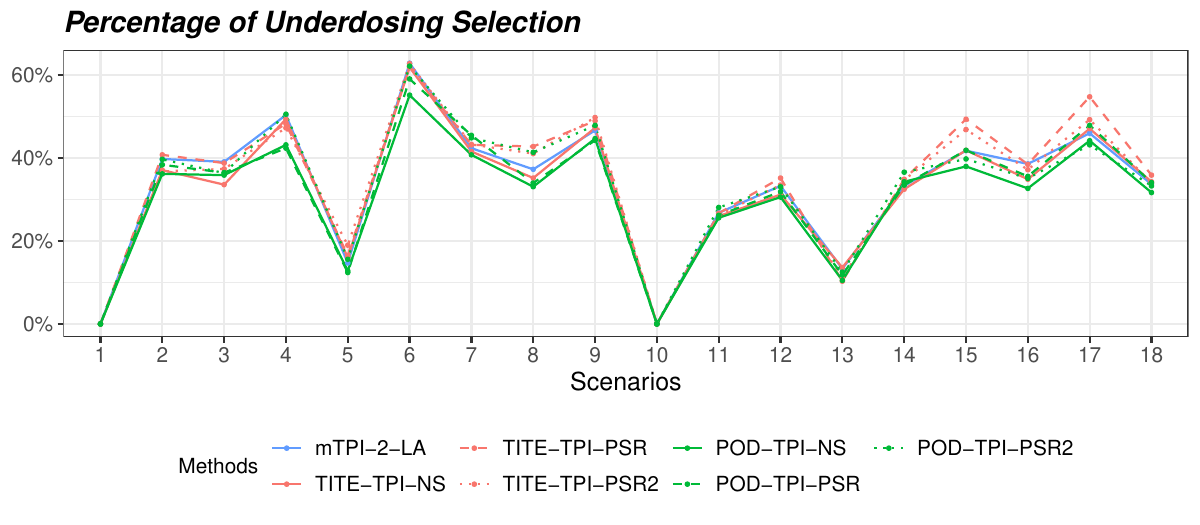}
\includegraphics[width=\textwidth]{./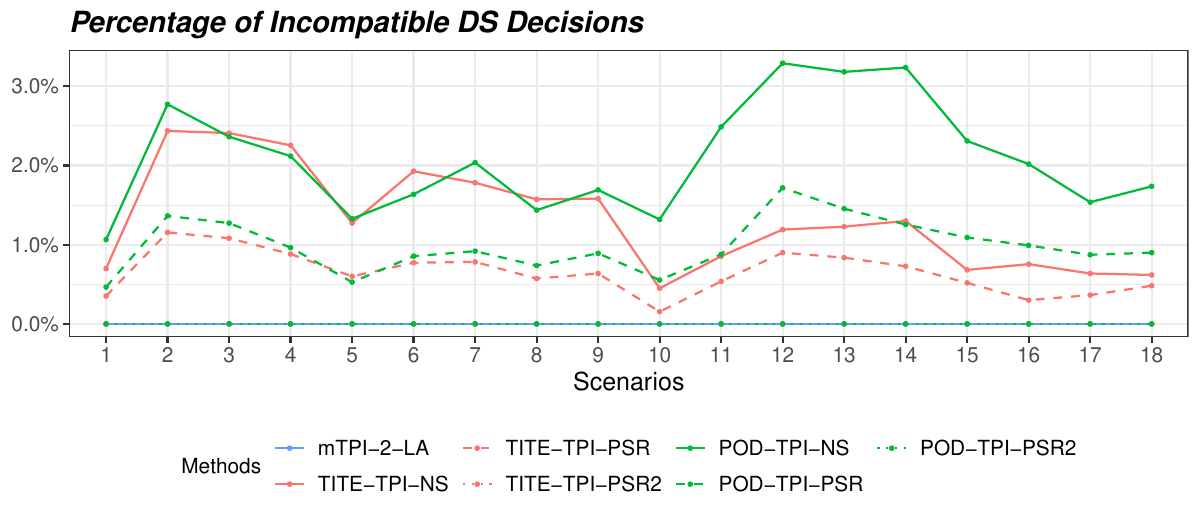}
\includegraphics[width=\textwidth]{./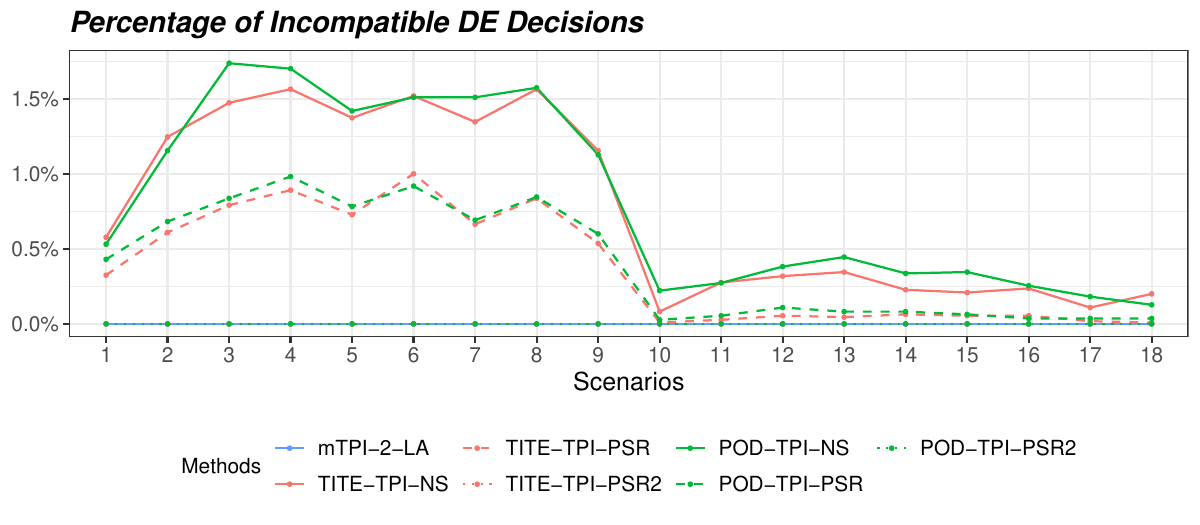}
\includegraphics[width=\textwidth]{./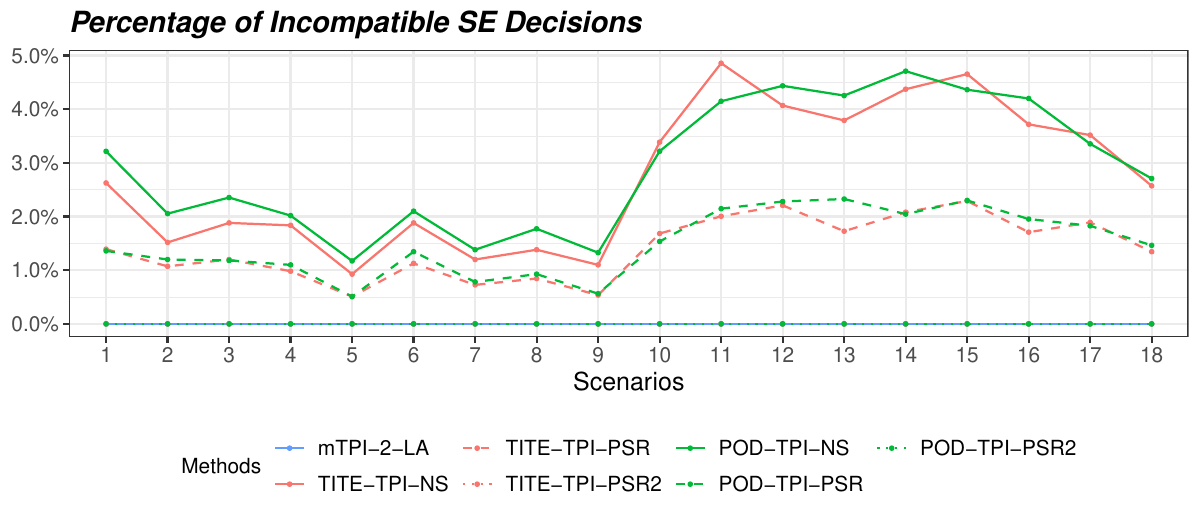}
\includegraphics[width=\textwidth]{./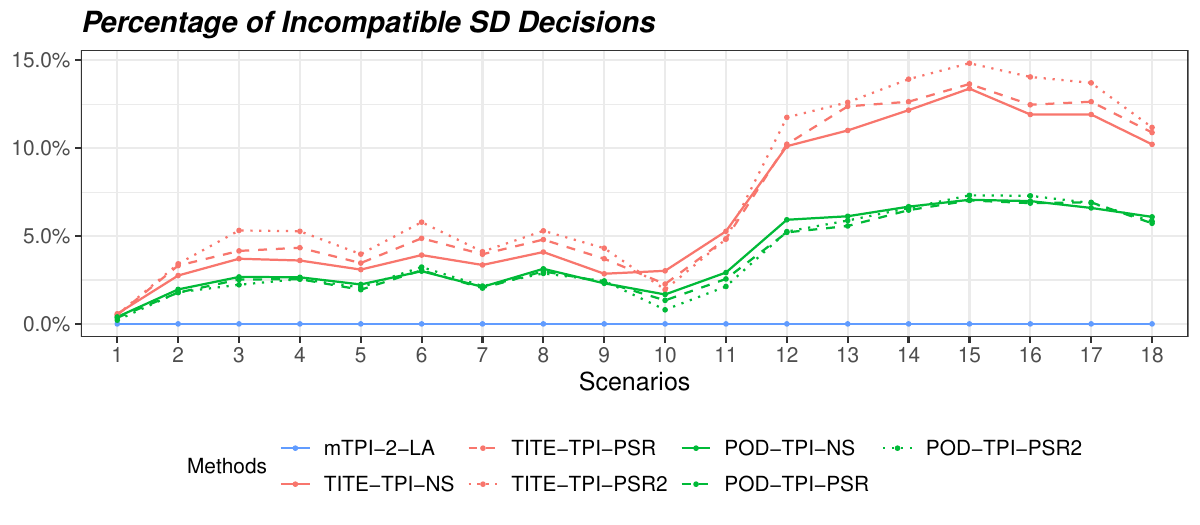}
\includegraphics[width=\textwidth]{./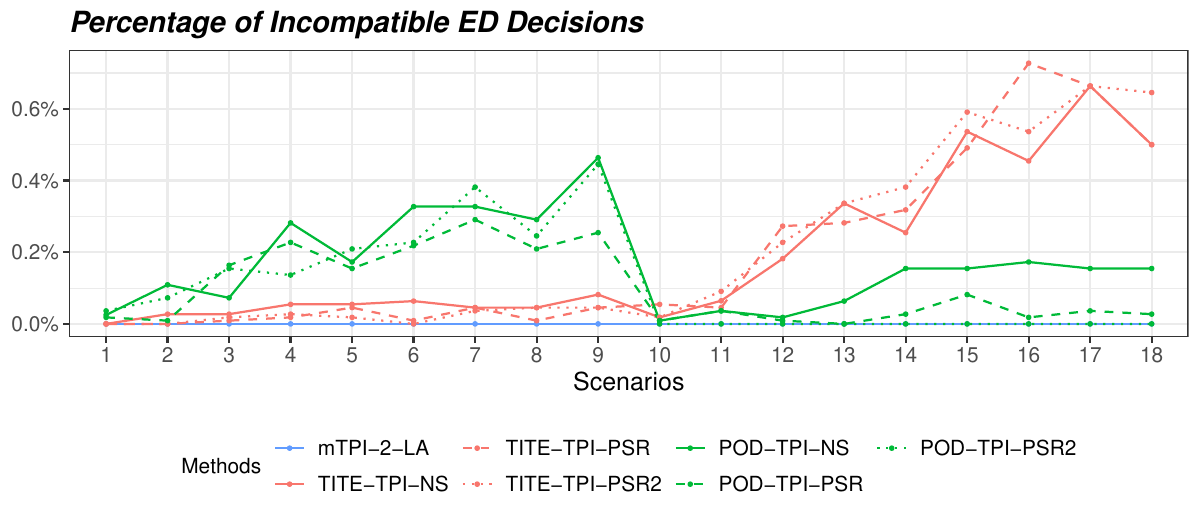}
\includegraphics[width=\textwidth]{./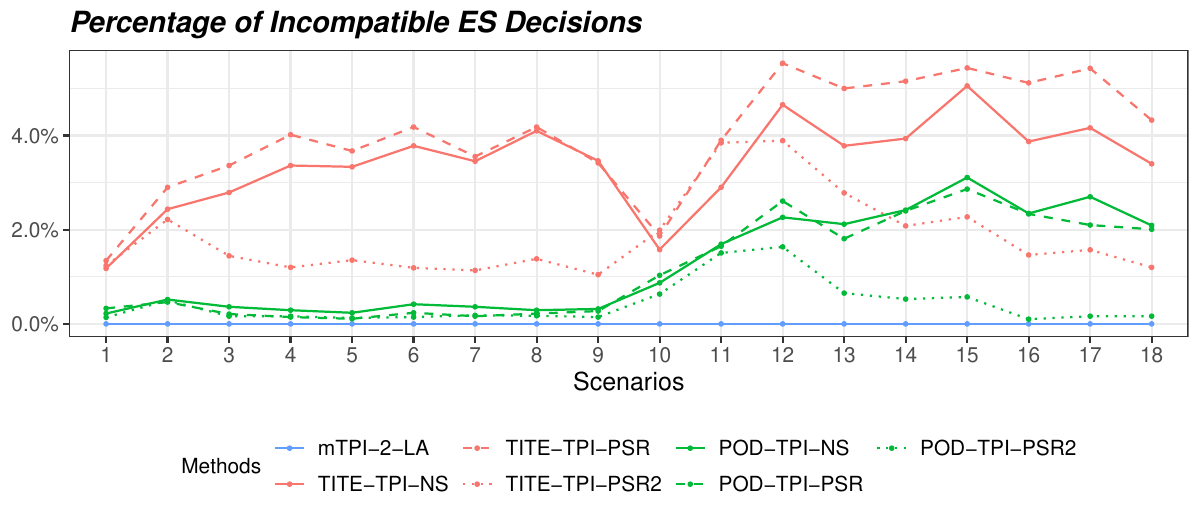}
\includegraphics[width=\textwidth]{./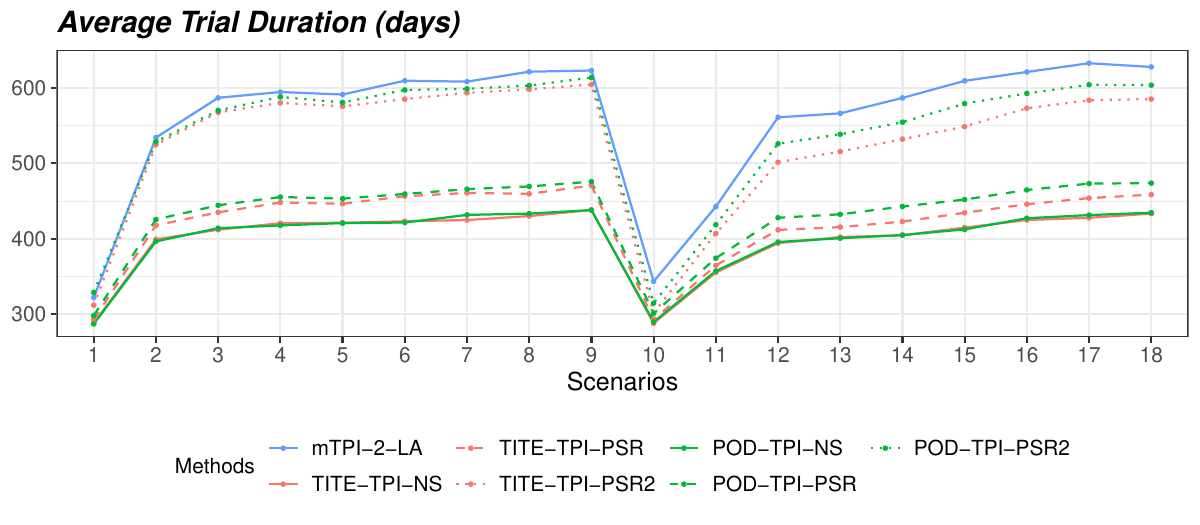}
\end{center}

\clearpage

\subsection{Sensitivity Analyses with Different Maximum Sample Sizes}
\label{supp:sec:sens_sample_size}

To explore the performance of time-to-event designs with different 
we perform additional sensitivity analyses with alternative maximum sample sizes ($N^* = 48$ and $60$). The results are summarized in Tables \ref{supp:tbl:simu_48} and \ref{supp:tbl:simu_60} below. It is clear that the PCA and PCS of all designs increase with the sample size.
However, larger sample sizes are also associated with higher costs and longer trial duration.
Therefore, the investigator needs to strike a balance between sample size and the accuracy of MTD estimation.

\begin{center}
\begin{longtable}{l@{\extracolsep{4pt}}c@{\extracolsep{4pt}}c@{\extracolsep{4pt}}c@{\extracolsep{4pt}}c@{\extracolsep{4pt}}c@{\extracolsep{4pt}}c@{\extracolsep{4pt}}c@{\extracolsep{4pt}}c@{\extracolsep{4pt}}c@{\extracolsep{4pt}}c@{\extracolsep{4pt}}c@{\extracolsep{4pt}}c@{\extracolsep{4pt}}c}
\caption{Summary of simulation results under 18 dose-toxicity scenarios and 3 time-to-toxicity and accrual rate settings. 
Values shown are averages over simulated trials and scenarios.
The maximum sample size is $N^* = 48$. PCA, POA, PUA, PCS, POS, and PUS are in \%, DS, DE, SE, SD, ED, and ES are in $1/10^{3}$, and Dur is in days.} \label{supp:tbl:simu_48} \\
\hline \hline
\multicolumn{1}{c}{\multirow{2}{*}{Design}} & \multicolumn{3}{c}{Allocation} & \multicolumn{3}{c}{Selection} & \multicolumn{6}{c}{Risk} & Speed \\
\cline{2-4} \cline{5-7} \cline{8-13} \cline{14-14}
& PCA & POA & PUA & PCS & POS & PUS & DS & DE & SE & SD & ED & ES & Dur \\ \hline 
\endfirsthead

\multicolumn{14}{c}%
{ \tablename\ \thetable{} -- continued from previous page} \\
\hline 
\multicolumn{1}{c}{\multirow{2}{*}{Design}} & \multicolumn{3}{c}{Allocation} & \multicolumn{3}{c}{Selection} & \multicolumn{6}{c}{Risk} & Speed \\
\cline{2-4} \cline{5-7} \cline{8-13} \cline{14-14}
& PCA & POA & PUA & PCS & POS & PUS & DS & DE & SE & SD & ED & ES & Dur \\ \hline
\endhead

\hline \multicolumn{14}{r}{{Continued on next page}} \\
\endfoot

\hline \hline
\endlastfoot

 \multicolumn{14}{c}{Setting 1} \\ 
mTPI-2 & 39.0 & 22.0 & 39.0 & 55.4 & 13.5 & 31.2 & 0.0 & 0.0 & 0.0 & 0.0 & 0.0 & 0.0 & 839 \\
TITE-TPI & 37.4 & 21.4 & 41.2 & 55.8 & 14.2 & 30.0 & 15.0 & 2.9 & 20.8 & 52.9 & 1.3 & 27.2 & 549 \\
POD-TPI & 37.2 & 24.5 & 38.3 & 55.8 & 15.2 & 29.0 & 23.4 & 3.3 & 21.9 & 22.8 & 0.0 & 8.5 & 551 \\
CRM & 41.2 & 24.4 & 34.4 & 60.4 & 27.8 & 11.7 & 0.0 & 0.0 & 0.0 & 0.0 & 0.0 & 0.0 & 838 \\
TITE-CRM & 40.0 & 26.0 & 34.0 & 59.0 & 28.4 & 12.5 & 27.4 & 0.3 & 16.9 & 11.1 & 0.0 & 27.9 & 552 \\
BOIN & 39.0 & 21.9 & 39.1 & 58.6 & 19.8 & 21.6 & 0.0 & 0.0 & 0.0 & 0.0 & 0.0 & 0.0 & 840 \\
TITE-BOIN & 37.3 & 21.6 & 41.2 & 58.2 & 20.0 & 21.8 & 12.9 & 2.4 & 18.0 & 51.7 & 0.2 & 26.0 & 551 \\
\hdashline
mTPI-2-LA & 38.4 & 22.8 & 38.7 & 55.5 & 14.0 & 30.6 & 0.0 & 0.0 & 0.0 & 0.0 & 0.0 & 0.0 & 718 \\
TITE-TPI-NS & 36.8 & 22.3 & 40.9 & 55.7 & 14.5 & 29.9 & 15.9 & 6.2 & 25.8 & 59.2 & 1.6 & 31.1 & 515 \\
TITE-TPI-PSR & 36.5 & 20.2 & 43.3 & 55.3 & 13.6 & 31.1 & 7.4 & 2.8 & 13.6 & 64.6 & 1.7 & 37.4 & 541 \\
TITE-TPI-PSR2 & 36.5 & 19.4 & 44.1 & 55.1 & 13.1 & 31.8 & 0.0 & 0.0 & 0.0 & 70.3 & 1.8 & 16.9 & 675 \\
POD-TPI-NS & 37.3 & 24.8 & 37.9 & 56.2 & 15.3 & 28.4 & 23.2 & 6.8 & 27.2 & 34.2 & 1.2 & 10.2 & 516 \\
POD-TPI-PSR & 37.3 & 22.8 & 39.8 & 55.6 & 14.4 & 30.0 & 11.4 & 3.0 & 13.9 & 33.1 & 0.9 & 10.2 & 553 \\
POD-TPI-PSR2 & 37.6 & 21.2 & 41.2 & 55.2 & 13.4 & 31.4 & 0.0 & 0.0 & 0.0 & 33.1 & 0.7 & 3.7 & 693 \\
\hline
\multicolumn{14}{c}{Setting 2} \\ 
mTPI-2 & 39.1 & 22.0 & 38.9 & 55.7 & 13.2 & 31.0 & 0.0 & 0.0 & 0.0 & 0.0 & 0.0 & 0.0 & 856 \\
TITE-TPI & 36.8 & 24.7 & 38.6 & 56.0 & 15.7 & 28.2 & 29.3 & 7.4 & 33.4 & 42.7 & 1.1 & 23.6 & 561 \\
POD-TPI & 36.4 & 27.7 & 35.9 & 56.0 & 16.6 & 27.4 & 44.3 & 8.6 & 35.6 & 18.4 & 0.0 & 7.6 & 563 \\
CRM & 40.9 & 24.4 & 34.7 & 60.1 & 27.8 & 12.1 & 0.0 & 0.0 & 0.0 & 0.0 & 0.0 & 0.0 & 856 \\
TITE-CRM & 39.5 & 28.1 & 32.4 & 59.8 & 27.5 & 12.6 & 50.0 & 1.4 & 30.2 & 8.2 & 0.0 & 24.2 & 564 \\
BOIN & 39.2 & 21.9 & 39.0 & 59.1 & 19.5 & 21.5 & 0.0 & 0.0 & 0.0 & 0.0 & 0.0 & 0.0 & 857 \\
TITE-BOIN & 36.8 & 24.3 & 38.8 & 58.0 & 21.6 & 20.4 & 27.0 & 6.2 & 30.0 & 42.3 & 0.1 & 21.7 & 562 \\
\hdashline
mTPI-2-LA & 38.4 & 22.8 & 38.8 & 55.8 & 13.2 & 30.9 & 0.0 & 0.0 & 0.0 & 0.0 & 0.0 & 0.0 & 739 \\
TITE-TPI-NS & 35.9 & 26.6 & 37.5 & 55.8 & 16.0 & 28.2 & 31.9 & 14.9 & 42.6 & 47.2 & 1.3 & 25.0 & 528 \\
TITE-TPI-PSR & 36.9 & 23.5 & 39.6 & 56.6 & 14.5 & 28.9 & 20.3 & 8.5 & 28.2 & 50.8 & 1.6 & 30.5 & 554 \\
TITE-TPI-PSR2 & 36.9 & 20.0 & 43.1 & 54.9 & 13.1 & 32.0 & 0.0 & 0.0 & 0.0 & 57.6 & 1.7 & 15.4 & 700 \\
POD-TPI-NS & 36.0 & 29.5 & 34.5 & 56.4 & 17.0 & 26.6 & 47.6 & 17.1 & 45.3 & 24.9 & 1.0 & 8.6 & 529 \\
POD-TPI-PSR & 37.0 & 26.5 & 36.5 & 56.2 & 16.0 & 27.9 & 29.4 & 9.5 & 30.2 & 24.7 & 0.8 & 8.6 & 565 \\
POD-TPI-PSR2 & 37.8 & 22.0 & 40.2 & 56.0 & 13.3 & 30.8 & 0.0 & 0.0 & 0.0 & 23.4 & 0.8 & 3.7 & 718 \\
\hline
\multicolumn{14}{c}{Setting 3} \\
mTPI-2 & 38.9 & 22.2 & 38.9 & 55.5 & 13.9 & 30.6 & 0.0 & 0.0 & 0.0 & 0.0 & 0.0 & 0.0 & 625 \\
TITE-TPI & 36.3 & 21.6 & 42.1 & 55.5 & 14.2 & 30.3 & 18.8 & 3.4 & 25.1 & 61.2 & 3.1 & 41.9 & 352 \\
POD-TPI & 36.3 & 23.8 & 39.9 & 55.7 & 15.2 & 29.1 & 24.1 & 3.6 & 26.1 & 41.9 & 0.1 & 23.5 & 354 \\
CRM & 40.8 & 24.4 & 34.8 & 59.6 & 28.1 & 12.3 & 0.0 & 0.0 & 0.0 & 0.0 & 0.0 & 0.0 & 624 \\
TITE-CRM & 39.5 & 26.0 & 34.5 & 59.4 & 28.1 & 12.5 & 35.6 & 0.5 & 19.4 & 12.0 & 0.0 & 45.5 & 357 \\
BOIN & 39.3 & 21.8 & 38.8 & 59.4 & 19.1 & 21.5 & 0.0 & 0.0 & 0.0 & 0.0 & 0.0 & 0.0 & 625 \\
TITE-BOIN & 36.5 & 20.6 & 42.9 & 58.9 & 19.1 & 22.0 & 15.6 & 2.6 & 21.2 & 61.9 & 0.8 & 44.6 & 351 \\
\hdashline
mTPI-2-LA & 38.4 & 22.9 & 38.7 & 55.8 & 13.7 & 30.4 & 0.0 & 0.0 & 0.0 & 0.0 & 0.0 & 0.0 & 516 \\
TITE-TPI-NS & 35.4 & 21.4 & 43.2 & 55.4 & 14.4 & 30.1 & 19.2 & 5.7 & 28.6 & 74.8 & 6.1 & 49.2 & 316 \\
TITE-TPI-PSR & 35.5 & 18.5 & 46.0 & 55.2 & 12.4 & 32.4 & 9.9 & 2.5 & 14.0 & 80.0 & 6.4 & 59.0 & 333 \\
TITE-TPI-PSR2 & 35.6 & 18.3 & 46.1 & 54.5 & 13.1 & 32.3 & 0.0 & 0.0 & 0.0 & 91.0 & 5.2 & 24.6 & 471 \\
POD-TPI-NS & 35.9 & 23.2 & 40.9 & 56.4 & 15.0 & 28.6 & 22.4 & 6.5 & 30.1 & 60.8 & 3.0 & 28.8 & 316 \\
POD-TPI-PSR & 36.3 & 21.4 & 42.3 & 55.1 & 14.1 & 30.8 & 14.3 & 3.0 & 16.1 & 59.3 & 2.2 & 26.8 & 343 \\
POD-TPI-PSR2 & 36.8 & 20.5 & 42.7 & 54.9 & 14.0 & 31.2 & 0.0 & 0.0 & 0.0 & 58.2 & 0.2 & 10.3 & 481 \\
\end{longtable}
\end{center}

\begin{center}
\begin{longtable}{l@{\extracolsep{4pt}}c@{\extracolsep{4pt}}c@{\extracolsep{4pt}}c@{\extracolsep{4pt}}c@{\extracolsep{4pt}}c@{\extracolsep{4pt}}c@{\extracolsep{4pt}}c@{\extracolsep{4pt}}c@{\extracolsep{4pt}}c@{\extracolsep{4pt}}c@{\extracolsep{4pt}}c@{\extracolsep{4pt}}c@{\extracolsep{4pt}}c}
\caption{Summary of simulation results under 18 dose-toxicity scenarios and 3 time-to-toxicity and accrual rate settings. 
Values shown are averages over simulated trials and scenarios.
The maximum sample size is $N^* = 60$. PCA, POA, PUA, PCS, POS, and PUS are in \%, DS, DE, SE, SD, ED, and ES are in $1/10^{3}$, and Dur is in days.} \label{supp:tbl:simu_60} \\
\hline \hline
\multicolumn{1}{c}{\multirow{2}{*}{Design}} & \multicolumn{3}{c}{Allocation} & \multicolumn{3}{c}{Selection} & \multicolumn{6}{c}{Risk} & Speed \\
\cline{2-4} \cline{5-7} \cline{8-13} \cline{14-14}
& PCA & POA & PUA & PCS & POS & PUS & DS & DE & SE & SD & ED & ES & Dur \\ \hline 
\endfirsthead

\multicolumn{14}{c}%
{ \tablename\ \thetable{} -- continued from previous page} \\
\hline 
\multicolumn{1}{c}{\multirow{2}{*}{Design}} & \multicolumn{3}{c}{Allocation} & \multicolumn{3}{c}{Selection} & \multicolumn{6}{c}{Risk} & Speed \\
\cline{2-4} \cline{5-7} \cline{8-13} \cline{14-14}
& PCA & POA & PUA & PCS & POS & PUS & DS & DE & SE & SD & ED & ES & Dur \\ \hline
\endhead

\hline \multicolumn{14}{r}{{Continued on next page}} \\
\endfoot

\hline \hline
\endlastfoot

 \multicolumn{14}{c}{Setting 1} \\ 
mTPI-2 & 42.8 & 21.4 & 35.8 & 58.6 & 11.8 & 29.6 & 0.0 & 0.0 & 0.0 & 0.0 & 0.0 & 0.0 & 1046 \\
TITE-TPI & 40.6 & 21.0 & 38.4 & 58.4 & 12.1 & 29.5 & 15.7 & 2.3 & 19.9 & 47.5 & 1.1 & 25.1 & 662 \\
POD-TPI & 40.8 & 23.7 & 35.4 & 59.0 & 12.5 & 28.5 & 24.1 & 2.8 & 20.6 & 21.0 & 0.0 & 7.4 & 664 \\
CRM & 45.0 & 23.0 & 32.0 & 63.8 & 26.1 & 10.1 & 0.0 & 0.0 & 0.0 & 0.0 & 0.0 & 0.0 & 1044 \\
TITE-CRM & 43.9 & 24.9 & 31.2 & 63.5 & 26.4 & 10.0 & 25.6 & 0.3 & 15.0 & 10.3 & 0.0 & 25.1 & 664 \\
BOIN & 42.6 & 20.8 & 36.7 & 61.9 & 16.2 & 21.9 & 0.0 & 0.0 & 0.0 & 0.0 & 0.0 & 0.0 & 1045 \\
TITE-BOIN & 41.2 & 20.9 & 37.9 & 61.7 & 17.3 & 21.0 & 13.7 & 2.0 & 17.4 & 47.0 & 0.1 & 23.8 & 662 \\
\hdashline
mTPI-2-LA & 42.1 & 21.9 & 36.0 & 58.3 & 11.7 & 30.0 & 0.0 & 0.0 & 0.0 & 0.0 & 0.0 & 0.0 & 861 \\
TITE-TPI-NS & 40.4 & 21.6 & 38.0 & 58.4 & 12.3 & 29.2 & 16.6 & 5.1 & 23.8 & 53.9 & 1.4 & 27.6 & 627 \\
TITE-TPI-PSR & 40.6 & 19.6 & 39.8 & 57.8 & 11.4 & 30.8 & 8.4 & 2.4 & 12.4 & 57.1 & 1.3 & 32.2 & 654 \\
TITE-TPI-PSR2 & 40.4 & 18.5 & 41.1 & 57.4 & 11.3 & 31.3 & 0.0 & 0.0 & 0.0 & 63.6 & 1.4 & 15.6 & 815 \\
POD-TPI-NS & 40.8 & 24.2 & 35.0 & 58.7 & 13.1 & 28.3 & 24.4 & 5.7 & 24.3 & 30.8 & 1.0 & 8.8 & 626 \\
POD-TPI-PSR & 41.5 & 22.2 & 36.3 & 58.8 & 12.3 & 28.8 & 11.9 & 2.6 & 13.0 & 29.1 & 0.8 & 8.4 & 671 \\
POD-TPI-PSR2 & 41.4 & 20.5 & 38.1 & 58.0 & 11.3 & 30.7 & 0.0 & 0.0 & 0.0 & 28.6 & 0.6 & 3.2 & 835 \\
\hline
\multicolumn{14}{c}{Setting 2} \\ 
mTPI-2 & 43.0 & 20.7 & 36.3 & 58.7 & 11.0 & 30.2 & 0.0 & 0.0 & 0.0 & 0.0 & 0.0 & 0.0 & 1061 \\
TITE-TPI & 40.1 & 24.1 & 35.8 & 58.5 & 13.1 & 28.4 & 31.1 & 6.0 & 31.1 & 38.6 & 0.9 & 20.1 & 675 \\
POD-TPI & 39.8 & 27.2 & 33.0 & 58.9 & 14.0 & 27.1 & 45.8 & 7.2 & 32.5 & 16.9 & 0.0 & 6.5 & 678 \\
CRM & 44.8 & 22.9 & 32.2 & 62.8 & 26.8 & 10.4 & 0.0 & 0.0 & 0.0 & 0.0 & 0.0 & 0.0 & 1066 \\
TITE-CRM & 43.3 & 26.8 & 29.9 & 63.9 & 25.8 & 10.3 & 46.7 & 0.9 & 26.5 & 7.8 & 0.0 & 21.7 & 676 \\
BOIN & 42.7 & 21.0 & 36.3 & 61.6 & 16.8 & 21.6 & 0.0 & 0.0 & 0.0 & 0.0 & 0.0 & 0.0 & 1064 \\
TITE-BOIN & 40.7 & 23.8 & 35.5 & 61.5 & 18.5 & 20.0 & 28.9 & 5.3 & 27.7 & 37.6 & 0.1 & 19.8 & 674 \\
\hdashline
mTPI-2-LA & 41.7 & 22.7 & 35.6 & 58.3 & 11.7 & 30.0 & 0.0 & 0.0 & 0.0 & 0.0 & 0.0 & 0.0 & 890 \\
TITE-TPI-NS & 39.6 & 25.7 & 34.7 & 58.7 & 13.4 & 28.0 & 33.1 & 12.5 & 38.1 & 42.2 & 1.0 & 22.3 & 640 \\
TITE-TPI-PSR & 40.1 & 23.1 & 36.7 & 58.3 & 12.9 & 28.8 & 21.6 & 7.1 & 25.5 & 45.2 & 1.2 & 26.6 & 670 \\
TITE-TPI-PSR2 & 41.0 & 19.3 & 39.7 & 58.2 & 11.2 & 30.6 & 0.0 & 0.0 & 0.0 & 51.2 & 1.3 & 14.4 & 842 \\
POD-TPI-NS & 39.5 & 29.1 & 31.5 & 59.3 & 14.5 & 26.2 & 48.8 & 14.2 & 40.7 & 22.3 & 0.8 & 7.2 & 642 \\
POD-TPI-PSR & 40.3 & 26.0 & 33.7 & 59.2 & 13.1 & 27.7 & 30.6 & 7.9 & 27.6 & 21.7 & 0.7 & 7.4 & 683 \\
POD-TPI-PSR2 & 41.1 & 21.5 & 37.5 & 57.4 & 12.0 & 30.6 & 0.0 & 0.0 & 0.0 & 21.2 & 0.6 & 3.2 & 864 \\
\hline
\multicolumn{14}{c}{Setting 3} \\
mTPI-2 & 42.6 & 21.2 & 36.2 & 58.3 & 11.3 & 30.4 & 0.0 & 0.0 & 0.0 & 0.0 & 0.0 & 0.0 & 776 \\
TITE-TPI & 39.7 & 21.2 & 39.1 & 58.3 & 12.7 & 29.0 & 21.2 & 2.8 & 24.2 & 58.4 & 2.7 & 38.7 & 410 \\
POD-TPI & 40.1 & 23.4 & 36.6 & 58.8 & 13.2 & 28.1 & 25.5 & 3.2 & 24.3 & 39.0 & 0.1 & 21.1 & 412 \\
CRM & 44.7 & 23.1 & 32.2 & 63.1 & 26.7 & 10.2 & 0.0 & 0.0 & 0.0 & 0.0 & 0.0 & 0.0 & 777 \\
TITE-CRM & 43.8 & 24.6 & 31.6 & 63.6 & 26.3 & 10.1 & 34.0 & 0.3 & 17.6 & 11.3 & 0.0 & 41.3 & 415 \\
BOIN & 42.4 & 21.2 & 36.4 & 61.3 & 16.9 & 21.8 & 0.0 & 0.0 & 0.0 & 0.0 & 0.0 & 0.0 & 777 \\
TITE-BOIN & 40.5 & 20.1 & 39.4 & 63.0 & 16.3 & 20.7 & 17.6 & 2.4 & 21.3 & 58.2 & 0.7 & 42.1 & 409 \\
\hdashline
mTPI-2-LA & 41.9 & 22.2 & 35.9 & 58.5 & 11.6 & 29.9 & 0.0 & 0.0 & 0.0 & 0.0 & 0.0 & 0.0 & 613 \\
TITE-TPI-NS & 39.2 & 21.2 & 39.5 & 58.4 & 13.1 & 28.5 & 20.5 & 4.9 & 27.0 & 68.6 & 4.9 & 44.0 & 372 \\
TITE-TPI-PSR & 39.2 & 18.6 & 42.2 & 57.5 & 12.0 & 30.6 & 11.1 & 2.3 & 13.3 & 72.0 & 5.2 & 54.4 & 395 \\
TITE-TPI-PSR2 & 39.5 & 17.7 & 42.8 & 57.5 & 11.5 & 31.0 & 0.0 & 0.0 & 0.0 & 84.6 & 4.1 & 22.9 & 563 \\
POD-TPI-NS & 39.8 & 23.1 & 37.2 & 59.3 & 13.3 & 27.5 & 24.5 & 5.8 & 28.8 & 55.7 & 2.4 & 25.6 & 373 \\
POD-TPI-PSR & 40.2 & 21.1 & 38.7 & 58.8 & 12.1 & 29.1 & 15.2 & 2.4 & 15.3 & 54.1 & 2.0 & 24.0 & 404 \\
POD-TPI-PSR2 & 40.2 & 20.0 & 39.8 & 58.0 & 11.3 & 30.7 & 0.0 & 0.0 & 0.0 & 51.8 & 0.2 & 8.8 & 574 \\
\end{longtable}
\end{center}

\clearpage

\subsection{Sensitivity Analyses with Different Time-to-Toxicity Models}
\label{supp:sec:t_model_spec}

We have listed several possible specifications of the time-to-toxicity model in Section \ref{sec:model_tite} and Supplementary Section \ref{supp:sec:model_tite}.
To explore how these specifications can affect the operating characteristics of a design, we conduct additional simulation studies using TITE-TPI and POD-TPI as examples. 
We consider five different time-to-toxicity models: (1) uniform distribution (U, default); (2) piecewise uniform distribution with 3 sub-intervals (PU3); (3) piecewise uniform distribution with 9 sub-intervals (PU9); (4) discrete hazard model (DH); and (5) piecewise constant hazard model with 3 sub-intervals (PCH3). Recall that the true distribution of $[T \mid Z, Y = 1]$ is a truncated Weibull distribution.  The maximum sample size here is $N^* = 36$.

Specifically, for the PU3 model, we consider 3 equal-length sub-intervals, $h_k = kW/K$ for $K = 3$ and $k = 1, 2, 3$.
A $\Dir(1, 1, 1)$ prior is assumed for the sub-interval weights $(\omega_1, \omega_2, \omega_3)$ (see Section \ref{supp:sec:model_tite_pwunif}). 
For the PU9 model, we consider 9 equal-length sub-intervals, $h_k = kW/K$ for $K = 9$ and $k = 1, \ldots, 9$.
A $\Dir(1, \ldots, 1)$ prior is assumed for the sub-interval weights $(\omega_1, \ldots, \omega_9)$. 
For the DH model, we assume a $\Beta(0.5, 0.5)$ prior for $\omega_k$, which is the discrete hazard at time $h_k$ (see Section \ref{supp:sec:model_tite_dhm}).
For the PCH3 model, we consider 3 equal-length sub-intervals, $h_k = kW/K$ for $K = 3$.
We follow \cite{liu2013bayesian} and assume a $\text{Gamma}(K / [2W(K - k + 0.5)], 1/2)$ prior for $\omega_k$, which is the hazard in the $k$th sub-interval (see Section \ref{supp:sec:model_tite_pchm}).

Table \ref{supp:tbl:simu_result_t_model} summarizes the simulation results.
To better understand the effect of the time-to-toxicity model, we also report the percentage of patients who have experienced toxicity (POT).
The performances of TITE-TPI and POD-TPI with different time-to-toxicity models are generally similar. 
The average number of DLTs in the trial is $N^* \times \text{POT} \approx 36 \times 20\% = 7.2$. As a result, there is very limited information for estimating the true time-to-toxicity distribution, and the specification of the time-to-toxicity model matters little.
Using the discrete hazard model or the piecewise constant hazard model, when many DLTs are late-onset (e.g., under Setting 2), the pending patients are weighted less, making the designs safer in such situations.
For example, under Setting 2, the designs with the DH and PCH3 models have lower POA and POT and make less frequent aggressive incompatible decisions (DS, DE, and SE).
This is consistent with the results reported in \cite{yuan2011robust} and \cite{liu2013bayesian}.

\begin{center}
\begin{longtable}{l@{\extracolsep{4pt}}c@{\extracolsep{4pt}}c@{\extracolsep{4pt}}c@{\extracolsep{4pt}}c@{\extracolsep{4pt}}c@{\extracolsep{4pt}}c@{\extracolsep{4pt}}c@{\extracolsep{4pt}}c@{\extracolsep{4pt}}c@{\extracolsep{4pt}}c@{\extracolsep{4pt}}c@{\extracolsep{4pt}}c@{\extracolsep{4pt}}c@{\extracolsep{4pt}}c}
\caption{Summary of simulation results using TITE-TPI and POD-TPI with 5 different time-to-toxicity models. 
Values shown are averages over simulated trials and scenarios. 
PCA, POA, PUA, POT, PCS, POS, and PUS are in \%, DS, DE, SE, SD, ED, and ES are in $1/10^{3}$, and Dur is in days.} \label{supp:tbl:simu_result_t_model} \\
\hline \hline
\multicolumn{1}{c}{\multirow{2}{*}{$T$ Model}} & \multicolumn{4}{c}{Allocation} & \multicolumn{3}{c}{Selection} & \multicolumn{6}{c}{Risk} & Speed \\
\cline{2-5} \cline{6-8} \cline{9-14} \cline{15-15}
& PCA & POA & PUA & POT & PCS & POS & PUS & DS & DE & SE & SD & ED & ES & Dur \\ \hline 
\endfirsthead

\multicolumn{15}{c}%
{ \tablename\ \thetable{} -- continued from previous page} \\
\hline 
\multicolumn{1}{c}{\multirow{2}{*}{$T$ Model}} & \multicolumn{4}{c}{Allocation} & \multicolumn{3}{c}{Selection} & \multicolumn{6}{c}{Risk} & Speed \\
\cline{2-5} \cline{6-8} \cline{9-14} \cline{15-15}
& PCA & POA & PUA & POT & PCS & POS & PUS & DS & DE & SE & SD & ED & ES & Dur \\ \hline
\endhead

\hline \multicolumn{15}{r}{{Continued on next page}} \\
\endfoot

\hline \hline
\endlastfoot

\multicolumn{15}{c}{Setting 1, TITE-TPI} \\ 
U & 32.2 & 21.5 & 46.3 & 18.8 & 50.5 & 16.6 & 32.9 & 12.7 & 3.3 & 21.9 & 55.4 & 1.5 & 29.7 & 436 \\
PU3 & 32.2 & 21.5 & 46.3 & 18.7 & 51.3 & 16.4 & 32.3 & 12.7 & 3.6 & 20.8 & 55.5 & 1.6 & 29.7 & 435 \\
PU9 & 32.3 & 21.5 & 46.1 & 18.8 & 51.8 & 15.7 & 32.5 & 13.2 & 3.3 & 21.4 & 54.8 & 1.6 & 29.6 & 436 \\
DH & 32.2 & 21.3 & 46.5 & 18.7 & 51.2 & 16.0 & 32.7 & 13.1 & 3.2 & 21.2 & 55.6 & 2.1 & 32.0 & 435 \\
PCH3 & 32.3 & 21.3 & 46.4 & 18.6 & 51.5 & 16.3 & 32.2 & 12.6 & 3.2 & 20.9 & 56.8 & 2.1 & 31.5 & 433 \\
\hdashline
\multicolumn{15}{c}{Setting 1, POD-TPI} \\ 
U & 32.5 & 24.5 & 43.0 & 19.8 & 51.4 & 18.4 & 30.2 & 20.5 & 4.1 & 23.0 & 23.8 & 0.1 & 10.3 & 437 \\
PU3 & 32.4 & 24.3 & 43.3 & 19.8 & 51.1 & 18.2 & 30.7 & 21.5 & 3.7 & 23.3 & 23.6 & 0.1 & 10.8 & 437 \\
PU9 & 32.5 & 24.2 & 43.3 & 19.8 & 51.2 & 18.3 & 30.5 & 21.3 & 3.9 & 23.0 & 24.5 & 0.1 & 10.9 & 437 \\
DH & 32.3 & 24.4 & 43.3 & 19.7 & 51.2 & 18.6 & 30.2 & 21.8 & 4.1 & 22.6 & 26.5 & 0.2 & 13.9 & 437 \\
PCH3 & 32.9 & 24.0 & 43.1 & 19.7 & 51.8 & 17.9 & 30.2 & 20.5 & 4.1 & 22.5 & 26.3 & 0.2 & 13.6 & 436 \\
\hline
\multicolumn{15}{c}{Setting 2, TITE-TPI} \\ 
U & 32.0 & 24.5 & 43.5 & 19.8 & 51.8 & 18.2 & 30.0 & 25.1 & 8.4 & 36.0 & 45.8 & 1.5 & 25.6 & 445 \\
PU3 & 31.5 & 23.9 & 44.6 & 19.7 & 51.1 & 17.5 & 31.4 & 22.0 & 7.8 & 33.0 & 49.5 & 2.4 & 28.1 & 445 \\
PU9 & 32.0 & 24.1 & 44.0 & 19.7 & 51.7 & 17.9 & 30.4 & 22.6 & 7.8 & 34.5 & 48.1 & 1.9 & 26.5 & 445 \\
DH & 31.4 & 22.6 & 46.0 & 19.1 & 51.1 & 16.7 & 32.2 & 17.4 & 6.3 & 29.5 & 56.1 & 4.4 & 35.2 & 444 \\
PCH3 & 31.5 & 22.8 & 45.8 & 19.1 & 51.0 & 16.8 & 32.1 & 18.8 & 6.8 & 30.5 & 55.1 & 3.7 & 33.8 & 443 \\
\hdashline
\multicolumn{15}{c}{Setting 2, POD-TPI} \\ 
U & 31.8 & 27.2 & 41.0 & 20.9 & 52.0 & 19.5 & 28.5 & 38.4 & 10.3 & 38.3 & 20.0 & 0.0 & 9.1 & 449 \\
PU3 & 31.9 & 25.8 & 42.2 & 20.4 & 51.8 & 18.8 & 29.4 & 31.1 & 8.6 & 33.6 & 27.5 & 0.2 & 15.0 & 446 \\
PU9 & 32.0 & 26.3 & 41.8 & 20.5 & 51.7 & 19.5 & 28.8 & 33.4 & 8.8 & 35.6 & 25.5 & 0.1 & 12.9 & 448 \\
DH & 31.6 & 24.5 & 43.9 & 19.9 & 51.6 & 18.0 & 30.4 & 24.3 & 7.1 & 29.4 & 39.5 & 0.6 & 26.7 & 446 \\
PCH3 & 32.0 & 24.9 & 43.1 & 20.0 & 51.4 & 18.7 & 29.9 & 26.5 & 7.4 & 30.0 & 35.4 & 0.6 & 23.2 & 446 \\
\hline
\multicolumn{15}{c}{Setting 3, TITE-TPI} \\ 
U & 31.4 & 21.1 & 47.5 & 18.6 & 50.5 & 15.8 & 33.7 & 15.8 & 3.6 & 25.5 & 64.1 & 4.0 & 44.6 & 290 \\
PU3 & 31.6 & 21.3 & 47.1 & 18.6 & 51.0 & 16.2 & 32.8 & 16.0 & 3.7 & 25.7 & 63.1 & 3.9 & 44.0 & 291 \\
PU9 & 31.3 & 21.2 & 47.5 & 18.6 & 50.9 & 15.8 & 33.3 & 15.8 & 3.7 & 25.4 & 63.9 & 4.1 & 44.0 & 291 \\
DH & 31.5 & 20.7 & 47.8 & 18.4 & 51.0 & 15.9 & 33.1 & 15.4 & 3.6 & 24.4 & 64.8 & 5.5 & 47.1 & 290 \\
PCH3 & 31.5 & 21.0 & 47.6 & 18.5 & 51.9 & 15.5 & 32.6 & 15.2 & 3.5 & 24.4 & 65.7 & 5.2 & 45.3 & 290 \\
\hdashline
\multicolumn{15}{c}{Setting 3, POD-TPI} \\ 
U & 31.7 & 23.3 & 45.0 & 19.4 & 51.5 & 17.4 & 31.0 & 18.8 & 4.0 & 25.8 & 42.8 & 0.1 & 26.4 & 294 \\
PU3 & 31.8 & 23.1 & 45.0 & 19.4 & 51.4 & 17.4 & 31.1 & 20.2 & 3.8 & 25.7 & 42.1 & 0.2 & 26.5 & 293 \\
PU9 & 31.6 & 23.7 & 44.7 & 19.5 & 51.6 & 17.7 & 30.8 & 20.2 & 3.9 & 26.2 & 42.6 & 0.1 & 27.5 & 294 \\
DH & 31.9 & 23.4 & 44.7 & 19.4 & 51.6 & 17.6 & 30.9 & 22.5 & 4.5 & 25.7 & 39.9 & 0.8 & 30.0 & 293 \\
PCH3 & 31.8 & 22.8 & 45.4 & 19.3 & 50.9 & 16.9 & 32.2 & 20.1 & 4.1 & 24.4 & 42.3 & 0.8 & 31.3 & 293 \\
\end{longtable}
\end{center}

\end{document}